\newcommand{\av}{\mathbf{a}}
\newcommand{\bv}{\mathbf{b}}
\newcommand{\B}{\mathcal{B}}
\newcommand{\bq}[2]{\beta_{#1}^{(#2)}}
\newcommand{\C}{\mathcal{C}}
\renewcommand{\c}{\mathtt{C}}
\newcommand{\gl}{\mathrm{GL}}
\renewcommand{\H}{\mathcal{H}}
\newcommand{\I}{\mathcal{I}}
\newcommand{\J}{\mathcal{J}}
\newcommand{\K}{\mathcal{K}}
\newcommand{\kk}{\mathbbm{k}}
\newcommand{\lex}{\mathrm{lex}}
\newcommand{\ls}{\mathscr{L}}
\newcommand{\lspan}[1]{\langle{#1}\rangle}
\newcommand{\M}{\mathcal{M}}
\newcommand{\mf}{\mathfrak{m}}
\newcommand{\mult}[1]{\xv_{P}(#1)}
\newcommand{\nmult}[1]{\overline{\xv}_{P}(#1)}
\newcommand{\NN}{\mathbbm{N}}
\renewcommand{\P}{\mathcal{P}}
\newcommand{\PKn}{\mathbbm{P}_{K}^{n}}
\newcommand{\rl}{\mathrm{revlex}}
\newcommand{\TT}{\mathbbm{T}}
\newcommand{\U}{\mathcal{U}}
\newcommand{\V}{\mathcal{V}}
\newcommand{\xv}{\mathbf{x}}
\newcommand{\yv}{\mathbf{y}}
\DeclareMathOperator{\ch}{char}
\DeclareMathOperator{\depth}{depth}
\DeclareMathOperator{\gin}{gin}
\DeclareMathOperator{\lc}{lc}
\DeclareMathOperator{\lt}{lt}
\DeclareMathOperator{\mt}{m}
\DeclareMathOperator{\reg}{reg}
\DeclareMathOperator{\supp}{supp}
\newtheorem{theorem}{Theorem}[section]
\newtheorem{lemma}[theorem]{Lemma}
\newtheorem{proposition}[theorem]{Proposition}
\newtheorem{corollary}[theorem]{Corollary}
\newdefinition{definition}[theorem]{Definition}
\newdefinition{remark}[theorem]{Remark}
\newdefinition{example}[theorem]{Example}
\newproof{proof}{Proof}
\begin{document}

\title{Deterministic Genericity for Polynomial Ideals}
\author[is,th]{A. Hashemi}
\ead{Amir.Hashemi@cc.iut.ac.ir}
\author[ks]{M. Schweinfurter}
\author[ks]{W.M. Seiler}
\ead{seiler@mathematik.uni-kassel.de}
\address[is]{Department of Mathematical Sciences, Isfahan University of
  Technology, Isfahan, 84156-83111, Iran} 
\address[th]{School of Mathematics,
  Institute for Research in Fundamental Sciences (IPM), Tehran, P.O.Box:
  19395-5746, Iran}
\address[ks]{Institut f\"ur Mathematik, Universit\"at Kassel, 34132 Kassel,
  Germany}
\date{\today}

\begin{abstract}
  We consider several notions of genericity appearing in algebraic geometry
  and commutative algebra.  Special emphasis is put on various stability
  notions which are defined in a combinatorial manner and for which a
  number of equivalent algebraic characterisations are provided.  It is
  shown that in characteristic zero the corresponding generic positions can
  be obtained with a simple deterministic algorithm.  In positive
  characteristic, only adapted stable positions are reachable except for
  quasi-stability which is obtainable in any characteristic.
\end{abstract}
\maketitle

\section{Introduction}
\label{sec:introduction}

Genericity appears in many places in algebraic geometry and commutative
algebra, as many results considerably simplify, if one assumes that the
considered ideal is in a sufficiently generic position.  While genericity
is well studied theoretically, its algorithmic side has been treated much
less.  There are two natural questions related to a generic position.  To
apply the corresponding theoretical results in a concrete computation, one
must firstly be able to verify effectively whether a given ideal is in the
considered generic position.  If this is not the case, one would secondly
like to find a (preferably sparse) linear transformation into generic
position.

From a theoretical point of view, the second goal is easily achieved by
applying a random transformation.  In practise, this will destroy all
sparsity typically present in problems of interest.  Therefore we will
study here deterministic algorithms that give us a reasonable chance to
render a position generic with a fairly sparse transformation.  We make no
claims of getting an optimal solution for this problem.  In one of the very
few articles dealing with such questions, \cite{es:sparse} argue that
different notions of optimality exist.  Furthermore, they showed that
various related problems are NP-complete.

Our main emphasis is on generic positions related to Gr\"obner bases where
the leading ideal exhibits certain favourable properties.  The most famous
generic position here is of course the one where the leading ideal is the
generic initial ideal.  However, both above mentioned problems are very
hard for this position and all computer algebra implementations we know of
use random transformations without a check that they have really obtained
the generic initial ideal.

One of the main points of this work lies in studying generic positions --
mainly of a combinatorial nature and related to stability -- that share as
many properties with the generic initial ideal as possible but which are
still effectively verifiable and constructable.  We will present an
algorithm to achieve deterministically any stable position via a sequence
of elementary moves.  The basic idea is due to \cite{wms:geocompl} in the
context of differential equations (the original proof contained a gap and
was later corrected by \cite{wms:comb2}).  There only the case of
quasi-stability was considered and the required moves were selected via a
comparison of Pommaret and Janet multiplicative variables.  Here we present
now a version where the selection criterion is directly based on the
combinatorial characterisation of the various stability notions and which
is therefore no longer restricted to quasi-stability.  While the algorithm
itself is very simple, the termination proof is rather long and technical.

In practise, we face here a conflict: the closer we get to the generic
initial ideal, the harder it becomes to obtain deterministically the
corresponding generic position (meaning the more coordinate transformations
are typically needed).  This observation explains why we consider so many
different kinds of stable positions.  They allow us to make a trade-off: we
go for a generic position that is just enough for the intended application.
One possible application is given by the computation of many fundamental
invariants like the depth, the Castelnuovo-Mumford regularity or the
reduction number which becomes significantly simpler, if the ideal is in a
sufficiently generic position.  For lack of space, we cannot discuss here
details, but refer e.\,g.\ to
\citep{ah:snp,ah:cmr,wms:quasigen,wms:rednum,wms:comb2,wms:noether} and
references therein.  Coordinate transformations to various stable positions
also play a crucial role in recent approaches to determine explicit
equations for Hilbert and Quot schemes -- see \cite{ma:diss} and references
therein.

The Castelnuovo-Mumford regularity nicely exemplifies these considerations.
\cite{bs:mreg} proved that generically the degree of the Gr\"obner basis
for the degree reverse lexicographic order is the regularity.  But no
effective criterion is known to verify whether or not a given ideal is in
generic position and random transformations are the only way to obtain such
a position.  \cite[Ex.~9.9]{wms:comb2} gives a concrete example where the
degree of the Gr\"obner basis is first smaller than the regularity and then
becomes larger after a certain linear coordinate transformation.  Thus the
result of Bayer and Stillman does not even provide a bound.  If the ideal
is in quasi-stable position, then it possesses a finite Pommaret basis the
degree of which is the regularity.  Now the existence of the finite
Pommaret basis provides an effective proof of the genericity of the used
coordinates and the deterministic algorithm provided in this work
effectively constructs such coordinates.

This article is structured as follows.  The next section recalls briefly
some notions and tools; this concerns in particular Pommaret bases and
Gr\"obner systems.  Section~\ref{sec:combgener} discusses the classical
combinatorial concept of stability.  We will introduce a total of nine
different variants of it and provide for all of them equivalent algebraic
characterisations.  Furthermore, we discuss the role of the characteristic
of the base field and componentwise stability.  In Section \ref{sec:other},
we study four other generic positions.  We first show that the classical
Noether position coincides with one of our variants of stability.  To the
best of our knowledge, this represents the first combinatorial
characterisation of Noether position.  Then we very briefly recall some
facts about Borel-fixed ideals and their relation to stability.  We also
provide a deterministic method to compute the generic initial ideal via
Gr\"obner systems.  While we are sure that many people are aware of this
method, we could not find it anywhere in the literature.  Finally, we
introduce the new concept of $\beta$-maximal position and show that it
corresponds to the genericity notion underlying the generic annihilator
numbers.

Section \ref{sec:examples} provides a large number of examples
demonstrating that the various genericity notions are indeed all distinct.
Section \ref{sec:algo} contains the main result of this article from a
computational point of view: a deterministic algorithm to achieve any
variant of stability over fields with characteristic zero.  In positive
characteristic only the positions related to quasi-stability are
effectively reachable (for sufficiently large fields).  For all other
stability notions only adapted ``$p$-versions'' can be used which, however,
lack the algebraic properties of the standard versions.  Finally, we
present the results of some preliminary experiments with the proposed
algorithm.

\section{Preliminaries}
\label{sec:prelim}

We begin by fixing our basic notations and assumptions.  $\P=\kk[\xv]$ with
$\xv=\{x_{1},\dots,x_{n}\}$ will always be the underlying polynomial ring
over a base field $\kk$.  Some of our results will require $\kk$ to be
infinite (or at least sufficiently large), some will depend on whether or
not the characteristic of $\kk$ is positive.  The set of all terms in $\P$
is called $\TT$.  For a non-constant term $\xv^{\mu}\in\TT$, we denote by
$\mt{(\xv^{\mu})}$ the maximal index $k$ such that $\mu_{k}\neq0$.  If
$x^{\mu}=1$, then we set $\mt{(x^{\mu})}=1$.  For simplicity, we consider
exclusively homogeneous ideals $\I\lhd\P$ and thus always assume that all
considered polynomials are homogenous, too.  The homogeneous maximal ideal
in $\P$ is denoted by $\mf=\langle x_{1},\dots,x_{n}\rangle$ and the
saturation of an ideal $\I\lhd\P$ by $\I^{\mathrm{sat}}=\I:\mf^{\infty}$.
Given a finite set $F$ of polynomials, we briefly write $\deg{F}$ for the
maximal degree of an element of $F$.

A term order $\prec$, i.\,e.\ a total order on $\TT$ which is
multiplicative and a well-order, defines for any polynomial $0\neq f\in\P$
its leading term $\lt{f}$ as the maximal term in the support of $f$ with
respect to $\prec$ and we call for any ideal $\I\unlhd\P$ the monomial
ideal $\lt{\I}=\lspan{\lt{f}\mid f\in\I}$ its leading ideal.  If not
explicitly stated otherwise, we will use throughout the degree reverse
lexicographic order (with $x_{1}\succ x_{2}\succ\cdots\succ x_{n}$) for
choosing leading terms, as it has a special relation to the stability
notions studied here.  The use of this order is crucial for the correctness
of our algorithm.

A finite polynomial set $G\subset\I$ is a \emph{Gr\"obner basis} of the
ideal $\I\lhd\P$, if $\lspan{\lt{G}}=\lt{\I}$.  Given a term
$x^{\mu}\in\TT$ with $\mt{(\xv^{\mu})}=k$, we call the variables
$x_{k},\dots,x_{n}$ \emph{(Pommaret) multiplicative} for it and denote them
by $\mult{x^{\mu}}$.  The \emph{non-multiplicative variables} form simply
the complement: $\nmult{x^{\mu}}=\xv\setminus\mult{x^{\mu}}$.  A finite set
$\H\subset\TT$ of terms is a \emph{Pommaret basis}\footnote{See
  \citep{wms:comb1} or \citep{wms:invol} for a general introduction to
  involutive bases, a special kind of Gr\"obner bases with additional
  combinatorial properties to which Pommaret bases belong.  The second
  reference also contains some historical remarks.} of the monomial ideal
$\I=\lspan{\H}$ they generate, if $\I$ can be written as the direct sum
$\I=\bigoplus_{h\in\H}\kk[\mult{h}]\cdot h$.  A finite set $\H\subset\P$ of
polynomials is a Pommaret basis of $\I=\lspan{\H}$, if all its elements
possess pairwise distinct leading terms and $\lt{\H}$ is a Pommaret basis
of $\lt{\I}$.  Obviously, any Pommaret basis is a (generally not reduced)
Gr\"obner basis but not vice versa.

There is a natural action of $\gl(n,\kk)$ on the polynomial ring $\P$ via
linear coordinate transformations:
$x_{i}\mapsto \sum_{j=1}^{n}a_{ij}x_{j}=(A\cdot\xv)_{i}$ for a non-singular
matrix $A=(a_{ij})\in\kk^{n\times n}$.  If we consider the effect of such
coordinate transformations, we always assume that term orders are defined
via exponent vectors and that we use the same term order before and after
the transformation.

For analysing the effect of this $\gl(n,\kk)$-action on a given ideal $\I$,
it is useful to recall the notion of a Gr\"obner system introduced by
\cite{vw:compgb} as part of his theory of comprehensive Gr\"obner bases.
Let $\tilde\P=\P[\av]=\kk[\av][\xv]$ be a \emph{parametric} polynomial ring
with parameters $\av=\{a_1,\ldots,a_m\}$.  Given term orders $\prec_{\av}$
and $\prec_{\xv}$ for terms in the respective variables, we denote by
$\prec_{\xv,\av}$ the corresponding block elimination order with precedence
to the variables $\xv$.

\begin{definition}\label{def:grobsys}
  Let $\tilde\I\unlhd\tilde\P$ be a parametric ideal.  A \emph{Gr\"obner
    system} for $\tilde\I$ for the term order $\prec_{\xv,\av}$ is a finite
  set of triples $\bigl\{(\tilde{G}_i,N_i,W_i)\bigr\}_{i=1}^{\ell}$ with
  finite sets $\tilde{G}_{i}\subset\tilde\P$ and
  $N_{i},W_{i}\subset \kk[\av]$ such that for every index $1\leq i\leq\ell$
  and every specialization homomorphism $\sigma:\kk[\av]\rightarrow\kk$
  with $\sigma(g)=0$ for every $g\in N_i$ and $\sigma(h)\neq 0$ for every
  $h\in W_i$ the set $\sigma(\tilde{G}_i)$ is a Gr\"obner basis of
  $\sigma(\tilde\I)\unlhd\P$ with respect to the order $\prec_{\xv}$ and
  such that for any point $\bv\in\kk^{m}$ an index $1\leq i\leq\ell$ exists
  with $\bv\in\V(N_{i})\setminus\V(\prod_{f\in W_{i}}f)$.
\end{definition}

\citep[Theorem 2.7]{vw:compgb} proved that such a Gr\"obner system exists
for every parametric ideal $\tilde\I\unlhd\P$ and can be effectively
computed.  By now, there exists a number of algorithms and implementations
for this task \cite[e.\,g.\ ][]{ksw:cgs,am:ksw,mw:gbpsp}.  While it is not
part of the definition, every published algorithm for computing Gr\"obner
systems produces systems with an additional property: if two
specialisations $\sigma, \tau$ belong to the same triple
$(\tilde{G}_i,N_i,W_i)$, then they yield the same leading terms
$\lt{\sigma(\tilde{G}_{i})}=\lt{\tau(\tilde{G}_{i})}$.  In the sequel, we
will always assume that we are dealing with Gr\"obner systems possessing
this property.  We also note that it is always possible to prescribe
already at the beginning of the computation some equations or inequations
that the parameters must satisfy.

\begin{remark}\label{rem:finlt}
  Interpreting the entries of a matrix $A=(a_{ij})\in\gl(n,\kk)$ as
  parameters, we can compute a Gr\"obner system
  $\bigl\{(\tilde{G}_i,N_i,W_i)\bigr\}_{i=1}^\ell$ of the parametric ideal
  $\tilde{\I}=A\cdot\I \unlhd \kk[a_{ij}][x_1,\ldots,x_n]$ imposing at the
  start the condition that $\det{(A)}\neq0$.  As such a system is finite by
  definition, we conclude that under linear coordinate transformations any
  ideal $\I\unlhd\P$ possesses only finitely many different leading ideals
  (for a fixed term order).
\end{remark}

\section{Generic Positions Related To Stability} 
\label{sec:combgener}

Stability is a classical combinatorial concept playing an important role in
the theory of monomial ideals and depending on the numbering of the
variables.  There are three basic notions---quasi-stability, stability and
strong stability---forming a natural hierarchy.  For each of them, we
introduce two new weaker versions leading to a total of nine different
stability notions following ideas developed in \citep{ah:cmr,wms:rednum}.
The extension to arbitrary ideals is straightforward via a term order.

\begin{definition}\label{def:stable}
  Let $\J\lhd\P$ be a monomial ideal, $q$ the maximal degree of a
  minimal generator of $\J$ and $0\leq\ell<n$ an integer.
  \begin{description}
  \item[(i)] The ideal $\J$ is \emph{quasi-stable}, if for every term
    $\xv^{\mu}\in\J$ and every index $j<m=\mt{(\xv^{\mu})}$ the term
    $x_{j}^{q}\xv^{\mu}/x_{m}^{\mu_{m}}$ also lies in $\J$.  The ideal is
    \emph{$\ell$-quasi-stable}, if the above condition is satisfied for all
    terms $\xv^{\mu}\in\J$ with $\mt{(\xv^{\mu})}\geq n-\ell$, and
    \emph{weakly $\ell$-quasi-stable}, if the condition is satisfied with
    the additional restriction that $j\leq n-\ell$.
  \item[(ii)] The ideal $\J$ is \emph{stable}, if for every term
    $\xv^{\mu}\in\J$ and every index $j<m=\mt{(\xv^{\mu})}$ the term
    $x_{j}\xv^{\mu}/x_{m}$ also lies in $\J$.  The ideal is
    \emph{$\ell$-stable}, if the above condition is satisfied for all terms
    $\xv^{\mu}\in\J$ with $\mt{(\xv^{\mu})}\geq n-\ell$, and \emph{weakly
      $\ell$-stable}, if the condition is satisfied with the additional
    restriction that $j\leq n-\ell$.
  \item[(iii)] The ideal $\J$ is \emph{strongly stable}, if for every term
    $\xv^{\mu}\in\J$, and every index pair $i>j$ such that
    $x_{i}\mid \xv^{\mu}$ the term $x_{j}\xv^{\mu}/x_{i}$ also lies in
    $\J$.  The ideal is \emph{$\ell$-strongly stable}, if the above
    condition is satisfied for all terms $\xv^{\mu}\in\J$ with
    $\mt{(\xv^{\mu})}\geq n-\ell$ and all indices $i\geq n-\ell$, and
    \emph{weakly $\ell$-strongly stable}, if the condition is satisfied
    with the additional restriction that $j\leq n-\ell$.
  \end{description}
  If $\I\lhd\P$ is an arbitrary polynomial ideal, then we say that $\I$ is
  in a \emph{stable position} for some term order $\prec$ if its leading
  ideal $\lt{\I}$ is stable.  The same terminology is used for any above
  introduced variant of stability.
\end{definition}

It is well-known that the three classical notions of stability are generic
(see e.\,g.\ \citep[Prop.~4.3.8, Cor.~4.3.16]{wms:invol} for the case of
quasi-stability).  Trivial adaptions of the proofs show that all above
considered variants are generic, too.  It should be noted that in the
literature sometimes strongly stable ideals are simply called stable.
Quasi-stable ideals are also called \emph{ideals of nested type} by
\cite{bg:scmr}, \emph{ideals of Borel type} by \cite{hpv:ext} or
\emph{weakly stable ideals} by \cite{cs:reg}.  In the above definition, we
require that all terms in the monomial ideal $\J$ satisfy certain
conditions.  It is straightforward to show that it suffices to verify that
all minimal generators of $\J$ satisfy these conditions.  Furthermore, we
note the obvious hierarchy
\begin{displaymath}
  \text{strongly stable}\quad\Longrightarrow\quad
  \text{stable}\quad\Longrightarrow\quad\text{quasi-stable}\,.
\end{displaymath}

%
%
\subsection{Quasi-Stability}
\label{sec:quastab}

\begin{proposition}\label{prop:quasistab_equiv}
  Let $\J\lhd\P$ be a monomial ideal with $\dim{(\P/\J)}=D$. Then the
  following statements are equivalent:
  \begin{description}
  \item[(i)] $\J$ is quasi-stable.
  \item[(ii)] If $\xv^\mu$ is a term in $\J$ with $\mu_j>0$ for some
    $1< j \leq n$, then for each exponent $0<r\leq \mu_j$ and each index
    $1 \leq i < j$ an exponent $s\geq 0$ exists such that
    $x_i^s\xv^\mu/x_j^r$ lies in $\J$.
  \item[(iii)] For all $0\leq j \leq n-1$ we have
    \begin{equation}\label{eq:qs}
      \J:x_{n-j}^\infty=\J:\langle x_1,\ldots,x_{n-j}\rangle^\infty\,.
    \end{equation}
  \item[(iv)] $x_n$ is not a zero divisor on $\P/\J^{\mathrm{sat}}$ and
    $x_{n-j}$ is not a zero divisor on
    $\P/\langle\J,x_n,\ldots,x_{n-j+1}\rangle^{\mathrm{sat}}$ for all
    $0< j < D$.
  \item[(v)] $\J:x_n^\infty=\J^{\mathrm{sat}}$ and for all $0<j<D$ we have 
    \begin{equation}
      \langle\J,x_{n},\dots,x_{n-j+1}\rangle:x_{n-j}^\infty=
      \langle\J,x_{n},\dots,x_{n-j+1}\rangle^{\mathrm{sat}}\,.
    \end{equation}
  \item[(vi)] We have an ascending chain
    $\J:x_{n}^{\infty}\subseteq\J:x_{n-1}^{\infty}\subseteq\cdots
    \subseteq\J:x_{n-D+1}^{\infty}$ and for each $1\leq j\leq n-D$ there
    exists a term $x_{j}^{\ell_{j}}\in\J$.
  \item[(vii)] $\J$ has a finite monomial Pommaret basis.
  \item[(viii)] Let $\B=\{t_{1},\dots,t_{r}\}$ be the minimal basis of $\J$
    sorted degree reverse lexicographically with $t_{1}$ the largest
    generator.  For each index $1\leq i\leq r$ set
    $\J_{i}=\lspan{t_{1},\dots,t_{i-1}}:t_{i}$ and
    $\P_{i}=\kk[x_{1},\dots,x_{\mt{(t_{i})}-1}]$.  Then all the ideals
    $\hat{\J}_{i}=\J_{i}\cap\P_{i}\unlhd\P_{i}$ are zero-dimensional.
  \item[(ix)] Every associated prime ideal of $\P/\J$ is of the form
    $\langle x_{1},x_{2},\dots,x_{j}\rangle$ for some index
    $1\leq j\leq n-D$.
  \end{description}
\end{proposition}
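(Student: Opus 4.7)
My plan is to prove the nine conditions equivalent via a cycle rather than pairwise; specifically I would go $(i)\Rightarrow(ii)\Rightarrow(iii)\Rightarrow(v)\Rightarrow(vi)\Rightarrow(vii)\Rightarrow(i)$, and then attach $(iv)$, $(viii)$, $(ix)$ as short offshoots. The implication $(ii)\Rightarrow(i)$ is immediate by specialising to $j=\mt{(\xv^\mu)}$, $r=\mu_{m}$, and $s=q$; for $(i)\Rightarrow(ii)$ I would do a double downward induction, first on $r$ (stripping off one $x_{m}$ at a time) and then on $m-j$, each step being a single quasi-stability move of $(i)$ applied to an intermediate term.

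The saturation block is essentially formal. For $(ii)\Rightarrow(iii)$, given $t\in\J:x_{n-j}^\infty$ write $x_{n-j}^{r}t\in\J$, apply $(ii)$ to $\xv^\mu=x_{n-j}^{r}t$ at each index $i\leq n-j$ to produce a power $x_{i}^{s}t\in\J$, and conclude $t\in\J:\langle x_{1},\dots,x_{n-j}\rangle^\infty$; the reverse inclusion is trivial. Specialising $(iii)$ at $j=0$ gives $\J:x_{n}^\infty=\J^{\mathrm{sat}}$, and iterating over the quotients $\P/\langle\J,x_{n},\dots,x_{n-j+1}\rangle^{\mathrm{sat}}$ yields $(v)$, while $(v)\Leftrightarrow(iv)$ is the textbook translation between saturation by a variable and that variable being a nonzerodivisor. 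For $(v)\Rightarrow(vi)$ the ascending chain follows by induction on $j$ from the saturation identification, and the pure powers $x_{j}^{\ell_{j}}\in\J$ for $j\leq n-D$ come from $\dim(\P/\J)=D$: the variables outside the support of any maximal-dimensional associated component must be nilpotent modulo the successive saturated quotients.

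The main obstacle is the equivalence $(vi)\Leftrightarrow(vii)$, i.\,e.\ the Pommaret basis characterisation. For $(vii)\Rightarrow(i)$ I would exploit the involutive decomposition: every term of $\J$ lies in a unique cone $\kk[\mult{h}]\cdot h$, and for $j<\mt{(\xv^\mu)}$ the prolongation $x_{j}^{q}\xv^{\mu}/x_{m}^{\mu_{m}}$ decomposes again through some Pommaret generator, forcing its membership in $\J$ once $q\geq\deg\H$. For $(vi)\Rightarrow(vii)$ I would invoke the involutive completion algorithm of \citep{wms:invol}: the pure powers $x_{j}^{\ell_{j}}$ bound the exponents of any variable appearing in a Pommaret generator, and the ascending colon chain forces every non-multiplicative prolongation to reduce involutively, so the completion terminates after finitely many steps. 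Once $(i)$--$(vii)$ are established, $(ix)$ is read off the saturation chain of $(v)$ via the standard colon-ideal characterisation of associated primes, and $(viii)$ follows by interpreting the zero-dimensionality of $\hat{\J}_{i}$ as the statement that for each $i$ and each $k<\mt{(t_{i})}$ some power $x_{k}^{s}$ lies in $\lspan{t_{1},\dots,t_{i-1}}:t_{i}$, which is exactly the combinatorial quasi-stability condition read on the sorted minimal basis.
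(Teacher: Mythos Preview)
Your cycle through $(i)$--$(vii)$ and the offshoots to $(iv)$ and $(ix)$ is a reasonable plan, and for these classical equivalences the paper simply cites the literature; it supplies only the new links $(iii)\Rightarrow(v)$, $(v)\Rightarrow(iv)$, and $(vii)\Leftrightarrow(viii)$.  One caution on your $(v)\Rightarrow(vi)$ step: the pure powers $x_{j}^{\ell_{j}}\in\J$ for $j\leq n-D$ do not follow from $\dim(\P/\J)=D$ alone (consider $\J=\langle x_{1}x_{2}\rangle\lhd\kk[x_{1},x_{2}]$), and ``nilpotent modulo the successive saturated quotients'' at best gives membership in $\langle\J,x_{n},\dots,x_{n-D+1}\rangle$, not in $\J$ itself; you still need the observation that a monomial $x_{j}^{\ell_{j}}$ with $j\leq n-D$ lying in that larger monomial ideal must already lie in $\J$, since it involves none of the added variables.

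The genuine gap is your treatment of $(viii)$.  Zero-dimensionality of $\hat{\J}_{i}$ indeed says that for each $k<\mt{(t_{i})}$ some multiple $x_{k}^{s}t_{i}$ lies in $\langle t_{1},\dots,t_{i-1}\rangle$, but this is \emph{not} ``exactly'' the quasi-stability condition on the minimal basis: condition $(i)$ asks that a quotient of $t_{i}$ by its top variable lie in all of $\J=\langle t_{1},\dots,t_{r}\rangle$, whereas $(viii)$ asks that a multiple of $t_{i}$ (with the top variable intact) lie in the strictly smaller ideal generated by the degrevlex-larger generators only.  Bridging the two is precisely where the degrevlex sorting must do nontrivial work, and it is not a one-line identification---a divisor of $x_{k}^{s}t_{i}$ coming from $\J$ can have lower degree than $t_{i}$ and hence index greater than $i$.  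The paper avoids this entirely by proving $(viii)\Leftrightarrow(vii)$ instead: writing $\C_{i}\subset\P_{i}$ for the (finite, under $(viii)$) set of terms outside $\hat{\J}_{i}$, it verifies directly that $\H=\B\cup\bigcup_{i}\{st_{i}\mid s\in\C_{i}\}$ is involutively closed for the Pommaret division and hence the Pommaret basis of $\J$.  This explicit construction is the actual content of the new characterisation $(viii)$ and is what your sketch is missing.
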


\begin{proof}
  Most equivalences are well known and their proofs can e.\,g.\ be found in
  \citep[Prop.~3.2]{bg:scmr}, \citep[Prop.~2.2]{hpv:ext},
  \citep[Prop.~4.4]{wms:comb2}, \citep[Lem.~3.4]{wms:noether}.  Only the
  characterisations \emph{(v)} and \emph{(viii)} are new with \emph{(viii)}
  inspired by ideas of \cite[Sect.~4.1]{gc:diss}.  We therefore prove now
  first that \emph{(iii)} entails \emph{(v)}, then that conversely
  \emph{(v)} entails \emph{(iv)} and finally that \emph{(vii)} and
  \emph{(viii)} are equivalent.

  Assume that the monomial $t\in\P$ satisfies
  $tx_{n-j}^s\in\langle\J,x_{n},\dots,x_{n-j+1}\rangle$ for some integer
  $s>0$ and an index $0\leq j\leq D$.  If $\mt{(t)}>n-j$, then we have
  \begin{displaymath}
    t\in\langle x_{n},\dots,x_{n-j+1}\rangle\subseteq 
    \langle \J,x_{n},\dots,x_{n-j+1} \rangle \subseteq 
    \langle\J,x_{n},\dots,x_{n-j+1}\rangle^{\mathrm{sat}}\,.
  \end{displaymath}
  Otherwise $tx_{n-j}^s\in\J$ and thus
  $t\in\J:x_{n-j}^\infty=\J:\langle x_1,\ldots,x_{n-j}\rangle^\infty$ by
  \emph{(iii)}.  Hence we also find
  $t\in \langle\J,x_{n},\dots,x_{n-j+1}\rangle^{\mathrm{sat}}$.  This
  proves that \emph{(v)} is a consequence of \emph{(iii)}.
	
  Now assume for the second step that $x_{n-j}$ defines a zero divisor in
  the ring $\P/\langle\J,x_n,\ldots,x_{n-j+1}\rangle^{\mathrm{sat}}$ for
  some index $1\leq j \leq D-1$. This means that a polynomial
  $f\notin \langle\J,x_n,\ldots,x_{n-j+1}\rangle^{\mathrm{sat}}$ must exist
  such that
  $x_{n-j}f\in\langle\J,x_n,\ldots,x_{n-j+1}\rangle^{\mathrm{sat}}$ which
  in turn entails the existence of an integer $s$ such that
  $x_{n-j}^{s+1}f\in\langle\J,x_n,\ldots,x_{n-j+1}\rangle$ and thus by
  \emph{(v)} that
  $f\in\langle\J,x_n,\ldots,x_{n-j+1}\rangle:x_{n-j}^\infty=
  \langle\J,x_n,\ldots,x_{n-j+1}\rangle^{\mathrm{sat}}$
  which contradicts the choice of $f$.  Hence \emph{(iv)} follows from
  \emph{(v)}.

  For the proof of the equivalence of \emph{(vii)} and \emph{(viii)}, we
  write $\C_{i}$ for the set of all terms in $\P_{i}$ which are not
  contained in $\J_{i}$ and $k_{i}$ for $\mt{(t_{i})}$.  Thus \emph{(viii)}
  is equivalent to the fact that all these sets are finite.  We will now
  prove that if this is the case, then the Pommaret basis of $\J$ is given
  by the finite set
  \begin{equation}\label{eq:pbqs}
    \H=\B\cup\bigcup_{i=1}^{r}\{st_{i}\mid s\in\C_{i}\}\,.
  \end{equation}
  Obviously, $\H$ generates $\I$ and thus we only have to prove that it is
  involutive for the Pommaret division.  Consider a term
  $r\in\bar{\C}_{i}=\{t_{i}\}\cup\{st_{i}\mid s\in\C_{i}\}$; obviously,
  $\mt{(r)}=k_{i}$. We choose an index $1\leq j<k_{i}$ which is thus
  non-multiplicative for $r$.  If $x_{j}r\in\bar{\C}_{i}$, then there is
  nothing to prove.  Otherwise write $r=st_{i}$ with $s=1$ or $s\in\C_{i}$.
  Then $x_{j}r\notin\bar{\C}_{i}$ is equivalent to $x_{j}s\notin\C_{i}$
  which in turn implies that $x_{j}r\in\lspan{t_{1},\dots,t_{i-1}}$.  Let
  $1\leq\ell<i$ be the smallest index such that $t_{\ell}\mid x_{j}st_{i}$
  and write $x_{j}r=r_{m}r_{nm}t_{\ell}$ with terms
  $r_{m}\in\kk[x_{k_{\ell}},\dots,x_{n}]$ and
  $r_{nm}\in\kk[x_{1},\dots,x_{k_{\ell}-1}]$.  Because of the minimality of
  the index $\ell$, we must have that $r_{nm}\in\C_{\ell}$.  Hence
  $r_{nm}t_{\ell}$ is an element of $\H$ and an involutive divisor of
  $x_{j}r$ so that we are done. \qed
\end{proof}

The following two results generalise some of the characterisations in
Proposition \ref{prop:quasistab_equiv} to the above introduced weaker forms
of quasi-stability and thus provide also for these algebraic
interpretations.

\begin{proposition}\label{prop:lqs:equiv}
  Let $\J \lhd \P$ be a monomial ideal and $\ell$ an integer. Then the
  following statements are equivalent.
  \begin{description}
  \item[(i)] $\J$ is $\ell$-quasi-stable.
  \item[(ii)] If $\xv^\mu\in\J$ satisfies $\mt{(\xv^\mu)}\geq n-\ell$ and
    $\mu_j>0$ for some $n-\ell \leq j \leq n$, then for each
    $0<r\leq \mu_j$ and $1 \leq i < j$ an integer $s\geq 0$ exists such
    that $x_i^s\xv^\mu/x_j^r\in\J$.
  \item[(iii)] For all $0\leq j \leq \ell$ we have
    \begin{equation}\label{eq:lqs}
      \J:x_{n-j}^\infty=\J:\langle x_1,\ldots,x_{n-j}\rangle^\infty\,.
    \end{equation}  			
  \end{description}
\end{proposition}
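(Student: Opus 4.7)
The plan is to establish the equivalences cyclically as (i) $\Rightarrow$ (ii) $\Rightarrow$ (iii) $\Rightarrow$ (i), paralleling the corresponding proof in Proposition \ref{prop:quasistab_equiv} while carefully tracking the range restriction $\mt{(\xv^\mu)} \geq n-\ell$ that distinguishes the $\ell$-versions.

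For (i) $\Rightarrow$ (ii), I would fix $\xv^\mu \in \J$ with $m = \mt{(\xv^\mu)} \geq n-\ell$ and an index $n-\ell \leq j \leq n$ with $\mu_j > 0$, together with $0 < r \leq \mu_j$ and $1 \leq i < j$. If $j = m$, a single application of (i) followed by multiplication by $x_m^{\mu_m-r}$ already yields $x_i^q \xv^\mu/x_j^r \in \J$. If $j < m$, I would iteratively apply (i) at the current maximal index while always choosing $j$ as the target variable; each step zeros out the current maximum and raises the power of $x_j$, so after at most $m-j$ steps the maximal index becomes exactly $j$. A final invocation of (i) at $j$, followed by multiplication by the missing variables to restore the remaining exponents, produces the desired element with $s = q$. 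The intermediate maximal indices form a strictly decreasing sequence bounded below by $j \geq n-\ell$, so every application of (i) is legitimate. The converse direction (ii) $\Rightarrow$ (i) is then simply the special case $j = m$, $r = \mu_m$.

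For (ii) $\Rightarrow$ (iii), the inclusion $\J:\langle x_1,\ldots,x_{n-j}\rangle^\infty \subseteq \J:x_{n-j}^\infty$ holds for every ideal. For the reverse, I would take a term $\xv^\mu$ with $\xv^\mu x_{n-j}^N \in \J$; this product has maximal index at least $n-j \geq n-\ell$, so (ii) applied at index $n-j$ supplies, for each $1 \leq i < n-j$, an integer $s_i$ with $x_i^{s_i}\xv^\mu \in \J$, and a routine pigeonhole estimate then gives $\langle x_1,\ldots,x_{n-j}\rangle^M \xv^\mu \subseteq \J$ for $M$ sufficiently large. For (iii) $\Rightarrow$ (i), given $\xv^\mu \in \J$ with $m = \mt{(\xv^\mu)} \geq n-\ell$, I would set $t = \xv^\mu/x_m^{\mu_m}$, observe that $t \in \J:x_m^\infty$, and invoke (iii) with $j = n-m \in \{0,\ldots,\ell\}$ to obtain $t \in \J:\langle x_1,\ldots,x_m\rangle^\infty$; the classical argument behind Proposition \ref{prop:quasistab_equiv} then allows us to take the uniform exponent $s = q$ prescribed by Definition \ref{def:stable}.

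The step I expect to be the main obstacle is the orchestration of the iteration in (i) $\Rightarrow$ (ii): one must verify both that the current maximal index strictly decreases at every step (so that the process terminates) and that it remains at or above $n-\ell$ throughout (so that the hypothesis stays applicable). Once this bookkeeping invariant is established, the remaining implications adapt the arguments for the corresponding items of Proposition \ref{prop:quasistab_equiv} with only cosmetic modifications.
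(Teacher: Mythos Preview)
Your proposal is correct and follows the same cyclic scheme (i)$\Rightarrow$(ii)$\Rightarrow$(iii)$\Rightarrow$(i) as the paper; the arguments for (ii)$\Rightarrow$(iii) and (iii)$\Rightarrow$(i) are essentially identical to the paper's. The one notable difference lies in (i)$\Rightarrow$(ii): the paper always pushes toward the final target $x_i$ at each step and tracks, after every application of $\ell$-quasi-stability, a minimal generator $\xv^{\nu^{(k)}}$ dividing the current term, carefully bounding its exponents; you instead first funnel all high-index variables into $x_j$ and only switch to $x_i$ in one final move, then multiply back the missing powers. Your variant is a bit cleaner---the invariant $\mt(T_k)\geq j\geq n-\ell$ is immediate because every intermediate term carries a positive power of $x_j$, and you never need to pass to minimal generators---while the paper's route keeps the bound $s=\deg B$ visible at every stage via the divisibility relations. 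Both arguments arrive at $s=q$.
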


\begin{proof}
  Assume first that $\J$ is $\ell$-quasi-stable and denote by $B$ its
  minimal basis.  Let $\xv^\mu\in\J$ be a term with $\mu_j>0$ for some
  $n-\ell\leq j\leq n$ and $r$ an integer with $0<r\leq\mu_j$.  Hence
  $k=\mt{(\xv^\mu)}\geq j$.  We want to prove \emph{(ii)} by showing that
  $x_i^{\deg{B}}\xv^\mu/x_j^r$ lies in $\J$ for all integers $i<j$.  By the
  definition of $\ell$-quasi-stability,
  $x_i^{\deg{B}}\xv^\mu/x_k^{\mu_k}\in\J$ for $i< k$.  Therefore there
  exists a term $\xv^{\nu^{(1)}}\in B$ with
  \begin{equation}\label{eq:lqs_div}
    \xv^{\nu^{(1)}}\mid x_i^{\deg{B}}\frac{\xv^\mu}{x_k^{\mu_k}}
  \end{equation}
  and
  $k_1=\mt{(\xv^{\nu^{(1)}})}\leq
  \mt{(x_i^{\deg}{B}\xv^\mu/x_k^{\mu_k})}<k$.  Obviously,
  $\nu^{(1)}_\alpha \leq \mu_\alpha$ for all $i\neq \alpha < k$ and
  $\nu^{(1)}_i \leq \mu_i + \deg{B}$.  Again it follows from the assumed
  $\ell$-quasi-stability that
  $x_i^{\deg{B}}\xv^{\nu^{(1)}}/x_{k_1}^{\nu^{(1)}_{k_1}}\in\J$ and thus
  there exists a term $\xv^{\nu^{(2)}}\in B$ with
  $\xv^{\nu^{(2)}}\mid
  x_i^{\deg}{B}\xv^{\nu^{(1)}}/x_{k_1}^{\nu^{(1)}_{k_1}}$ and
  $\mt{(\xv^{\nu^{(2)}})}=k_2<k_1$.  Furthermore by \eqref{eq:lqs_div},
  $\xv^{\nu^{(2)}}\mid
  x_i^{2\cdot\deg{B}}\xv^\mu/x_{k_1}^{\nu^{(1)}_{k_1}}x_k^{\mu_k}$
  and---since $\deg{(\xv^{\nu^{(2)}})}\leq \deg{B}$ and
  $\nu^{(1)}_{k_1}\leq \mu_{k_1}$---this entails
  \begin{equation}\label{eq:lqs_div2}
    \xv^{\nu^{(2)}}\mid x_i^{\deg{B}}
    \frac{\xv^\mu}{x_{k_1}^{\mu_{k_1}}x_k^{\mu_k}}\,.
  \end{equation}
  We go on like this until we end up with a term
  $\xv^{\nu^{(\omega)}}\in B$ such that
  $\xv^{\nu^{(\omega)}}\mid x_i^{\deg{B}} \xv^{\nu^{(\omega-1)}}/
  x_{k_{\omega-1}}^{\nu^{(\omega-1)}_{k_{\omega-1}}}$
  and $\mt{(\xv^{\nu^{(\omega)}})}=k_\omega<\cdots < k_1<k$ such that
  $k_{\omega-1}=j$.  Hence the following holds:
  \begin{itemize}
  \item $\nu^{(\omega)}_\alpha=0$ for all $\alpha\geq j>k_\omega$.
  \item
    $\nu^{(\omega)}_\alpha \leq \nu^{(\omega-1)}_\alpha \leq \cdots \leq
    \nu^{(1)}_\alpha \leq \mu_\alpha$ for all $i\neq \alpha <j$.
  \item
    $\nu^{(\omega)}_i \leq \nu^{(\omega-1)}_i \leq \cdots \leq \nu^{(1)}_i
    \leq \mu_i + \deg{B}$.
  \end{itemize}
  Analogously to \eqref{eq:lqs_div} and \eqref{eq:lqs_div2}, we have
  \begin{displaymath}
    \xv^{\nu^{(\omega)}}\mid x_i^{\deg{B}}
    \frac{\xv^\mu}{x_j^{\mu_{j}}x_{k_{\omega-2}}^{\mu_{k_{\omega-2}}}\cdots 
          x_{k_1}^{\mu_{k_1}}x_k^{\mu_k}}
  \end{displaymath}
  which entails that $\xv^{\nu^{(\omega)}}$ divides
  $x_i^{\deg{B}}\frac{\xv^\mu}{x_j^r}$ and we are done.
  
  Now assume \emph{(ii)} and let $t$ be a term such that $x_{n-j}^rt\in\J$
  for some exponent $r$ and index $0\leq j\leq\ell$.  Since
  $\mt{(x_{n-j}^rt)}\geq n-j\geq n-\ell$, it follows from \emph{(ii)} that
  for all $i< n-j \leq \mt{(x_{n-j}^rt)}$ there is an integer $s_i$ such
  that the term $x_{i}^{s_i} x_{n-j}^rt/x_{n-j}^r=x_{i}^{s_i}t$ lies in
  $\J$.  Hence we have the inclusion
  $t\langle x_{1},\dots,x_{n-j}\rangle^{(s_1+\cdots+s_{n-j-1}+r)(n-j)}
  \subseteq\J$
  entailing $t\in\J:\langle x_{1},\dots,x_{n-j}\rangle^\infty$ which shows
  \emph{(iii)}.

  Finally assume that the equality \eqref{eq:lqs} holds and consider a term
  $\xv^\mu\in\J$ such that $\mt{(\xv^\mu)}=n-j$ with $j\leq\ell$.  Because
  of \eqref{eq:lqs}, we have
  $\xv^\mu/x_{n-j}^{\mu_{n-j}}\in\J:x_{n-j}^\infty= \J:\langle
  x_1,\ldots,x_{n-j}\rangle^\infty$.
  Hence there is an integer $s$ such that
  $(\xv^\mu/x_{n-j}^{\mu_{n-j}})\langle
  x_1,\ldots,x_{n-j}\rangle^s\subseteq \J$.
  But this inclusion means that for every index $1\leq i< n-j$ a minimal
  generator $t_{i}$ of $\J$ exists which divides
  $x_{i}^s\xv^\mu/x_{n-j}^{\mu_{n-j}}\in\J$.  Because of
  $\deg_{x_{i}}{t_{i}}\leq\deg{B}$, it is clear that we may choose
  $s\leq\deg{B}$ which finally shows that $\J$ is $\ell$-quasi-stable.\qed
\end{proof}

\begin{corollary}\label{cor:dqs_qs}
  Let $\J\unlhd\P$ be a monomial and $\ell$-quasi-stable ideal.  If
  $\ell\geq D-1$ where $D=\dim{(\P/\J)}$, then $\J$ is even quasi-stable.
\end{corollary}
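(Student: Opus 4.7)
The plan is to verify characterization \emph{(vi)} of Proposition \ref{prop:quasistab_equiv} for $\J$, which demands (a) the ascending chain $\J:x_{n}^{\infty}\subseteq\cdots\subseteq\J:x_{n-D+1}^{\infty}$ and (b) a pure power $x_{k}^{e_{k}}\in\J$ for every $1\leq k\leq n-D$.

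For (a) I would just call on Proposition \ref{prop:lqs:equiv}\emph{(iii)}: because $\ell\geq D-1$, the equality $\J:x_{n-j}^{\infty}=\J:\langle x_{1},\ldots,x_{n-j}\rangle^{\infty}$ holds for $0\leq j\leq D-1$, and the right-hand side is trivially contained in $\J:x_{n-j-1}^{\infty}$, giving the chain at once.

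The substantive task is (b). My idea is to show the stronger structural statement that every associated prime $\mathfrak{p}=\langle x_{i}:i\in\sigma\rangle$ of $\P/\J$ is a \emph{prefix prime}, i.e.\ $\sigma=\{1,\ldots,|\sigma|\}$. Granted this, the minimal primes of $\J$ are prefix primes, and comparing heights with $\dim{(\P/\J)}=D$ forces a unique minimal prime $\langle x_{1},\ldots,x_{n-D}\rangle$; hence $x_{k}\in\sqrt{\J}$, and taking the least power gives $x_{k}^{e_{k}}\in\J$ for each $k\leq n-D$.

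To establish the prefix-prime property I would argue by contradiction. Every associated prime contains a minimal prime of $\J$, so $|\sigma|=\mathrm{ht}(\mathfrak{p})\geq n-D$. If $\sigma$ were not an initial segment, pick $i\in\sigma$ with $i>|\sigma|$ and $j\in\{1,\ldots,|\sigma|\}\setminus\sigma$; the assumption $\ell\geq D-1$ yields $i\geq|\sigma|+1\geq n-D+1\geq n-\ell$. Writing $\mathfrak{p}=(\J:\xv^{\nu})$ for a suitable monomial $\xv^{\nu}$, the term $x_{i}\xv^{\nu}\in\J$ satisfies $\mt{(x_{i}\xv^{\nu})}\geq i\geq n-\ell$, so Proposition \ref{prop:lqs:equiv}\emph{(ii)} with $r=1$ produces some $s\geq 0$ such that $x_{j}^{s}\xv^{\nu}=x_{j}^{s}(x_{i}\xv^{\nu})/x_{i}\in\J$. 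Then $x_{j}^{s}\in(\J:\xv^{\nu})=\mathfrak{p}$, and primality forces $x_{j}\in\mathfrak{p}$, contradicting $j\notin\sigma$. The main hurdle is exactly this index bookkeeping that places $i$ in the range where Proposition \ref{prop:lqs:equiv}\emph{(ii)} becomes applicable; that is the single point at which the hypothesis $\ell\geq D-1$ is used in an essential way.
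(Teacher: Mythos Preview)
Your argument is correct, but it takes a considerably longer road than the paper's one-line proof.  The paper simply observes that the equalities $\J:x_{n-j}^{\infty}=\J:\langle x_{1},\ldots,x_{n-j}\rangle^{\infty}$ for $0\leq j\leq D-1$ furnished by Proposition~\ref{prop:lqs:equiv}\emph{(iii)} immediately yield characterisation~\emph{(v)} of Proposition~\ref{prop:quasistab_equiv} (the implication \emph{(iii)}$\Rightarrow$\emph{(v)} having already been established in that proposition's proof), and quasi-stability follows.

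Your route differs in that you aim for characterisation~\emph{(vi)}.  Part~(a) is handled the same way as in the paper, but for part~(b) you prove something stronger: your ``prefix-prime'' statement is verbatim characterisation~\emph{(ix)} of Proposition~\ref{prop:quasistab_equiv}.  So once you have established that every associated prime of $\P/\J$ has the form $\langle x_{1},\ldots,x_{m}\rangle$, you could stop right there and invoke \emph{(ix)}$\Leftrightarrow$\emph{(i)}; the subsequent reduction to a unique minimal prime and then to pure powers $x_{k}^{e_{k}}\in\J$ is unnecessary detour.  What your approach buys is an explicit, self-contained verification of~\emph{(ix)} directly from the $\ell$-quasi-stable hypothesis via Proposition~\ref{prop:lqs:equiv}\emph{(ii)}, whereas the paper leans on the internal machinery of Proposition~\ref{prop:quasistab_equiv}.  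One small point worth making explicit in your write-up: the bound $|\sigma|\geq n-D$ relies on the dimension formula $\mathrm{ht}(\J)+\dim(\P/\J)=n$ for ideals in a polynomial ring over a field, which ensures every minimal (hence every associated) prime has height at least $n-D$.
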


\begin{proof}
  Since the equality
  $\I:x_{n-j}^\infty= \I:\langle x_1,\ldots,x_{n-j}\rangle^\infty$ for all
  $0< j< D$ implies that also
  $\langle\I,x_{n},\dots,x_{n-j+1}\rangle:x_{n-j}^\infty=
  \langle\I,x_{n},\dots,x_{n-j+1}\rangle^{\mathrm{sat}}$
  for all $0 < j < D$, the assertion follows from Propositions
  \ref{prop:quasistab_equiv} and \ref{prop:lqs:equiv}.
\end{proof}

For low-dimensional ideals, this observation significantly reduces the
computational costs of checking quasi-stability.  However, its
straightforward application requires the knowledge of the dimension of the
ideal.  The following simple Algorithm \ref{alg:dqstest} verifies whether a
given monomial ideal is $D$-quasi-stable without a priori knowledge of $D$.
It is an adaption of a similar algorithm for checking $D$-stability
presented in \citep[Alg.~1]{wms:rednum}.  We will prove its correctness
later (Proposition \ref{prop:algdqs}).

\begin{algorithm}
\caption{\textsc{DQS-Test}: Test for
  $D$-quasi-stability}\label{alg:dqstest} 
\begin{algorithmic}[1]
  \REQUIRE minimal basis $G=\{t_{1},\dots,t_{r}\}$ of monomial ideal
           $\J\lhd\P$ 
  \ENSURE The answer to: is $\J$ $D$-quasi-stable?
  \STATE $\ell\leftarrow$ smallest $j$ such that $x_\alpha^{\deg G}\in\I$
         for $\alpha=1,\ldots,n-j$\label{line:dqs_purep} 
  \FORALL {$\xv^\mu \in G$ with $k=\mt{(\xv^\mu)}\ge n-\ell$}
    \FOR {$i=1,\ldots ,k-1$} 
      \IF {$x_{i}^{\deg G}\frac{\xv^\mu}{x_k^{\mu_k}}\notin 
           \langle G\rangle$}
        \STATE \textbf{return} false \label{line:dqs_false}
      \ENDIF
    \ENDFOR
  \ENDFOR 
  \STATE \textbf{return} true
\end{algorithmic}
\end{algorithm}

With minor adaptions of the proof given above, one obtains the following
version of Proposition \ref{prop:lqs:equiv} for the weakly
$\ell$-quasi-stable case.  In Section \ref{sec:npos}, we will relate this
notion of stability to Noether position.

\begin{proposition}\label{prop:wlqs:equiv}
  Let $\J \lhd \P$ be a monomial ideal and $\ell$ an integer.  Then the
  following statements are equivalent.
  \begin{enumerate}
  \item $\J$ is weakly $\ell$-quasi-stable
  \item If $\xv^\mu$ in $\J$ with $\mt(\xv^\mu)\geq n-\ell$ and $\mu_j>0$
    for some $n-\ell \leq j \leq n$, then for each $0<r\leq \mu_j$ and
    $1 \leq i \leq n-\ell$ an integer $s\geq 0$ exists such that
    $x_i^s\frac{\xv^\mu}{x_j^r}$ lies in $\J$.
  \item For all $0\leq j \leq \ell$ holds
    \begin{equation}\label{eq:wlqs}
      \J:x_{n-j}^\infty \subseteq 
      \J:\langle x_1,\ldots,x_{n-\ell}\rangle^\infty\,.
    \end{equation}  			
  \end{enumerate}
\end{proposition}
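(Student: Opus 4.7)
The plan is to mimic the three implications of the proof of Proposition \ref{prop:lqs:equiv}, checking carefully that the additional restriction $j\le n-\ell$ on the replacing variable propagates through each step. All three implications can be handled by small modifications of the earlier argument; I expect no deep new difficulty, only bookkeeping of indices.

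For \emph{(1)} $\Rightarrow$ \emph{(2)}, I would run the same descending induction that was used in the proof of Proposition \ref{prop:lqs:equiv}: starting from $\xv^\mu\in\J$ with $\mt(\xv^\mu)=k\geq n-\ell$ and a chosen $i\leq n-\ell$, I would apply weak $\ell$-quasi-stability to $\xv^\mu$ to produce a minimal generator $\xv^{\nu^{(1)}}$ dividing $x_i^{\deg B}\xv^\mu/x_k^{\mu_k}$. Since the replacing variable has index $i\leq n-\ell$, weak $\ell$-quasi-stability is exactly what is needed, and the resulting generator has $\mt$ equal to some $k_1<k$. Iterating, I obtain a strictly decreasing chain $k>k_1>k_2>\cdots>k_{\omega-1}=j$ of top indices. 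At every step I still have $k_\alpha\geq j\geq n-\ell$, so the definition applies. Termination occurs when the chain reaches $j$, and combining the divisibilities yields $\xv^{\nu^{(\omega)}}\mid x_i^{\deg B}\xv^\mu/x_j^r$, as desired.

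For \emph{(2)} $\Rightarrow$ \emph{(3)}, I take $t\in\P$ with $x_{n-j}^r t\in\J$ for some $0\leq j\leq\ell$ and some $r>0$. Then $\mt(x_{n-j}^r t)\geq n-j\geq n-\ell$, so \emph{(2)} applies and yields, for every $i\leq n-\ell$, an exponent $s_i\geq0$ such that $x_i^{s_i}t\in\J$. Taking $S=\sum_{i\leq n-\ell}s_i$ gives $t\langle x_1,\dots,x_{n-\ell}\rangle^{S(n-\ell)}\subseteq\J$, hence $t\in\J:\langle x_1,\dots,x_{n-\ell}\rangle^\infty$, which is the required inclusion.

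For \emph{(3)} $\Rightarrow$ \emph{(1)}, I consider a minimal generator $\xv^\mu$ of $\J$ with $m=\mt(\xv^\mu)\geq n-\ell$ and pick any index $j<m$ with $j\leq n-\ell$. Writing $m=n-j'$ for some $0\leq j'\leq\ell$, the inclusion in \eqref{eq:wlqs} gives $\xv^\mu/x_m^{\mu_m}\in\J:x_m^\infty\subseteq\J:\langle x_1,\dots,x_{n-\ell}\rangle^\infty$. Thus for some $s\geq 0$ the whole $s$-th power of $\langle x_1,\dots,x_{n-\ell}\rangle$ multiplied with $\xv^\mu/x_m^{\mu_m}$ lies in $\J$; in particular $x_j^s\xv^\mu/x_m^{\mu_m}\in\J$. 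Since $\J$ contains a minimal generator dividing this term and its $x_j$-degree is bounded by $\deg B$, one may take $s\leq\deg B$, so weak $\ell$-quasi-stability holds. The only thing that requires care throughout is to verify that every application of weak $\ell$-quasi-stability really satisfies the constraint $i\leq n-\ell$ on the replacing variable — and the proof is set up so that this is automatic, which is why nothing genuinely new is needed beyond the argument already given.
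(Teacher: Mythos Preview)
Your proposal is correct and follows exactly the route the paper takes: the paper itself gives no separate argument, merely remarking that Proposition~\ref{prop:wlqs:equiv} is obtained ``with minor adaptions of the proof given above,'' and your outline carries out precisely this adaptation of the proof of Proposition~\ref{prop:lqs:equiv}, tracking the additional constraint $i\leq n-\ell$ through each of the three implications. The only cosmetic point is that in the descent $k>k_1>\cdots$ the chain may jump strictly below $j$ rather than hit it exactly, but then $\xv^{\nu^{(\alpha)}}$ already divides $x_i^{\deg B}\xv^\mu/x_j^r$ and the argument terminates early---this is the same harmless imprecision present in the original proof of Proposition~\ref{prop:lqs:equiv}.
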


%
%
\subsection{Stability}
\label{sec:stab}

A study of the problem of characterising algebraically the various variants
of stability has already been started by \cite{wms:rednum} because of its
relevance for computing reduction numbers.  For completeness, we first
recall without proof the following result about $\ell$-stability.

\begin{proposition}[{\citep[Prop. 3.5]{wms:rednum}}]\label{prop:lstab}
  The monomial ideal $\J\unlhd\P$ is $\ell$-stable, if and only if it
  satisfies for all $0\leq j\leq\ell$
  \begin{equation}\label{eq:lstab}
    \langle\J,x_{n},\dots,x_{n-j+1}\rangle:x_{n-j}=
    \langle\J,x_{n},\dots,x_{n-j+1}\rangle:\mf\,.
  \end{equation}
\end{proposition}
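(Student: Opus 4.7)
The plan is to prove both implications monomial-by-monomial, after setting $\J_j=\lspan{\J,x_{n},\dots,x_{n-j+1}}$ for brevity. The inclusion $\J_j:\mf\subseteq\J_j:x_{n-j}$ is automatic, so the whole content of \eqref{eq:lstab} is the reverse inclusion. Since $\J_j$ is monomial, it suffices to test both colons on monomials $f$; and when checking $\J_j:\mf$ on such an $f$ we only need $x_i f\in\J_j$ for $i<n-j$, because $x_i\in\J_j$ already for $i>n-j$ and $x_{n-j}f\in\J_j$ is our standing assumption.

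For the forward direction I would assume $\J$ is $\ell$-stable, take a monomial $f$ with $x_{n-j}f\in\J_j$ for some $j\leq\ell$, and analyse which kind of generator of $\J_j$ divides $x_{n-j}f$. If some $x_p$ with $p>n-j$ divides $x_{n-j}f$, then $x_p\neq x_{n-j}$ forces $x_p\mid f$, giving $f\in\J_j$ and hence $f\in\J_j:\mf$ trivially. Otherwise a minimal generator $t\in\J$ divides $x_{n-j}f$; set $m=\mt{(t)}$. If $m\neq n-j$ then $x_{n-j}\nmid t$, so $t\mid f$ and again $f\in\J\subseteq\J_j$. The interesting case is $m=n-j$, where I would write $t=x_{n-j}^{a}t'$ with $t'\in\kk[x_1,\dots,x_{n-j-1}]$ and deduce $x_{n-j}^{a-1}t'\mid f$. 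Then for any $i<n-j$, the $\ell$-stability hypothesis (applicable because $\mt{(t)}=n-j\geq n-\ell$) yields $\tilde t=x_i t/x_{n-j}=x_i x_{n-j}^{a-1}t'\in\J$, and this $\tilde t$ divides $x_i f$. Hence $x_if\in\J\subseteq\J_j$ for every $i<n-j$, i.e.\ $f\in\J_j:\mf$.

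For the reverse direction I would assume \eqref{eq:lstab} for all $0\leq j\leq\ell$ and pick any $\xv^\mu\in\J$ with $m=\mt{(\xv^\mu)}\geq n-\ell$. Setting $j=n-m\in\{0,\dots,\ell\}$ and $f=\xv^\mu/x_{n-j}$, one has $x_{n-j}f=\xv^\mu\in\J\subseteq\J_j$, so by the colon hypothesis $x_if\in\J_j$ for every $i$. Here the point is that $\mt{(f)}\leq n-j$, and so for $i<n-j$ the term $x_if$ has $\mt{(x_if)}\leq n-j$, which rules out divisibility by any added variable $x_p$ with $p>n-j$. Therefore $x_if$ must already lie in $\J$, and this is precisely the $\ell$-stability condition $x_i\xv^\mu/x_m\in\J$ for the pair $(i,m)$.

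I expect the main obstacle to be purely organisational: aligning the two different indexings, the colon parameter $j$ and the leading-variable parameter $m$, via $j=n-m$, and then pinning down in the forward direction the only case where the hypothesis is actually used, namely $\mt{(t)}=n-j$. Once one observes that $x_{n-j}^{a-1}t'\mid f$ in that case, the $\ell$-stability substitution $t\mapsto x_it/x_{n-j}$ transports cleanly to $f$ and closes the proof without further calculation.
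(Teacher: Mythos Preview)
The paper does not actually prove this proposition: it is quoted without proof from \citep[Prop.~3.5]{wms:rednum}. Your argument is correct and complete, and it follows the same pattern as the paper's proof of the closely related one-directional result for weak $\ell$-stability immediately afterwards (splitting according to whether the relevant monomial is already caught by one of the added variables $x_{n},\dots,x_{n-j+1}$, and otherwise invoking the stability hypothesis on the term that lands in $\J$). One small point worth tightening: when you write ``if $m\neq n-j$ then $x_{n-j}\nmid t$'', this is only valid because you have already excluded Case~1, which forces $m\leq n-j$; it would read more cleanly to say ``if $m<n-j$'' there, since $m>n-j$ has been ruled out.
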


\cite{wms:rednum} showed furthermore that $D$-stability implies
quasi-stability whereas this is not the case for weak $D$-stability.  The
following novel result provides an analogous characterisation of weak
$\ell$-stability.

\begin{proposition}
  Let $\J\unlhd\P$ be a monomial ideal. If $\J$ is weakly $\ell$-stable,
  then it satisfies for all $0\leq j \leq \ell$ the equality
  \begin{equation}\label{eq:wlstab_colon}
    \langle\J,x_n,\ldots,x_{n-\ell+1}\rangle:x_{n-j}=
    \langle\J,x_n,\ldots,x_{n-\ell+1}\rangle:\mf\,.
  \end{equation}
\end{proposition}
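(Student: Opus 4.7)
The plan is to establish the nontrivial inclusion
$I:x_{n-j}\subseteq I:\mf$, where
$I=\langle\J,x_n,\ldots,x_{n-\ell+1}\rangle$; the reverse inclusion
is automatic since $x_{n-j}\in\mf$. Because $I$ is a monomial ideal,
it suffices to consider monomials: I will show that whenever a
monomial $t$ satisfies $tx_{n-j}\in I$, we also have $tx_i\in I$ for
every $1\le i\le n$. Membership is immediate for $i\ge n-\ell+1$
from the generators of $I$, so the content lies in verifying it for
$i\le n-\ell$.

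I would then split according to how $tx_{n-j}$ lies in
$I=\J+\langle x_{n-\ell+1},\ldots,x_n\rangle$. If some $x_k$ with
$k\ge n-\ell+1$ already divides $t$, then $t\in I$ and the
conclusion is trivial. Otherwise $tx_{n-j}$ must be divisible by a
minimal monomial generator $\xv^\nu$ of $\J$. The case
$x_{n-j}\nmid\xv^\nu$ is again trivial, since then $\xv^\nu\mid t$
puts $t$ into $\J\subseteq I$. The substantive case is
$x_{n-j}\mid\xv^\nu$: setting $\xv^\mu=\xv^\nu/x_{n-j}$, one has
$\xv^\mu\mid t$ and $\xv^\nu=x_{n-j}\xv^\mu\in\J$.

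At this point weak $\ell$-stability enters. Put $m=\mt{(\xv^\nu)}$.
Since $x_{n-j}\mid\xv^\nu$ and $n-j\ge n-\ell$, we have $m\ge n-\ell$.
If $m\ge n-\ell+1$ then $x_m$ still divides $\xv^\mu$, placing
$\xv^\mu$, and hence $t$, into
$\langle x_{n-\ell+1},\ldots,x_n\rangle\subseteq I$. Otherwise
$m=n-\ell$. For any $i\le n-\ell$ with $i<m$, the defining property
of weak $\ell$-stability applied to $\xv^\nu$ with non-multiplicative
index $i$ (permitted because $i\le n-\ell$) yields
$x_i\xv^\nu/x_m=x_i\xv^\mu\in\J$, while for $i=m=n-\ell$ we have
$x_i\xv^\mu=\xv^\nu\in\J$ directly. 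Multiplying by
$t/\xv^\mu\in\P$ then gives $tx_i\in\J\subseteq I$, as required.

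I expect the main obstacle to be the index bookkeeping in the
substantive subcase: one must confirm that $m\ge n-\ell$ so that
weak $\ell$-stability is applicable at all, and then match the
restriction $j'\le n-\ell$ built into its definition exactly with
the range of indices $i\le n-\ell$ for which $tx_i\in I$ still needs
to be shown. Once these two ranges are lined up, the rest is a
direct parallel of the proof of Proposition~\ref{prop:lstab}, using
only the combinatorial shifting property encoded in
Definition~\ref{def:stable}.
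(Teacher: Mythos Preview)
Your overall strategy matches the paper's: establish the nontrivial inclusion $I:x_{n-j}\subseteq I:\mf$ via monomials, dispose of the case where some $x_k$ with $k>n-\ell$ divides $t$, and then invoke weak $\ell$-stability. The paper's argument is, however, more direct in the remaining case. Once $\mt(t)\le n-\ell$, the paper simply observes that $x_{n-j}t\in\J$ with $\mt(x_{n-j}t)=n-j\ge n-\ell$ and applies Definition~\ref{def:stable}(ii) to the term $x_{n-j}t$ itself, obtaining $x_it=x_i\,\frac{x_{n-j}t}{x_{n-j}}\in\J$ for every $i\le n-\ell$ in one stroke. Your detour through a minimal generator $\xv^\nu$ of $\J$ dividing $tx_{n-j}$, followed by the sub-case analysis on $m=\mt(\xv^\nu)$, is unnecessary: the stability condition is stated for \emph{all} terms of $\J$, not only minimal generators, so you may as well work with $x_{n-j}t$ directly.

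One warning about your sub-case $m\ge n-\ell+1$: the claim that ``$x_m$ still divides $\xv^\mu$'' is not justified. If $m=n-j$ and $\nu_{n-j}=1$, then $\xv^\mu=\xv^\nu/x_{n-j}$ has lost its only copy of $x_m$. For $j=\ell$ this branch is in fact vacuous (because $\xv^\nu\mid tx_{n-\ell}$ together with $\mt(t)\le n-\ell$ forces $m\le n-\ell$, hence $m=n-\ell$), so the gap does not damage your argument there; but the reasoning you give for that branch is incorrect as written.
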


\begin{proof}
  Assume first that $\J$ is weakly $\ell$-stable and let $t$ be a term such
  that $x_{n-j}t\in\langle\J,x_{n},\dots,x_{n-\ell+1}\rangle$ for some
  $j\leq\ell$.  If $\mt{(t)}>n-\ell$, then
  $t\in\langle x_{n},\dots,x_{n-\ell+1}\rangle$ and nothing is to be
  proven.  Otherwise, we have $x_{n-j}t\in\J$ and
  $\mt{(x_{n-j}t)}=n-j\geq n-\ell$.  The weak $\ell$-stability now entails
  that $x_{i}t=x_{i}\frac{x_{n-j}t}{x_{n-j}}\in\J$ for all $i\leq n-\ell$.
  Hence $t\langle x_{1},\dots,x_{n-\ell}\rangle\subseteq\J$ implying
  $t\mf\subseteq\langle\J,x_{n},\dots,x_{n-\ell+1}\rangle$.  Thus we have
  shown the inclusion ``$\subseteq$'' and the converse one is trivial.\qed
\end{proof}

%
%
\subsection{Componentwise Stability}
\label{sec:compstab}

\cite{hh:complin} introduced the notion of a componentwise linear ideal as
a generalisation of the notion of a stable monomial ideal to polynomial
ideals.  Such ideals have many special properties, in particular concerning
their Betti numbers.  If $\I\lhd\P$ is a homogeneous ideal, then we denote
the ideal generated by the homogeneous component $\I_{d}$ by
$\I_{\lspan{d}}=\lspan{\I_{d}}$.  One can now extend every stable position
defined above to a \emph{componentwise stable position} by requiring that
all ideals $\I_{\lspan{d}}$ with $d\geq0$ are simultaneously in the
corresponding stable position.  For monomial ideals, componentwise (strong)
stability is equivalent to ordinary (strong) stability, as the defining
criterion involves only terms of the same degree.  By contrast,
componentwise quasi-stability is a stronger condition than the ordinary
version.  As for polynomial ideals we do not simply consider their leading
ideals but the (polynomial) component ideals $\I_{\lspan{d}}$, for them
componentwise (strongly) stable position is generally also a stronger
condition than its ordinary counterpart.

We will concentrate in the sequel on componentwise quasi-stability, as it
appears to be the most important notion for applications.  For example, if
a componentwise linear ideal is in componentwise quasi-stable position,
then all its Betti numbers can be directly read off from its Pommaret
basis, as this basis induces the minimal resolution of the ideal
\citep[Thm.~9.12]{wms:comb2}.  Another quite remarkable fact about this
position is that it is of all the generic positions considered in this work
the only one which is not automatically implied by the GIN position (see
Definition \ref{def:ginpos} below). Example \ref{ex:23} provides a concrete
counter example.

The following elementary result implies that it is not really necessary to
work componentwise which requires to treat many and rather large bases and
thus is computationally very inefficient.  In the case of (strongly) stable
position only the only-if-part remains true.  However, for the subsequent
results only this direction is needed so that appropriately adapted
versions can be provided.

\begin{lemma}
  Let $\I\lhd\P$ be a homogeneous polynomial ideal.  The ideal
  $\I_{\langle d\rangle}=\lspan{\I_{d}}$ is in quasi-stable position, if
  and only if the ideal $\I_{[d]}=\lspan{\bigcup_{r\leq d}\I_{r}}$ is in
  quasi-stable position.
\end{lemma}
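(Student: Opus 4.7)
The plan is to reduce the statement to the observation that $\I_{\langle d\rangle}$ and $\I_{[d]}$ coincide in every degree $e\geq d$, and then to invoke the algebraic characterisation of quasi-stability given in Proposition~\ref{prop:quasistab_equiv}\,(iii), which depends only on saturated colon ideals and is therefore insensitive to modifications in finitely many low-degree components.

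First I would verify the degree-wise coincidence. The ideal $\I_{[d]}$ is generated by $\bigcup_{r\leq d}\I_r$, so its degree-$e$ component equals $\sum_{r\leq d}\mf^{e-r}\I_r$. Since $\mf^{d-r}\I_r\subseteq\I_d$ for each $r\leq d$ (as $\I$ itself is an ideal), this sum simplifies to $\mf^{e-d}\I_d$ whenever $e\geq d$, and the latter is precisely $(\I_{\langle d\rangle})_e$. Passing to leading terms with respect to the fixed degree reverse lexicographic order preserves degrees of homogeneous elements, so the two monomial ideals $\J_1=\lt{\I_{\langle d\rangle}}$ and $\J_2=\lt{\I_{[d]}}$ also agree in every degree $\geq d$.

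For the second step I use that any two monomial ideals $\J_1,\J_2$ which agree in all sufficiently high degrees have identical saturated colons: a monomial $t$ lies in $\J_i:x_{n-j}^{\infty}$ iff $t\,x_{n-j}^{r}\in\J_i$ for some $r$, and since $\deg(t)+r$ can be taken arbitrarily large, the membership coincides for $\J_1$ and $\J_2$. The same argument applies to $\J_i:\lspan{x_1,\dots,x_{n-j}}^{\infty}$. Hence the pair of saturated colons appearing in Proposition~\ref{prop:quasistab_equiv}\,(iii) is identical for $\J_1$ and $\J_2$, so condition (iii) holds for one leading ideal if and only if it holds for the other, yielding the claimed equivalence. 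I do not expect any substantial obstacle; the only care required is that the coincidence in high degrees be correctly transferred from the polynomial ideals to their leading ideals and then to their colon ideals.
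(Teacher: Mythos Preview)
Your argument is correct and follows essentially the same route as the paper: both rest on the observation that $\I_{\langle d\rangle}=(\I_{[d]})_{\geq d}$, i.e.\ the two ideals coincide in all degrees $\geq d$, and then on the fact that quasi-stability of the leading ideal is insensitive to truncation in low degrees. The paper dispatches the second step by citing \citep[Lemma~2.2]{wms:comb2}, whereas you supply a self-contained proof of that step via the saturated-colon characterisation of Proposition~\ref{prop:quasistab_equiv}(iii); this makes your version more explicit but not genuinely different.
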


\begin{proof}
  Obviously, $\I_{\langle d\rangle}=\bigl(\I_{[d]}\bigr)_{\geq d}$.  Now
  the claim follows immediately from \citep[Lemma~2.2]{wms:comb2}.\qed
\end{proof}

We now develop a sufficient criterion for an ideal $\I$ to be in
componentwise quasi-stable position which does not require the
consideration of the component ideals $\I_{\langle d\rangle}$ (or
equivalently $\I_{[d]}$).  Such a criterion is important for deciding
componentwise linearity.  Assuming that the ideal $\I$ is already in
quasi-stable position (so that it possesses a Pommaret basis), we can
derive one based on the first syzygies of $\I$.

If the set $\H=\{h_{1},\dots,h_{s}\}$ is a Pommaret basis of $\I$ and
$x_{k}$ is a non-multiplicative variable for the generator
$h_{\alpha}\in\H$, then the product $x_{k}h_{\alpha}$ possesses a unique
involutive standard representation
\begin{equation}\label{eq:isr}
  x_{k}h_{\alpha}=\sum_{\beta=1}^{s}P^{(\alpha;k)}_{\beta}h_{\beta}
\end{equation}
where each non-vanishing coefficient $P^{(\alpha;k)}_{\beta}$ depends only
on variables which are multiplicative for $h_{\beta}$ and satisfies
$\lt{(P^{(\alpha;k)}_{\beta}h_{\beta})}\preceq\lt{(x_{k}h_{\alpha})}$.
\cite{wms:comb2} showed that the corresponding syzygies form a Pommaret
basis of the first syzygy module of $\I$ (for the Schreyer order induced by
$\H$).  Given a degree $d\geq0$ such that $\I_{d}\neq0$, we introduce two
subsets of the Pommaret basis $\H$: the set
$\H_{d}=\{h\in\H\mid\deg{h}\leq d\}$ collects all generators up to degree
$d$ and the set
$\widehat{\H}_{d}=\{\hat{h}\in\H\mid\exists
h\in\H_{d}:\lt{h}\mid\lt{\hat{h}}\}$ contains in addition all higher order
generators which have a leading term divisible by the leading term of an
element of $\H_{d}$.

\begin{proposition}\label{prop:cqs}
  Let $\I\lhd\P$ be a homogeneous ideal in quasi-stable position and
  $d\geq0$ a degree such that $\I_{d}\neq0$.  The ideal $\I_{[d]}$ is in
  quasi-stable position, if in every involutive standard representation
  \eqref{eq:isr} with $h_{\alpha}\in\widehat{\H}_{d}$ all generators
  $h_{\beta}$ with $P^{(\alpha;k)}_{\beta}\neq0$ also lie in
  $\widehat{\H}_{d}$.  In this case, $\widehat{\H}_{d}$ is the Pommaret
  basis of $\I_{[d]}$.
\end{proposition}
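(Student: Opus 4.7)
The plan is to prove the slightly stronger claim that, under the hypothesis, $\widehat{\H}_{d}$ is itself a Pommaret basis of $\I_{[d]}$. Quasi-stability of $\I_{[d]}$ then follows immediately from Proposition~\ref{prop:quasistab_equiv}\emph{(vii)}, and the second assertion of the proposition becomes automatic.

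First I would record the identity $\I_{[d]} = \lspan{\H_{d}}$. Every $f\in\I_{r}$ with $r\leq d$ has a unique involutive standard representation $f = \sum_{h\in\H} P_{h}h$ with $P_{h}\in\kk[\mult{h}]$; since $f$ and all $h$ are homogeneous, each non-vanishing summand satisfies $\deg(P_{h}h)=r$, which forces $\deg h\leq d$ and hence $h\in\H_{d}$. This proves $\bigcup_{r\leq d}\I_{r}\subseteq\lspan{\H_{d}}$, the reverse inclusion being trivial. Next I would invoke the standard involutive completion criterion for Pommaret bases (see \citep{wms:invol}): a finite subset of $\P$ is a Pommaret basis of the ideal it generates precisely if every non-multiplicative product of one of its elements admits an involutive standard representation within the subset. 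The hypothesis of the proposition is exactly this criterion applied to $\widehat{\H}_{d}$, so $\widehat{\H}_{d}$ is a Pommaret basis of the ideal $\I':=\lspan{\widehat{\H}_{d}}$ and in particular $\lt{\I'}=\lspan{\lt{\widehat{\H}_{d}}}$.

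The proof is then completed by showing $\I'=\I_{[d]}$ through a comparison of leading ideals. The inclusion $\I_{[d]}\subseteq\I'$ is immediate from $\H_{d}\subseteq\widehat{\H}_{d}$ combined with the first step, and yields $\lt{\I_{[d]}}\subseteq\lt{\I'}$. For the opposite inclusion of leading ideals, any $\hat{h}\in\widehat{\H}_{d}$ satisfies $\lt{\hat{h}}=t\,\lt{h}$ for some $h\in\H_{d}$ and some term $t$; the polynomial $th$ lies in $\I_{[d]}$ and has leading term $\lt{\hat{h}}$, so $\lt{\widehat{\H}_{d}}\subseteq\lt{\I_{[d]}}$ and therefore $\lt{\I'}=\lspan{\lt{\widehat{\H}_{d}}}\subseteq\lt{\I_{[d]}}$. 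The two homogeneous ideals $\I_{[d]}\subseteq\I'$ thus share the same leading ideal, so Macaulay's theorem on Hilbert functions forces $\I_{[d]}=\I'$, and $\widehat{\H}_{d}$ is the Pommaret basis of $\I_{[d]}$. The one conceptual point is recognising that the given hypothesis is nothing but the involutive completion criterion rephrased; the rest is a clean manipulation of leading ideals.
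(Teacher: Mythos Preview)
Your argument is correct and differs from the paper's in the final step.  Both proofs begin identically: record $\I_{[d]}=\lspan{\H_{d}}$, observe that the hypothesis is precisely the local involutivity criterion, and conclude that $\widehat{\H}_{d}$ is a Pommaret basis of $\I':=\lspan{\widehat{\H}_{d}}$.  The divergence comes in proving $\I'=\I_{[d]}$.  The paper argues by contradiction: it picks a generator $\hat h\in\widehat{\H}_{d}\setminus\I_{[d]}$ with minimal leading term, forms $g=\lc(h)\hat h-\lc(\hat h)x^{\nu}h$ (where $\lt\hat h=x^{\nu}\lt h$ with $h\in\H_{d}$), and uses the involutive standard representation of $g$ in $\widehat{\H}_{d}$ to force $\hat h\in\I_{[d]}$.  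You instead note directly that $\lt\hat h=\lt(x^{\nu}h)\in\lt\I_{[d]}$, so the leading ideals of $\I_{[d]}\subseteq\I'$ coincide, and conclude by the Hilbert function argument.  Your route is shorter and avoids the minimal-counterexample machinery; the paper's route is slightly more self-contained in that it does not invoke Macaulay's theorem.  Either way the same divisibility relation $\lt h\mid\lt\hat h$ is the crux.
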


\begin{proof}
  We first note that obviously $\I_{[d]}=\lspan{\H_{d}}$.  Then we denote
  by $\widehat{\I}$ the ideal generated by $\widehat{\H}_{d}$.  If the
  condition on the involutive standard representations is satisfied, then
  $\widehat{\H}_{d}$ is the Pommaret basis of $\widehat{\I}$.  As
  obviously, $\I_{[d]}\subseteq\widehat{\I}$, it suffices to show that
  $\I_{[d]}$ cannot be a proper subset of $\widehat{\I}$.  Assume that this
  was the case.  Then there must exist a generator
  $\hat{h}\in\widehat{\H}_{d}$ which is not contained in $\I_{[d]}$.  Let
  $\hat{h}$ be among all such generators the one with the smallest leading
  term with respect to the used term order.  By construction, there exists
  $h\in\H_{d}$ such that $\lt{\hat{h}}=x^{\nu}\lt{h}$ for some term
  $x^{\nu}$.  We consider the polynomial
  $g=\lc{(h)}\hat{h}-\lc{(\hat{h})}x^{\nu}h\in\widehat{\I}$.  It possesses
  an involutive standard representation with respect to the Pommaret basis
  $\widehat{\H}_{d}$ of the form
  $g=\sum_{\hat{f}\in\widehat{\H}_{d}}P_{\hat{f}}\hat{f}$.  Every generator
  $\hat{f}$ with $P_{\hat{f}}\neq0$ must have a leading term smaller than
  $\hat{h}$, as by construction $\lt{g}\prec\lt{\hat{h}}$, and thus must
  lie in $\I_{[d]}$ according to our choice of $\hat{h}$.  But this implies
  that $\hat{h}\in\I_{[d]}$ contradicting our assumption.  Hence
  $\widehat{\I}=\I_{[d]}$ and $\I_{[d]}$ is in quasi-stable position.\qed
\end{proof}

\begin{remark}\label{rem:css}
  The same statement holds for the componentwise (strongly) stable case.
  The only difference is that now we must assume that the ideal $\I$ is
  already in (strongly) stable position; the criterion itself does not
  change.  As in this case the leading terms $\lt{\H}$ form even the
  minimal basis of $\lt{\I}$, we find that $\widehat{\H}_{d}=\H_{d}$ which
  simplifies the application of the criterion.
\end{remark}

As a simple corollary, we find that componentwise quasi-stability is
generic, too, as the intersection of finitely many Zariski open subsets is
still Zariski open.

\begin{corollary}
  For verifying that a homogeneous ideal $\I\lhd\P$ in quasi-stable
  position is even in a componentwise quasi-stable position, it suffices to
  consider only finitely many ideals $\I_{\lspan{d}}$.  If the degree reverse
  lexicographic order is used, then we may restrict to $d\leq\reg{\I}$.
\end{corollary}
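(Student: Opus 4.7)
The plan is to reduce the componentwise verification to degrees below an explicit bound by combining the preceding lemma with the observation that, once $d$ exceeds the generator degrees of $\I$, the truncated ideal $\I_{[d]}$ collapses to $\I$ itself. Concretely, let $q$ denote the maximal degree of a minimal generator of $\I$. For any $d\geq q$, every minimal generator of $\I$ belongs to $\bigcup_{r\leq d}\I_{r}$, so $\I_{[d]}=\lspan{\bigcup_{r\leq d}\I_{r}}=\I$. By hypothesis $\I$ is in quasi-stable position, hence so is $\I_{[d]}$, and the preceding lemma then yields that $\I_{\lspan{d}}$ is in quasi-stable position for every $d\geq q$. Therefore it suffices to verify quasi-stability for the finitely many components $\I_{\lspan{d}}$ with $0\leq d<q$, which proves the first assertion.

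For the refinement to the degree reverse lexicographic order, I would invoke Proposition \ref{prop:quasistab_equiv}\emph{(vii)} to produce a finite Pommaret basis $\H$ of $\I$, and then appeal to the fact, recalled in the introduction, that for the degrevlex order the equality $\deg{\H}=\reg{\I}$ holds whenever the ideal is quasi-stable. Since the minimal generators of $\I$ are contained in any Pommaret basis (a standard property of involutive bases), we obtain the bound $q\leq\deg{\H}=\reg{\I}$, and the first part of the corollary then reduces the verification to $d\leq\reg{\I}$.

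I do not anticipate a genuine technical obstacle here: the statement is essentially a bookkeeping consequence of the preceding lemma together with the known equality between the Pommaret basis degree and the Castelnuovo--Mumford regularity for degrevlex. The only auxiliary point that should be mentioned in passing is the inclusion of the minimal generating set in the Pommaret basis, which guarantees that the generator-degree bound $q$ is dominated by $\deg{\H}$.
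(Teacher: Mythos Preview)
Your argument is correct and follows essentially the same route as the paper's: both show that beyond a certain degree one has $\I_{[d]}=\I$ (you take the minimal-generator degree, the paper takes the Pommaret-basis degree) and then invoke the equality $\deg{\H}=\reg{\I}$ for the degree reverse lexicographic order. One small remark: your claim that the minimal generators literally lie in the Pommaret basis is not quite right for polynomial ideals and is in any case unnecessary---what you actually need, and what is trivially true, is only the degree inequality $q\leq\deg{\H}$, which holds because $\H$ is a generating set of~$\I$.
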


\begin{proof}
  It follows from the previous proposition that it suffices to restrict to
  $d\leq q$ where $q$ is the maximal degree of a generator in the Pommaret
  basis of $\I$.  If the degree reverse lexicographic order is used, then
  $q=\reg{\I}$.\qed
\end{proof}

\begin{example}
  The criterion of Proposition \ref{prop:cqs} is not necessary.  Consider
  the ideal
  $\I=\langle x_{1}^{5}, x_{1}x_{2}^{4},
  x_{1}^{3}x_{2}^{3}\rangle\lhd\kk[x_{1},x_{2}]$.  It is quasi-stable and
  its Pommaret basis is given by
  $\H=\bigl\{x_{1}^{5}, x_{1}x_{2}^{4}, x_{1}^{3}x_{2}^{3},
  x_{1}^{2}x_{2}^{4}, x_{1}^{4}x_{2}^{3}\bigr\}$.  The first two generators
  form the set $\H_{5}$, adding the fourth one yields $\widehat{\H}_{5}$.
  Our criterion is \emph{not} satisfied, as we find as involutive standard
  representation $x_{1}h_{4}=x_{2}h_{3}$ and $h_{3}\notin\widehat{\H}_{5}$.
  Nevertheless, one easily verifies that
  $\I_{\langle5\rangle}=\I_{[5]}=\langle x_{1}^{5}, x_{1}x_{2}^{4}\rangle$
  is quasi-stable.
\end{example}

%
%
\subsection{Positive Characteristic}
\label{sec:poschar}

In principle, all above introduced notions of stability are independent of
the characteristic of the base field.  However, when we will discuss in
Section \ref{sec:algo} how to transform a given polynomial ideal into one
of these positions, the characteristic will play a role.  The simplest
restriction will be that for finite base fields we will have to assume that
the field is sufficiently large (the precise meaning of this will become
apparent below).  A more serious restriction will be that in positive
characteristic, we can only guarantee that one can always reach the various
variants of a quasi-stable position.  For stability and strong stability
only adapted ``$p$-versions'' can be reached generally.  The reason is
simply that in positive characteristic many binomial coefficients vanish
and hence many terms cannot be produced via linear transformations.

In order to define these ``$p$-versions'', we need the following notations
-- see e.\,g.\ \citep[\textsection 15.9.3]{de:ca}.  Let $p$ be an arbitrary
prime number.  For two natural numbers $k$, $\ell$, we say
$k\prec_{p}\ell$, if $\binom{\ell}{k}\not\equiv0\mod{p}$.  Given a term
$\xv^{\mu}$ and natural numbers $i>j$ such that $\mu_{i}>0$, we define for
any natural number $s\leq\mu_{i}$ the $s$th \emph{elementary move} as the
term $e_{i,j}^{(s)}(\xv^{\mu})=x_{j}^{s}\xv^{\mu}/x_{i}^{s}$ and this move
is \emph{$p$-admissible}, if and only if $s\prec_{p}\mu_{i}$.

The following definition of ``$p$-versions'' covers only the classical
stability notions.  Of course, it is trivial to extend it to $\ell$- and
weak versions.

\begin{definition}\label{def:pstab}
  Assume that $\ch{\kk}=p$ is positive.  Then a monomial ideal
  $\J\unlhd\P=\kk[\xv]$ is \emph{$p$-stable}, if for every term
  $\xv^{\mu}\in\J$ in it every $p$-admissible move
  $e_{i,j}^{(s)}(\xv^{\mu})$ with $j<i=\mt{(\xv^{\mu})}$ and $s\leq\mu_{i}$
  yields again a term in $\J$.  The ideal $\J$ is \emph{strongly
    $p$-stable}, if in the definition above every index $i$ with
  $\mu_{i}>0$ can be considered.
\end{definition}

As above, it is sufficient to verify the conditions on some finite
generating set of $\J$.  It is easy to see that Definition \ref{def:stable}
is equivalent to requiring that all elementary moves, i.\,e.\ without any
condition on the exponent $s$, stay inside the ideal.  Thus (strong)
$p$-stability is a weaker notion than ``ordinary'' (strong) stability, as
it simply ignores certain elementary moves.

\section{Other Generic Positions}
\label{sec:other}

We now consider three classical generic positions and introduce a new
fourth one.  The material in the Subsections \ref{sec:bfpos} and
\ref{sec:ginpos} is well-known and included only for the sake of
completeness.  Our main point in all cases is the relationship to the
stability positions considered in the previous section.

%
%
\subsection{Noether Position}
\label{sec:npos}

\begin{definition}\label{def:npos}
  The $D$-dimensional ideal $\I\lhd\P$ is in \emph{Noether position}, if
  the variables $x_{1},\dots,x_{D}$ induce a Noether normalisation of $\I$.
\end{definition}

Noether position is a classical concept in commutative algebra.  The
following well-known result provides a simple effective test via Gr\"obner
bases.

\begin{lemma}[{e.\,g.\ \citep[Lem.~4.1]{bg:cmsub}}]
  \label{lem:nother_equiv}
  Let $\I\lhd\P$ be a $D$-dimensional ideal. Then the following statements
  are equivalent:
  \begin{description}
  \item[(i)] $\I$ is in Noether position.
  \item[(ii)] There are integers $s_i$ such that $x_i^{s_i}\in\lt{\I}$ for
    all $1 \leq i\leq n-D$.
  \item[(iii)] $\dim{(\P/\langle \I,x_{n-D+1},\ldots,x_n \rangle)}=0$.
  \item[(iv)] $\dim{(\P/\lt{\langle \I,x_{n-D+1},\ldots,x_n \rangle})}=0$.
  \end{description}		
\end{lemma}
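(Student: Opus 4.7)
The plan is to establish the equivalences by splitting them into three pieces, $(iii)\Leftrightarrow(iv)$, $(i)\Leftrightarrow(iii)$, and $(ii)\Leftrightarrow(iv)$, each resting on essentially one ingredient. (Here I adopt the convention consistent with (ii)--(iv), namely that $x_{n-D+1},\ldots,x_n$ are the Noether variables.)

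The equivalence $(iii)\Leftrightarrow(iv)$ follows from the classical Gr\"obner basis fact that $\P/J$ and $\P/\lt{J}$ share their Hilbert function and hence their Krull dimension, applied to $J=\langle\I,x_{n-D+1},\ldots,x_n\rangle$. For $(i)\Leftrightarrow(iii)$, being in Noether position means that $\P/\I$ is a finitely generated module over $\kk[x_{n-D+1},\ldots,x_n]$ via the natural inclusion. By the graded Nakayama lemma this is equivalent to the fibre at the irrelevant maximal ideal, namely $\P/\langle\I,x_{n-D+1},\ldots,x_n\rangle$, being a finite-dimensional $\kk$-vector space, i.\,e.\ zero-dimensional. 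The dimension hypothesis $\dim{(\P/\I)}=D$ forces the inclusion to be injective by a dimension count, so algebraic independence is automatic and does not have to be addressed separately.

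The delicate step is $(ii)\Leftrightarrow(iv)$, and it depends essentially on the fact that the paper works with the degree reverse lexicographic order with $x_1\succ\cdots\succ x_n$. The core claim is the identity
\begin{equation*}
\lt{\langle\I,x_{n-D+1},\ldots,x_n\rangle}=\lt{\I}+\langle x_{n-D+1},\ldots,x_n\rangle\,.
\end{equation*}
I would prove it by Buchberger's criterion, showing that $G\cup\{x_{n-D+1},\ldots,x_n\}$ is a Gr\"obner basis of the sum whenever $G$ is one of $\I$. The only nontrivial S-polynomial is $S(x_j,g)$ with $j>n-D$ and $x_j\mid\lt{g}$, and it equals $-\tail(g)/\lc(g)$ up to sign. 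Here the combination of homogeneity of $g$ and the degrev lex exchange rule forces every term of $g$ to be divisible by at least one of $x_{n-D+1},\ldots,x_n$: a hypothetical term $t$ supported only on $x_1,\ldots,x_{n-D}$ would make the last nonzero entry of $\lt{g}-t$ lie at some index exceeding $n-D$ and be positive, contradicting $\lt{g}\succ t$. Hence the S-polynomial reduces to zero modulo the linear generators alone. Given the identity, $\P/\lt{\langle\I,x_{n-D+1},\ldots,x_n\rangle}$ is identified with $\kk[x_1,\ldots,x_{n-D}]$ modulo those generators of $\lt{\I}$ that are supported only on $x_1,\ldots,x_{n-D}$, and this Artinian quotient is zero-dimensional precisely when $x_i^{s_i}\in\lt{\I}$ for every $i\le n-D$, i.\,e.\ when (ii) holds.

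The main obstacle I foresee is the Gr\"obner basis verification in $(ii)\Leftrightarrow(iv)$: without both homogeneity and the specific structure of the degree reverse lexicographic order the identity of leading ideals fails, and the equivalence would have to be argued by quite different means. The remaining equivalences are direct consequences of standard facts in commutative algebra and Gr\"obner basis theory.
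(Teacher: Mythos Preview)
Your proof is correct and complete. Note, however, that the paper does not actually give its own proof of this lemma: it is stated with a citation to \citep[Lem.~4.1]{bg:cmsub} and used as a black box, so there is no argument in the paper against which to compare yours.

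Your route is the standard one. The equivalence $(iii)\Leftrightarrow(iv)$ via equality of Hilbert functions of $J$ and $\lt{J}$ is immediate. Your $(i)\Leftrightarrow(iii)$ via graded Nakayama is the natural argument, and your remark that injectivity of $\kk[x_{n-D+1},\dots,x_n]\to\P/\I$ follows from the dimension hypothesis is exactly right: module-finiteness forces $\dim(\P/\I)$ to equal the dimension of the image of the parameter ring, so the kernel must vanish. For $(ii)\Leftrightarrow(iv)$ you correctly isolate the only nontrivial point, the identity
\[
  \lt{\langle\I,x_{n-D+1},\dots,x_n\rangle}=\lt{\I}+\langle x_{n-D+1},\dots,x_n\rangle,
\]
which is the Bayer--Stillman exchange property of the degree reverse lexicographic order. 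Your S-polynomial verification is sound: when $x_j\mid\lt{g}$ with $j>n-D$, homogeneity and the degrevlex rule force every term of $g$ to meet $\langle x_{n-D+1},\dots,x_n\rangle$, so $\tail(g)$ reduces to zero against the linear generators alone. One small addition that makes the write-up airtight is to note explicitly that when $x_j\nmid\lt{g}$ the leading terms are coprime and the S-polynomial reduces to zero by the coprimality criterion; you gloss over this, but it is the easy case.

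You are also right to flag the convention issue: the paper's Definition~\ref{def:npos} names $x_1,\dots,x_D$ as the Noether variables, whereas (ii)--(iv) of the lemma are only compatible with $x_{n-D+1},\dots,x_n$ playing that role (consistently with $x_1\succ\cdots\succ x_n$). Your choice to follow the lemma's internal convention is the correct one.
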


\begin{remark}\label{rem:qsnp}
  \cite{bg:scmr} proved that an ideal $\I$ is quasi-stable, if and only if
  $\I$ and all primary components of $\lt{\I}$ are simultaneously in
  Noether position.  In fact, it is easy to see that quasi-stability
  implies Noether position \citep[Prop.~4.1]{wms:comb2}, which immediately
  implies that the latter is a generic position, too.
\end{remark}

Almost all algorithms proposed so far to get an ideal into Noether position
are probabilistic -- see e.\,g.\ \citep[Algo.~3.4.5]{gp:singular}.  An
exception is the approach of \cite{dr:noether} using Janet bases.
Furthermore, \citep[Sect.~2]{wms:comb2} contains a method to obtain
deterministically quasi-stable position and as mentioned above this entails
Noether position.  However, the result of Bermejo and Gimenez mentioned in
Remark \ref{rem:qsnp} shows that quasi-stability is stronger than Noether
position.  To the best of our knowledge, the following result represents
the first combinatorial characterisation of Noether position.  In
particular, it implies that Noether position can also be achieved with the
deterministic methods which will be presented in Section \ref{sec:algo}.

\begin{theorem}\label{thm:wdqsnp}
  Let $\I\lhd\P$ be a $D$-dimensional ideal.  It is in Noether position, if
  and only if it is in weakly $D$-quasi-stable position.
\end{theorem}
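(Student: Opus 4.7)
The plan is to route both directions through Lemma~\ref{lem:nother_equiv}\emph{(ii)}, which reduces Noether position to the concrete monomial condition that pure powers $x_i^{s_i}\in\lt{\I}$ exist for each $1\leq i\leq n-D$. Both implications then amount to comparing this condition with weak $D$-quasi-stability, using the algebraic characterisations provided by Proposition~\ref{prop:wlqs:equiv}.

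For the direction ``Noether $\Rightarrow$ weakly $D$-quasi-stable'', I would assume such pure powers exist and pick each exponent $s_j$ minimal; then $x_j^{s_j}$ already appears in the minimal basis $B$ of $\lt{\I}$, so $s_j\leq q:=\deg{B}$. For any $\xv^{\mu}\in\lt{\I}$ with $m=\mt{(\xv^{\mu})}\geq n-D$ and any $j\leq n-D$ with $j<m$, the term $x_j^q\xv^{\mu}/x_m^{\mu_m}$ is a multiple of $x_j^{s_j}\in\lt{\I}$ and hence lies in $\lt{\I}$. This is precisely the defining condition of weak $D$-quasi-stability from Definition~\ref{def:stable}.

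For the converse, I would set $\J=\lt{\I}$ and use Proposition~\ref{prop:wlqs:equiv} to rewrite weak $D$-quasi-stability as the family of inclusions $\J:x_{n-j}^\infty\subseteq\J:\langle x_1,\ldots,x_{n-D}\rangle^\infty$ for $0\leq j\leq D$. Combining a primary decomposition $\J=\bigcap_i Q_i$ with the identity $\J:\mathfrak{a}^\infty=\bigcap_{\mathfrak{a}\not\subseteq\sqrt{Q_i}}Q_i$ valid for any ideal $\mathfrak{a}$, I would translate these inclusions into the following statement about associated primes: for every associated prime $P$ of $\J$ and every index $k>n-D$, membership $x_k\in P$ forces $\{x_1,\ldots,x_{n-D}\}\subseteq P$. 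Since every associated prime of a monomial ideal is generated by a subset of the variables, this yields the dichotomy that $P$ is either contained in $\langle x_1,\ldots,x_{n-D}\rangle$ or contains it.

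The dimension hypothesis $\dim{(\P/\J)}=D$ now forces at least one minimal prime to have height exactly $n-D$, and either branch of the dichotomy pins such a minimal prime down to $\langle x_1,\ldots,x_{n-D}\rangle$ itself. Any other minimal prime must also equal it, since a strict subset would have smaller height and a strict superset would violate minimality. Hence $\sqrt{\J}=\langle x_1,\ldots,x_{n-D}\rangle$, which delivers the pure powers $x_i^{s_i}\in\J$ required by Lemma~\ref{lem:nother_equiv}\emph{(ii)}, so $\I$ is in Noether position. I expect the main technical step to be the careful translation from the colon-ideal inclusions into the dichotomy on associated primes; once this is in hand, the dimension argument collapsing everything into the single prime $\langle x_1,\ldots,x_{n-D}\rangle$ is routine.
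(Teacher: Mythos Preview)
Your proposal is correct and takes a genuinely different route from the paper for the nontrivial direction.

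For the implication ``Noether $\Rightarrow$ weakly $D$-quasi-stable'', the paper actually omits this direction entirely (presumably as obvious), so your explicit argument via $x_j^{s_j}\mid x_j^{q}\xv^{\mu}/x_m^{\mu_m}$ is a welcome addition.

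For the converse, the paper argues combinatorially: it first observes that weak $D$-quasi-stability lets one iteratively trade each factor $x_k^{\mu_k}$ with $k>n-D$ for a factor $x_i^{\deg B}$ with $i\leq n-D$, so that any $\xv^{\mu}\in\J$ gives rise to terms of the form $x_1^{\mu_1+\nu_1}\cdots x_{n-D}^{\mu_{n-D}+\nu_{n-D}}\in\J$. It then splits into two cases according to whether $\J\cap\kk[x_{n-D+1},\dots,x_n]$ is empty; in the second case a dimension count forces $\J\cap\kk[x_i,x_{n-D+1},\dots,x_n]\neq\emptyset$ for each $i\leq n-D$, and the observation then produces the required pure power $x_i^{s_i}\in\J$.

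Your route is more structural: you pass through the colon-ideal characterisation of Proposition~\ref{prop:wlqs:equiv}, translate it into a dichotomy on the (monomial) associated primes via the witness $t$ with $\J:t=P$, and then let the dimension hypothesis collapse all minimal primes onto $\langle x_1,\dots,x_{n-D}\rangle$, yielding $\sqrt{\J}=\langle x_1,\dots,x_{n-D}\rangle$. This avoids the paper's case split and the somewhat ad hoc cone-counting, and it makes transparent \emph{why} Noether position is forced: weak $D$-quasi-stability rigidifies the shape of the associated primes enough that the dimension alone pins down the radical. The paper's argument, on the other hand, stays closer to Definition~\ref{def:stable} and never invokes primary decomposition, which keeps it self-contained and elementary. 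Both are short; yours is arguably cleaner.
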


\begin{proof}
  We first note the following simple consequence of the definition of weak
  $D$-quasi-stability for a monomial ideal $\J$ with minimal basis $B$.  If
  the term $\xv^\mu\in\J$ lies in the ideal, then $\J$ also contains any
  term of the form
  $x_{1}^{\mu_{1}+\nu_{1}}\cdots x_{n-\ell}^{\mu_{n-\ell}+\nu_{n-\ell}}$
  with exponents $\nu_{i}$ that are multiples of $\deg{B}$ satisfying
  $\nu_{1}+\cdots+\nu_{n-\ell}= k\deg{B}$ where
  $k=\#\{\mu_{j}\mid j>n-D\wedge\mu_{j}>0\}$.

  Assume now that $\J=\lt{\I}$ is a weakly $D$-quasi-stable ideal.  If
  there exists a term $x^{\mu}\in\J\cap\kk[x_{n-D+1},\dots,x_{n}]$, then we
  can immediately invoke the observation above to conclude that for each
  $1\leq i\leq n-D$ a term $x_{i}^{s_{i}}$ is contained in $\J$, as
  $\mu_{1}=\cdots=\mu_{n-D}=0$.  Thus $\I$ is in Noether position by Lemma
  \ref{lem:nother_equiv}.

  If the intersection $\J\cap\kk[x_{n-D+1},\dots,x_{n}]$ is empty, then the
  $D$-dimensional cone $1\cdot\kk[x_{n-D+1},\dots,x_{n}]$ lies completely
  in the complement of $\J$.  As for a $D$-dimensional ideal it is not
  possible that the complement contains a $(D+1)$-dimensional cone, the
  intersection $\J\cap\kk[x_{i},x_{n-D+1},\dots,x_{n}]$ must be non-empty
  for any index $1\leq i\leq n-D$.  But if $x^{\mu}$ is a term in this
  intersection, then it follows again from the introductory remark that
  also a term $x_{i}^{s_{i}}$ lies in $\J$ and thus that $\I$ is in Noether
  position.  \qed
\end{proof}

With the help of Theorem \ref{thm:wdqsnp}, we can now provide the postponed
proof that Algorithm \ref{alg:dqstest} for testing $D$-quasi-stability is
indeed correct.

\begin{proposition}\label{prop:algdqs}
  Algorithm \ref{alg:dqstest} is correct.
\end{proposition}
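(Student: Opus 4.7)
The plan is to analyse the two ingredients of Algorithm~\ref{alg:dqstest} separately: line~\ref{line:dqs_purep}, which computes some integer $\ell\in\{0,1,\dots,n\}$, and the loop in lines~2--7, which performs a specific stability check. Throughout I write $D=\dim{(\P/\J)}$.

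First I would show that $\ell\geq D$ always. By the definition in line~\ref{line:dqs_purep}, the ideal $\J$ contains $x_{\alpha}^{\deg{G}}$ for every $\alpha=1,\dots,n-\ell$, so each such variable is nilpotent in $\P/\J$. Consequently, $V(\J)\subseteq V(x_{1},\dots,x_{n-\ell})$, which yields $D\leq\ell$.

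Next I would show that the loop in lines~2--7 returns true if and only if $\J$ is $\ell$-quasi-stable. The conditions verified there are literally those of Definition~\ref{def:stable}(i) for $\ell$-quasi-stability, restricted to the elements of the minimal basis $G$ and with the uniform choice of exponent $s=\deg{G}$. By the remark following Definition~\ref{def:stable}, it suffices to impose the stability condition on the minimal basis, and the argument given in the proof of Proposition~\ref{prop:lqs:equiv} shows that the universal exponent $\deg{G}$ is always sufficient. Hence the loop succeeds exactly when $\J$ is $\ell$-quasi-stable.

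Combining the two steps yields correctness. If the algorithm returns true, then $\J$ is $\ell$-quasi-stable for some $\ell\geq D$, hence a fortiori $D$-quasi-stable. Conversely, if $\J$ is $D$-quasi-stable, then Corollary~\ref{cor:dqs_qs} (which applies because $D\geq D-1$) implies that $\J$ is even quasi-stable and therefore $\ell$-quasi-stable for any $\ell$, so the loop cannot fail.

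The only delicate point is the first step: one must recognise that line~\ref{line:dqs_purep}, despite being purely combinatorial and blind to $D$, actually extracts an upper bound on the Krull dimension. This is what guarantees that the subsequent loop tests a sufficiently strong variant of $\ell$-quasi-stability to imply $D$-quasi-stability; without this observation, the algorithm could a priori pick some $\ell<D$ and produce a test too weak to detect a failure.
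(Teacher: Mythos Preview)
Your argument is correct, and it is organised differently from the paper's own proof. The paper proceeds by a three-way case distinction according to whether $\J$ is (i) $D$-quasi-stable, (ii) not $D$-quasi-stable but in Noether position, or (iii) neither. In cases (i) and (ii) it invokes Theorem~\ref{thm:wdqsnp} and Lemma~\ref{lem:nother_equiv} to conclude that the computed $\ell$ equals $D$ exactly, so that the loop tests precisely $D$-quasi-stability; in case (iii) it shows $\ell>D$ by the same dimension bound you use, and then observes that any obstruction to $D$-quasi-stability is also one to $\ell$-quasi-stability.

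Your route bypasses the Noether position machinery entirely: you establish only the inequality $\ell\geq D$ (which is all that is needed for soundness of a positive answer), and for completeness you appeal to Corollary~\ref{cor:dqs_qs} to upgrade $D$-quasi-stability to full quasi-stability, hence to $\ell$-quasi-stability for the specific $\ell$ the algorithm has computed. This is a little more economical and avoids the forward reference to Theorem~\ref{thm:wdqsnp}. The paper's version, on the other hand, extracts the sharper information that $\ell=D$ whenever $\J$ is in Noether position, which may be of independent interest even though it is not strictly required for correctness of the algorithm.
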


\begin{proof}
  We distinguish three cases:
  \begin{enumerate}
  \item $\J$ is $D$-quasi-stable.
  \item $\J$ is not $D$-quasi-stable, but in Noether position.
  \item $\J$ is neither $D$-quasi-stable nor in Noether position.
  \end{enumerate}
  In the first case, Theorem \ref{thm:wdqsnp} entails that $\J$ is in
  Noether position.  Hence the number $\ell$ computed in Line
  \ref{line:dqs_purep} equals $D$ by Lemma \ref{lem:nother_equiv} and we
  will never reach Line \ref{line:dqs_false} by the definition of
  $D$-quasi-stability.  In the second case, we find again $\ell=D$ by the
  same argument.  But as $\J$ is not $D$-quasi-stable there must be an
  obstruction that leads us correctly to Line \ref{line:dqs_false}.  In the
  last case, $\ell$ is greater than $D$ (we know that $\ell\neq D$, since
  $\J$ is not in Noether position; the assumption $\ell < D$ leads to a
  contradiction, since then $D\leq n-(n-\ell) = \ell < D$).  As $\J$ is not
  $D$-quasi-stable, there exists a term $\xv^\mu\in G$ with
  $k=\mt{(\xv^\mu)}\ge n-D>n-\ell$ such that
  $x_i^{\deg G}\frac{\xv^\mu}{x_k^{\mu_k}}\notin \J$ for some $i < k$. Our
  algorithm will detect this obstruction and thus gives again the right
  answer.\qed
\end{proof}

%
%
\subsection{Borel-Fixed Position}
\label{sec:bfpos}

The next generic position which we consider is distinguished from all the
other ones by the fact that it is the only one which depends on the
characteristic of the underlying field $\kk$.  Recall that the subgroup
$\mathfrak{B}\subseteq \gl(n,\kk)$ of all lower triangular invertible
$n\times n$ matrices is called the \emph{Borel group}.  For any integer
$0\leq\ell<n$, we introduce the \emph{$\ell$-Borel group} as the subgroup
$\mathfrak{B}_{\ell}\leq\mathfrak{B}$ consisting of all matrices
$A\in\mathfrak{B}$ such that for $i<n-\ell$ we have $a_{ii}=1$ and
$a_{ij}=0$ for $i\neq j$ (obviously, $\mathfrak{B}_{n-1}=\mathfrak{B}$).

\begin{definition}\label{def:bfpos}
  The monomial ideal $\J\lhd\P$ is \emph{$\ell$-Borel-fixed} for an integer
  $0\leq\ell<n$, if $A\cdot\J=\J$ for all $A\in\mathfrak{B}_{\ell}$.  The
  polynomial ideal $\I\lhd\P$ is in \emph{$\ell$-Borel-fixed position} for
  a term order $\prec$, if $\lt{\I}$ is $\ell$-Borel-fixed.  If $\ell=n-1$,
  then we drop the suffix $\ell$ and simply speak of a \emph{Borel-fixed}
  ideal and position, respectively.
\end{definition}

It is a classical result \citep[e.\,g.\ ][Prop.~4.2.4]{hh:monid} that any
strongly stable ideal is Borel-fixed (which implies that we deal indeed
with a generic position).  In characteristic zero the converse is true,
too.  If the characteristic is a positive prime $p$, then
$\lspan{x_{1}^{p},x_{2}^{p}}\lhd\kk[x_{1},x_{2}]$ is a simple example of a
Borel-fixed ideal which is not strongly stable.  However, it is easy to see
that in any characteristic a Borel-fixed ideal is quasi-stable
\citep[Cor.~2]{bs:reverse}.  \cite[Prop.~9]{wms:rednum} generalised these
assertions: in characteristic zero a monomial ideal $\J$ is
$\ell$-Borel-fixed for some integer $0\leq\ell<n$, if and only if $\J$ is
strongly $\ell$-stable.  In positive characteristic only one direction is
true.

%
%
\subsection{GIN Position}
\label{sec:ginpos}

A classical result proven first by \cite{gall:weier} in characteristic zero
and then later by \cite{bs:reverse} in arbitrary characteristic asserts
that almost all linear changes of coordinates applied to an ideal
$\I\unlhd\P$ lead to the same leading ideal which is then called the
\emph{generic initial ideal} $\gin{\I}$ of $\I$.  Again by
\cite{ag:divstab} in characteristic zero and by \cite{bs:reverse} in
arbitrary characteristic, it was shown that $\gin{\I}$ is always
Borel-fixed.

\begin{theorem}[Galligo]\label{thm:gall}
  For any ideal $\I\unlhd\P$, there exists a nonempty Zariski open subset
  $\U\subseteq\mathrm{GL}(n,\kk)$ such that $\lt{(A\cdot\I)}=\lt{(B\cdot\I)}$
  for all $A,B \in \U$.
\end{theorem}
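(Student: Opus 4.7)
My plan is to deduce the theorem directly from Remark \ref{rem:finlt}, using standard constructibility arguments on the group variety $\gl(n,\kk)$.

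First I would apply Remark \ref{rem:finlt} to the parametric ideal $\tilde{\I} = A \cdot \I \unlhd \kk[a_{ij}][\xv]$, imposing the opening condition $\det(A)\neq 0$. The resulting Gr\"obner system $\bigl\{(\tilde{G}_i, N_i, W_i)\bigr\}_{i=1}^\ell$ partitions $\gl(n,\kk)$ into finitely many locally closed strata $U_i = \V(N_i)\setminus\V(\prod_{f\in W_i} f)$. By the additional property assumed after Definition \ref{def:grobsys}, every specialization belonging to a fixed triple produces the same leading terms, so only finitely many monomial ideals $\J_1,\dots,\J_m$ arise among the leading ideals $\lt{(A\cdot\I)}$ as $A$ ranges over $\gl(n,\kk)$.

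Next I would group the strata according to their leading ideal, setting $S_k = \{A\in\gl(n,\kk) : \lt{(A\cdot\I)} = \J_k\}$. Each $S_k$ is a finite union of the $U_i$, hence constructible, and the $S_k$ partition $\gl(n,\kk)$. Since $\gl(n,\kk)$ is the complement of the hypersurface $\{\det(A)=0\}$ in $\kk^{n^{2}}$, it is an irreducible variety.

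I would then invoke the standard fact that an irreducible variety cannot be written as a finite union of proper closed subsets. Applied to the closure decomposition $\gl(n,\kk) = \bigcup_{k=1}^m \overline{S_k}$, this forces $\overline{S_{k_0}} = \gl(n,\kk)$ for some index $k_0$. Being constructible and dense in the irreducible variety $\gl(n,\kk)$, the set $S_{k_0}$ must contain a nonempty Zariski open subset $\U$, and on this $\U$ the leading ideal is constantly $\J_{k_0}$. Setting $\gin{\I} := \J_{k_0}$ yields the theorem.

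The pivotal step, and the one that deserves most care, is the transition from finitely many constructible strata to a single open stratum. It rests on two ingredients that must fit together cleanly: the extra property of practical Gr\"obner system algorithms that each triple determines a unique leading ideal (not merely a Gr\"obner basis), and the irreducibility of $\gl(n,\kk)$ (so that finiteness of the partition forces one piece to be dense and hence, by constructibility, to contain an open subset). Everything else is bookkeeping; no genuine computation is needed beyond what Remark \ref{rem:finlt} already supplies.
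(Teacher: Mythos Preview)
The paper does not actually prove Theorem~\ref{thm:gall}; it is stated as a classical result and attributed to Galligo (characteristic zero) and Bayer--Stillman (arbitrary characteristic). So your argument is not being compared against a proof in the paper but against the classical literature.

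Your argument is essentially correct and takes a genuinely different route than the classical ones. The standard proofs (e.g.\ the exposition in \cite{mlg:gin}, which the paper later invokes in the proof of Proposition~\ref{prop:bm_gin}) work degree by degree, using exterior powers and a Pl\"ucker-coordinate argument to show that in each degree the lexicographically largest possible set of leading terms is attained on a Zariski open set of transformations. Your approach instead bootstraps from the existence of Gr\"obner systems (Remark~\ref{rem:finlt}): finiteness of the possible leading ideals, constructibility of each locus, and irreducibility of $\gl(n,\kk)$ together force one locus to contain a nonempty open set. This is more elementary in that it avoids the multilinear algebra, but it trades that for the machinery of comprehensive Gr\"obner bases. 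Note also that the classical proof yields the stronger maximality statement about the spaces $\V_{q,m}(A)$ that the paper exploits in Proposition~\ref{prop:bm_gin}; your argument gives existence of $\gin{\I}$ but not that extra information.

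Two small points. First, the strata $U_i$ coming from a Gr\"obner system \emph{cover} $\gl(n,\kk)$ but need not partition it; however, your sets $S_k$ (fibres of the map $A\mapsto\lt{(A\cdot\I)}$) do partition, and each is constructible as a finite union of the $U_i$, so the argument goes through unchanged. Second, the irreducibility step requires $\gl(n,\kk)$ to be irreducible as a topological space of $\kk$-points, which needs $\kk$ infinite; this is consistent with the standing assumptions elsewhere in the paper but should be made explicit.
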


\begin{definition}\label{def:ginpos}
  The ideal $\I\unlhd\P$ is in \emph{GIN position} (for a term order
  $\prec$), if $\lt{\I}=\gin{\I}$.
\end{definition}

GIN position is the strongest notion of genericity that we consider in this
work.  It implies all other positions with one exception: componentwise
quasi-stability is an independent property (see Example \ref{ex:23} below).
While the GIN position is very popular among theorists, as in it $\I$ and
$\lt{\I}$ share many invariants, it should be noted that neither a simple
effective criterion nor a simple deterministic algorithm is known for it.
As far as we know, all computer algebra systems use a probabilistic
approach to determine $\gin{\I}$ by applying simply one or more random
transformations.  Such a computation may become quite expensive, as it
inevitably leads to dense polynomials for which a Gr\"obner basis must be
computed.  Furthermore, it cannot be easily tested whether or not the
result really is $\gin{\I}$.

If one uses a parametric coordinate transformation instead of a random one,
the computation becomes of course even more expensive, but the result is
guaranteed to be the correct generic initial ideal.  Let $A=(a_{ij})$ be an
$n\times n$ parametric matrix and $\kk(a_{ij})$ the field of fractions of
$\kk[a_{ij}]$.  We consider the ideal
$\hat{\I}=A\cdot\I\unlhd\kk(a_{ij})[x_1,\ldots,x_n]$.  It follows from
Theorem \ref{thm:gall} that $\lt{\hat{\I}}=\gin{\I}$.  Hence a Gr\"obner
basis of $\hat{\I}$ yields immediately $\gin{\I}$.

Alternatively, we consider $\tilde{\I}=\hat{\I}\cap\tilde{\P}$ where
$\tilde{\P}=\kk[a_{ij}][x_1,\ldots,x_n]$ and compute a Gr\"obner system for
$\tilde{\I}\unlhd\tilde{\P}$ imposing at the start the condition that
$\det{(A)}\neq0$.  Again by Theorem \ref{thm:gall}, the generic branch (the
only one for which the set $N_{i}$ is empty) yields as leading ideal
$\gin{\I}$.  Note that for finding the generic branch it is not necessary
to determine the whole Gr\"obner system.  It suffices to follow at each
case distinction the ``not equal zero'' branch.  This is equivalent to a
fraction-free form of computing a Gr\"obner basis of $\hat{\I}$ and in
practise probably more efficient.

Obviously, this approach requires to work with $n^{2}$ parameters.  If one
is interested in the generic initial ideal for the degree reverse lexicographic
term order, then a slight optimisation is possible.  In this case, it
suffices to take for $A$ a lower triangular matrix with all diagonal
entries equal to $1$ and thus one can reduce the number of parameters to
$n(n-1)/2$.  Indeed, any regular matrix $A$ can be written as a
product\footnote{Classically, one uses decompositions $A=LDU$.  But such a
  decomposition for the inverse $A^{-1}$ yields immediately a decomposition
  of our form for $A$.}  $A=UDL$ where $L$ is a lower triangular, $U$ an
upper triangular and $D$ a diagonal matrix and where both $L$ and $U$ have
only ones on the diagonal.  While the transformation induced by $D$ does
trivially not change the leading term of any polynomial for arbitrary term
orders, it follows from the definition of the degree reverse lexicographic order
that here also the transformation induced by $U$ does not affect any
leading term.  Hence we find that $\lt{(A\cdot\I)}=\lt{(L\cdot\I)}$ and it
suffices to work with the matrix $L$.

%
%
\subsection{$\beta$-Maximal Position}
\label{sec:bmpos}

Given a homogeneous ideal $\I\lhd\P$ and a degree $q$ with $\I_{q}\neq0$,
we denote by $\B_q(\I)=(\lt{\I})_q\cap\TT$ the monomial $\kk$-linear basis
of $(\lt{\I})_q$.  We set $\bq{q}{k}(\I)=\#\{t\in\B_q(\I)\mid\mt(t)=k\}$.
Then the \emph{$\beta$-vector} of $\I$ at degree $q$ is defined as
\begin{equation}\label{eq:bv}
  \beta_q(\I)=\bigl(\bq{q}{1}(\I),\ldots,\bq{q}{n}(\I)\bigr)
  \in\NN_{0}^{n}\;.
\end{equation}

\begin{remark}\label{rem:hilbq}
  The $\beta$-vector provides a convenient way to compare the asymptotic
  behaviour of Hilbert polynomials.  We call the set
  \begin{displaymath}
    \lspan{\B_q(\I)}_{P}=\bigoplus_{t\in\B_q(\I)}\kk[\mult{t}]\cdot t
    \subseteq\lspan{\B_q(\I)}
  \end{displaymath}
  the \emph{Pommaret span} of $\B_q(\I)$ and define
  $h_{\I,q}^{P}(s)=\dim_{\kk}{\bigl(\lspan{\B_q(\I)}_{P}\bigr)_{s}}$.  If
  $h_{\I,q}$ denotes the Hilbert function of the monomial ideal
  $\lspan{\B_q(\I)}$, then obviously $h_{\I,q}^{P}(s)\leq h_{\I,q}(s)$ for
  all degrees $s$ and we have $h_{\I,q}^{P}=h_{\I,q}$, if and only if
  $\B_q(\I)$ is the Pommaret basis of the ideal it generates.
  \cite[Prop.~8.2.6]{wms:invol} showed that\footnote{Strictly speaking,
    \cite[Prop.~8.2.6]{wms:invol} covered a slightly different situation
    than we consider here.  In particular, it is there assumed that one
    deals with a Pommaret basis.  However, the adaption to our case here is
    trivial.}
  \begin{equation}\label{eq:hilbq}
    h_{\I,q}^{P}(q+r)=
    \sum_{i=0}^{n-1}\Bigl(
        \sum_{k=i}^{n-1}\frac{s_{k-i}^{(k)}(0)}{k!}\bq{q}{n-k}(\I)\Bigr)r^{i}
  \end{equation}
  where the modified Stirling numbers $s_{i}^{(j)}(\ell)$ are positive
  integers (see \citep[App.~A.4]{wms:invol} for more details).  Thus
  $h_{\I,q}^{P}$ is polynomial beyond degree $q$.  If we write it as
  $\sum_{i}h_{i}r^{i}$, then its coefficient $h_{n-i}$ is a linear
  combination of $\bq{q}{1},\dots,\bq{q}{i}$ with positive coefficients.
  This simple observation entails that if $\I$ and $\J$ are two homogeneous
  ideals such that $\beta_q(\I)\prec_{\mathrm{lex}}\beta_q(\J)$, then
  $h_{\J,q}^{P}(s)<h_{\I,q}^{P}(s)$ for all sufficiently large degrees $s$,
  and motivates the following novel generic position.
\end{remark}

\begin{definition}\label{def:bmpos}
  The homogeneous ideal $\I\lhd\P$ is in \emph{$\beta$-maximal position}
  (for a given term order $\prec$), if we have for all matrices
  $A\in\gl(n,\kk)$ and all degrees $q\geq0$ with $\I_{q}\neq0$ the
  inequality
  \begin{equation}\label{eq:bmax}
    \beta_q(\I)\succeq_{\mathrm{lex}} \beta_q(A\cdot \I)\;.
  \end{equation}
\end{definition}

We will now first show that $\beta$-maximality implies quasi-stability and
then that the generic initial ideal has at all degrees the same
$\beta$-vector as an ideal in $\beta$-maximal position (implying that
$\beta$-maximality is a generic position).  In both cases, the converse
statement is not true.  In particular, $\beta$-maximal position does not
imply GIN position (see for instance Example \ref{ex:21} below).

\begin{proposition}\label{prop:bm_qs}
  The polynomial ideal $\I\lhd\P$ is in quasi-stable position, if and only
  if the following inequality holds for all matrices $A\in\gl(n,\kk)$ and
  all degrees $q\geq\reg{\lt{\I}}$:
  \begin{equation}\label{eq:bqe}
    \beta_q(\I)\succeq_\lex \beta_q(A\cdot\I)
  \end{equation}
\end{proposition}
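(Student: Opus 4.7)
The plan is to translate the $\beta$-vector inequality \eqref{eq:bqe} into a comparison of Pommaret Hilbert functions via Remark~\ref{rem:hilbq}, and then detect quasi-stability through characterisations (vi) and (vii) of Proposition~\ref{prop:quasistab_equiv}. The key preliminary observation is that the component sum $\sum_{k=1}^{n}\bq{q}{k}(\I)=\dim_{\kk}\I_{q}$ is $\gl(n,\kk)$-invariant, so $\beta_{q}(\I)$ and $\beta_{q}(A\cdot\I)$ always share the same total. Combined with the positivity structure of \eqref{eq:hilbq}---the coefficient of $r^{n-i}$ in $h^{P}_{\cdot,q}(q+r)$ is a positive linear combination of $\bq{q}{1},\dots,\bq{q}{i}$---this upgrades Remark~\ref{rem:hilbq} to a full trichotomy: the cases $\beta_{q}(\I)\prec_{\lex}$, $=$, $\succ_{\lex}\beta_{q}(A\cdot\I)$ correspond exactly to the analogous strict or equal comparison between $h^{P}_{\I,q}(s)$ and $h^{P}_{A\cdot\I,q}(s)$ for all sufficiently large $s$.

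For the forward direction I will assume $\I$ is quasi-stable with $q\geq\reg\lt\I$. Proposition~\ref{prop:quasistab_equiv}(vii) provides a finite Pommaret basis $\H$ of $\lt\I$, all of whose elements have degree at most $q$. The first key step is to verify that $\B_{q}(\I)$ itself is a Pommaret basis of $\lspan{\B_{q}(\I)}=(\lt\I)_{\geq q}$. Given a monomial $m=h x^{\beta}\in(\lt\I)_{\geq q}$ in its canonical Pommaret decomposition, I would refine by filling $\beta$ from the smallest index of $\mult{h}$ upward: split $\beta=\beta'+\nu$ with $|\beta'|=q-\deg h$ and $\max\supp\beta'\leq\min\supp\nu$. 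A short check will show that $t=h x^{\beta'}$ lies in $\B_{q}(\I)$, that $\nu$ is supported in $\mult{t}=\{x_{\mt(t)},\dots,x_{n}\}$, and that this refinement is uniquely determined. By the iff in Remark~\ref{rem:hilbq} this yields $h^{P}_{\I,q}(s)=h_{\I,q}(s)=h_{\I}(s)$ for all $s\geq q$. For any $A$ one has the pointwise estimate $h^{P}_{A\cdot\I,q}(s)\leq h_{A\cdot\I,q}(s)\leq h_{A\cdot\I}(s)=h_{\I}(s)$, and the trichotomy then forces $\beta_{q}(\I)\succeq_{\lex}\beta_{q}(A\cdot\I)$.

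For the converse I would argue by contrapositive. If $\I$ is not quasi-stable, I first observe that $\I_{\geq q}$ is not quasi-stable either: $\I$ and $\I_{\geq q}$ share the same saturation, hence agree on every iterated colon $(-):x_{j}^{\infty}$, and have the same Krull dimension, so characterisation (vi) of Proposition~\ref{prop:quasistab_equiv} holds for both or neither. By (vii), $\lspan{\B_{q}(\I)}$ has no finite Pommaret basis, so $\B_{q}(\I)$ is not a Pommaret basis of its span; Remark~\ref{rem:hilbq} then forces $h^{P}_{\I,q}$ and $h_{\I,q}$ to be distinct polynomials on $\{s\geq q\}$, and since $h^{P}\leq h$ pointwise this strictifies to $h^{P}_{\I,q}(s)<h_{\I}(s)$ for all sufficiently large $s$. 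Next I choose $A\in\gl(n,\kk)$ with $\lt(A\cdot\I)=\gin\I$ (Theorem~\ref{thm:gall}); $\gin\I$ is Borel-fixed and hence quasi-stable, so the forward step applied to $A\cdot\I$---valid for any $q$ at least as large as both $\reg\lt\I$ and $\reg\gin\I$, which is still $\geq\reg\lt\I$---yields $h^{P}_{A\cdot\I,q}(s)=h_{\I}(s)$ asymptotically. Combining gives $h^{P}_{\I,q}(s)<h^{P}_{A\cdot\I,q}(s)$ in high degree, and the trichotomy delivers $\beta_{q}(\I)\prec_{\lex}\beta_{q}(A\cdot\I)$, contradicting \eqref{eq:bqe}.

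The main obstacle will be the combinatorial bookkeeping in the forward step: proving rigorously that the ``fill from the smallest index'' refinement produces a well-defined, unique Pommaret decomposition with respect to $\B_{q}(\I)$. The subtlety is that the multiplicative class shrinks under the passage from $h$ to $t=h x^{\beta'}$ (in general $\mt(t)\geq\mt(h)$), so not every arithmetic refinement is admissible; one must track $\mt(t)$ carefully against $\supp\nu$ to ensure involutivity. The truncation argument in the converse is comparatively routine once characterisation (vi) is in hand.
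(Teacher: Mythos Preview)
Your proposal is correct and follows essentially the same logical skeleton as the paper's proof: both convert the $\beta$-vector inequality into a comparison of Pommaret Hilbert functions via Remark~\ref{rem:hilbq}, and both reduce to the fact that $\B_q(\I)$ is a Pommaret basis of $(\lt\I)_{\geq q}$ precisely when $\I$ is in quasi-stable position.

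The only notable difference is in the supporting lemmas. The paper outsources both key steps to \cite{wms:comb2}: Proposition~9.6 there gives directly that for $q\geq\reg\I$ the truncation $\I_{\geq q}$ is in \emph{stable} position (whence $\B_q(\I)$ is a Pommaret basis), and Lemma~2.2 there handles the passage between $\I$ and $\I_{\geq q}$. You instead stay within this paper, using characterisations (vi) and (vii) of Proposition~\ref{prop:quasistab_equiv} for the truncation step and giving a hands-on ``fill from the smallest multiplicative index'' refinement to exhibit the Pommaret decomposition of $(\lt\I)_{\geq q}$. Your route is more self-contained but requires the combinatorial bookkeeping you flag; the paper's route is shorter but relies on external citations. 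One small remark on your converse: the reason the forward step applies to $A\cdot\I$ for \emph{every} $q\geq\reg\lt\I$ (not just $q\geq\max$) is that $\reg\gin\I=\reg\I\leq\reg\lt\I$ by Bayer--Stillman, which you should make explicit.
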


\begin{proof}
  Let us assume first that $\I$ is in quasi-stable position.  Then for any
  degree $q\geq\reg{\I}=\reg{\lt{\I}}$ the truncation $\I_{\geq q}$ is even
  in stable position \citep[Prop.~9.6]{wms:comb2}.  Hence $\B_q(\I)$ is a
  Pommaret basis of the ideal it generates and we find that
  $h_{\I,q}^{P}=h_{\I,q}$ which immediately implies \eqref{eq:bqe} by
  Remark \ref{rem:hilbq}.

  For the converse, note that \eqref{eq:bqe} implies
  $h_{\I,q}^{P}=h_{\I,q}$, since there always exists a matrix $A$ such that
  $A\cdot\I$ is in quasi-stable position.  These Hilbert functions coincide
  beyond degree $q$, if and only if $\B_q(\I)$ generates a stable ideal and
  thus if $\I_{\geq q}$ is in stable position \citep[Prop.~9.6]{wms:comb2}.
  But then the original ideal $\I$ is in quasi-stable position
  \citep[Lemma~2.2]{wms:comb2}.\qed
\end{proof}

\begin{corollary}\label{cor:bmqs}
  Any polynomial ideal $\I\lhd\P$ in $\beta$-maximal position is in
  quasi-stable position, too.
\end{corollary}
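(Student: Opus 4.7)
The plan is extremely short: this corollary is essentially a direct specialisation of the characterisation just proved in Proposition \ref{prop:bm_qs}. The key observation is that $\beta$-maximal position requires the lex-inequality $\beta_q(\I)\succeq_\lex\beta_q(A\cdot\I)$ to hold for \emph{every} degree $q\geq 0$ with $\I_q\neq 0$ and every $A\in\gl(n,\kk)$, while Proposition \ref{prop:bm_qs} only needs the same inequality in the tail range $q\geq\reg{\lt{\I}}$. Since a nonzero homogeneous ideal satisfies $\I_q\neq 0$ for all sufficiently large $q$ (in particular for all $q\geq\reg{\lt{\I}}$), the hypotheses of Proposition \ref{prop:bm_qs} are automatically inherited from $\beta$-maximality.

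Concretely, I would proceed in two sentences. First, assume $\I$ is in $\beta$-maximal position and fix an arbitrary $A\in\gl(n,\kk)$ and an arbitrary degree $q\geq\reg{\lt{\I}}$; then $\I_q\neq 0$, so the defining inequality \eqref{eq:bmax} of $\beta$-maximal position yields $\beta_q(\I)\succeq_\lex\beta_q(A\cdot\I)$. Second, since this is precisely condition \eqref{eq:bqe} of Proposition \ref{prop:bm_qs}, that proposition immediately gives that $\I$ is in quasi-stable position.

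There is no real obstacle here; the only thing one has to be mildly careful about is the edge case where $\I_q$ could vanish, but in the range $q\geq\reg{\lt{\I}}$ this cannot happen for a nonzero homogeneous ideal, so the quantifier in Definition \ref{def:bmpos} specialises without loss. Nothing more is needed, and no new computation has to be carried out beyond quoting Proposition \ref{prop:bm_qs}.
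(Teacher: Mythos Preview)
Your proposal is correct and matches the paper's approach: the paper gives no explicit proof for this corollary, treating it as an immediate consequence of Proposition~\ref{prop:bm_qs}, which is precisely the argument you spell out. Your brief justification that $\I_q\neq 0$ for $q\geq\reg{\lt{\I}}$ is the only nontrivial observation needed, and it is standard.
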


\begin{proposition}\label{prop:bm_gin}
  If the polynomial ideal $\I\lhd\P$ is in GIN position, then $\I$ is also
  in $\beta$-maximal position.  In particular, if $\I$ is in
  $\beta$-maximal position, then $\beta_q(\gin{\I})=\beta_q(\I)$ for all
  degrees $q$ with $\I_{q}\neq0$.
\end{proposition}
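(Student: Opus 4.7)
My plan is to deduce both assertions from a semi-continuity argument tailored to the degree reverse lexicographic order. The crucial combinatorial input is the following identity: for every homogeneous ideal $\I$, every degree $q$ and every $0\leq k<n$,
\begin{equation*}
  \bq{q}{>k}(\I):=\sum_{j>k}\bq{q}{j}(\I)
  =\dim_{\kk}\bigl(\I_{q}\cap\langle x_{k+1},\dots,x_{n}\rangle_{q}\bigr).
\end{equation*}
The inclusion ``$\supseteq$'' for the associated subspaces is order-free. For ``$\subseteq$'' one exploits the distinguishing property of the degree reverse lexicographic order that $\lt{f}\in\langle x_{k+1},\dots,x_{n}\rangle$ already forces $f\in\langle x_{k+1},\dots,x_{n}\rangle$ for any homogeneous $f$; a direct inspection of exponent vectors using the reverse-lexicographic tie-break shows that every other monomial of $f$ must again involve some $x_{j}$ with $j>k$.

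Next, I would observe that the function $A\mapsto\dim_{\kk}\bigl((A\cdot\I)_{q}\cap\langle x_{k+1},\dots,x_{n}\rangle_{q}\bigr)$ on $\gl(n,\kk)$ is upper semi-continuous. Indeed, since $\dim_{\kk}(A\cdot\I)_{q}$ and $\dim_{\kk}\langle x_{k+1},\dots,x_{n}\rangle_{q}$ are independent of $A$, this quantity equals a fixed constant minus $\dim_{\kk}\bigl((A\cdot\I)_{q}+\langle x_{k+1},\dots,x_{n}\rangle_{q}\bigr)$, and the latter is lower semi-continuous because it equals the rank of a linear map whose matrix depends polynomially on the entries of $A$. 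By Theorem~\ref{thm:gall} there is a nonempty Zariski open subset $\U\subseteq\gl(n,\kk)$ on which $\lt{(A\cdot\I)}=\gin{\I}$, so the function takes the constant value $\bq{q}{>k}(\gin{\I})$ on $\U$. Combining both observations yields
\begin{equation*}
  \bq{q}{>k}(\gin{\I})\leq\bq{q}{>k}(A\cdot\I)
  \qquad\text{for every }A\in\gl(n,\kk)\text{ and every }0\leq k<n.
\end{equation*}

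Since $\bq{q}{\leq n}(\cdot)=\dim_{\kk}\I_{q}$ is a coordinate-independent constant, these inequalities are equivalent to $\bq{q}{\leq k}(\gin{\I})\geq\bq{q}{\leq k}(A\cdot\I)$ for every $k$. A short combinatorial check shows that a termwise $\geq$ on the partial-sum vectors of two nonnegative tuples with the same total entails a lex $\succeq$ on the original tuples, so $\beta_{q}(\gin{\I})\succeq_{\lex}\beta_{q}(A\cdot\I)$ for every $A\in\gl(n,\kk)$. If $\I$ is in GIN position, then $\lt{\I}=\gin{\I}$, hence $\beta_{q}(\I)=\beta_{q}(\gin{\I})$, and the first assertion follows directly from the definition of $\beta$-maximality. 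For the ``in particular'' part, applying $\beta$-maximality of $\I$ to a matrix $A\in\U$ gives $\beta_{q}(\I)\succeq_{\lex}\beta_{q}(A\cdot\I)=\beta_{q}(\gin{\I})$, while the displayed inequality specialised to $A$ equal to the identity yields the reverse relation $\beta_{q}(\gin{\I})\succeq_{\lex}\beta_{q}(\I)$; together these force equality.

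The main technical obstacle is the degree-reverse-lexicographic identity for $\bq{q}{>k}$ established in the first paragraph; once that is available, everything else is formal semi-continuity combined with the translation between lex order on $\beta$-vectors and termwise comparison of their partial sums.
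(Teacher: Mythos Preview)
Your proof is correct. Both your argument and the paper's ultimately rest on a semi-continuity principle, but the implementations differ. The paper quotes a fact extracted from Green's proof of Galligo's theorem: for a generic matrix $A$ the dimensions $\dim_{\kk}\bigl(\lspan{\lt{(A\cdot\I)_{q}}}_{\kk}\cap\lspan{t_{1},\dots,t_{m}}_{\kk}\bigr)$ are simultaneously maximal for \emph{every} initial segment $t_{1},\dots,t_{m}$ of the monomials in degree $q$; this yields the termwise comparison $\tilde{t}_{i}\succeq\hat{t}_{i}$, from which $\mt(\tilde{t}_{i})\leq\mt(\hat{t}_{i})$ and hence the lex inequality on $\beta$-vectors.

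You instead isolate only the initial segments of the form $\lspan{x_{k+1},\dots,x_{n}}_{q}$, prove the degrevlex identity $\bq{q}{>k}(\I)=\dim_{\kk}\bigl(\I_{q}\cap\lspan{x_{k+1},\dots,x_{n}}_{q}\bigr)$ from scratch, and then run the upper semi-continuity argument yourself rather than importing it from Green. The payoff is a self-contained proof that uses only the statement of Theorem~\ref{thm:gall} and elementary linear algebra; the paper's version is shorter but relies on a nontrivial auxiliary result buried inside another proof. Your passage from partial-sum dominance to lex dominance is also a clean replacement for the paper's termwise $\mt$-comparison.
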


\begin{proof}
  We exploit a result derived in the proof of Galligo's Theorem
  \ref{thm:gall} presented by \cite[Thm.~1.27]{mlg:gin}.  For a given
  degree $q$, let the terms $\{t_{1},\dots,t_{s_{q}}\}$ be a $\kk$-basis of
  $\P_{q}$ ordered according to the degree reverse lexicographic order:
  $t_{1}\succ t_{2}\succ\cdots\succ t_{s_{q}}$.  Then there exists a
  Zariski open subset $\U\subseteq\gl(n,\kk)$ such that for all matrices
  $A\in\U$, all degrees $q\geq0$ and all indices $m\leq s_{q}$ the
  dimension of the $\kk$-linear space
  \begin{displaymath}
    \V_{q,m}(A)=\lspan{\lt{(A\cdot\I)_{q}}}_{\kk}\cap 
               \lspan{t_{1},\dots,t_{m}}_{\kk}
  \end{displaymath}
  takes its maximal possible value (thus if $B\notin\U$, then for at least
  some values of $q$ and $m$ we have
  $\dim_{\kk}{\V_{q,m}(B)}<\dim_{\kk}{\V_{q,m}(A)}$ for any $A\in\U$).  Now
  let a $\kk$-basis of $\B_{q}(\gin{\I})$ be given by the terms
  $\{\tilde{t}_{1},\dots,\tilde{t}_{\ell}\}$ and of
  $\B_{q}(\hat{A}\cdot\I)$ for an arbitrarily chosen matrix
  $\hat{A}\in\gl(n,\kk)$ by $\{\hat{t}_{1},\dots,\hat{t}_{\ell}\}$,
  respectively.  In both cases, we assume again that the bases are ordered
  by the degree reverse lexicographic order.  Then the above maximality
  condition implies that $\tilde{t}_{i}\succ\hat{t}_{i}$ for all
  $1\leq i\leq\ell$.  By definition of the degree reverse lexicographic
  order, we thus find that $\mt{(\tilde{t}_{i})}\leq\mt{(\hat{t}_{i})}$ for
  all indices $i$ which is equivalent to
  $\beta_q(\gin{\I})\succeq_\lex \beta_q(\hat{A}\cdot\I)$.\qed
\end{proof}

\begin{remark}\label{rem:bmtest}
  In principle, these results provide us with a deterministic test for
  $\beta$-maximality.  We first check whether or not we are in a
  quasi-stable position.  If this is not the case, the position cannot be
  $\beta$-maximal by Corollary \ref{cor:bmqs}.  Otherwise, we determine
  $\gin{\I}$ deterministically (as discussed in Section \ref{sec:ginpos})
  and then it suffices by Propositions \ref{prop:bm_qs} and
  \ref{prop:bm_gin} to compare the $\beta$-vectors $\beta_{q}(\I)$ and
  $\beta_{q}(\gin{\I})$ for the finitely many degrees $0\leq q<\reg{\I}$.
  Obviously, such a test is rather expensive.  So far, no deterministic
  algorithm for finding a $\beta$-maximal position is known.  One can only
  apply random transformations and then perform the above described check.

  The ideal
  $\I_{1}=\lspan{x_{1}^{2},x_{1}x_{2}+x_{2}^{2},x_{1}x_{3}}\lhd
  \kk[x_{1},x_{2},x_{3}]$ was already considered by
  \cite[Ex.~1.28]{mlg:gin} as an example where the leading ideal is
  strongly stable but nevertheless not the generic initial ideal.  Indeed
  one finds
  $\lt{\I_{1}}=\lspan{x_{1}^{2},x_{1}x_{2},x_{1}x_{3},x_{2}^{3},x_{2}^{2}x_{3}}$,
  whereas
  $\gin{\I_{1}}=\lspan{x_{1}^{2},x_{1}x_{2},x_{2}^{2},x_{1}x_{3}^{2}}$.  It
  is easy to see that these two monomial ideals have different
  $\beta$-vectors and hence $\I_{1}$ is not in $\beta$-maximal position.
  On the other hand, $\I_{2}=\lt{\I_{1}}$ is strongly stable which implies
  $\gin{\I_{2}}=\I_{2}$ and thus $\I_{2}$ is in $\beta$-maximal position.
  This observation shows that two ideals $\I_{1}$ and $\I_{2}$ may have the
  same leading ideal and yet behave differently with respect to
  $\beta$-maximality.  We conclude that there cannot exist a ``simple''
  deterministic algorithm---meaning an algorithm solely based on the
  analysis of leading terms like the one developed in Section
  \ref{sec:algo} for the various notions of stability---that produces a
  $\beta$-maximal position for arbitrary ideals.
\end{remark}

In the context of a Pommaret basis of $\I$, one can roughly interpret
$\beta$-maximality as a condition that generators with more multiplicative
variables should have lower degrees (note, however, that componentwise
quasi-stability admits the same rough interpretation and is nevertheless
independent of $\beta$-maximality---see Examples \ref{ex:4} and \ref{ex:5}
below).  We will now show that this observation can be related to results
by \cite[Sect.~4.3]{hh:monid} on the annihilator numbers of
graded modules.  In particular, we will prove that the genericity concept
underlying their notion of generic annihilator numbers is exactly
$\beta$-maximality.

\begin{definition}
  A linear form $y\in\P_{1}$ is called
  \emph{quasi-regular}\footnote{Following \cite{ah:almreg}, Herzog and Hibi
    use the terminology \emph{almost regular}.  However, the same concept
    was introduced under the name quasi-regularity much earlier in a rather
    unknown letter of Serre appended to \citep{gs:alg}.  Later, the same
    notion was reinvented by \cite{stc:vcm} under the name
    \emph{filter-regular}.} for the graded $\P$-module $\M$, if the graded
  module $0:_{\M}y=\{m\in\M\mid ym=0\}$ is of finite length (i.\,e.\ if
  only finitely many graded components are non-vanishing).  An ordered
  sequence $(y_{1},\dots,y_{k})\subset\P_{1}$ is \emph{quasi-regular} for
  $\M$, if $y_{i}$ is quasi-regular for $\M/\lspan{y_{1},\dots,y_{i-1}}\M$
  for $1\leq i\leq k$.
\end{definition}

In the sequel, we will concentrate for notational simplicity on the case
that $\M=\P/\I$ for a homogeneous ideal $\I$.  However, all results can be
straightforwardly extended to finitely presented modules $\M=\P^{m}/\U$
with a graded submodule $\U$.  The following result by \cite{wms:spencer2}
shows that quasi-regularity is actually just a different way to view
quasi-stability and that quasi-regular sequences of lengths up to
$n=\dim{\P}$ always exist.

\begin{proposition}[{\citep[Thm.~5.2]{wms:spencer2}}]
  The sequence $(x_{n},\dots,x_{2},x_{1})$ is quasi-regular for $\M=\P/\I$,
  if and only if $\I$ is in quasi-stable position.
\end{proposition}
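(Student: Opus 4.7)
The plan is to recast the quasi-regularity of the sequence $(x_n,\dots,x_1)$ as a family of colon/saturation equalities on the ideals $\I_j=\langle\I,x_n,\dots,x_{n-j+1}\rangle$ and then, after passage to leading ideals, to recognise this family as characterisation \emph{(v)} of Proposition \ref{prop:quasistab_equiv} applied to $\J=\lt{\I}$. For the first step, note that a graded submodule of a finitely generated graded $\P$-module is of finite length iff it is annihilated by some power of $\mf$, so $x_{n-j}$ is quasi-regular on $M_j=\P/\I_j$ iff $(\I_j:x_{n-j})/\I_j$ (and hence its stabilised union $(\I_j:x_{n-j}^\infty)/\I_j$) is of finite length. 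This in turn is equivalent to $\I_j:x_{n-j}^\infty\subseteq\I_j:\mf^\infty=\I_j^{\mathrm{sat}}$; the reverse containment being automatic, the condition becomes the equality $\I_j:x_{n-j}^\infty=\I_j^{\mathrm{sat}}$.

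Next I would transfer this to $\J=\lt{\I}$ using the degree reverse lexicographic order. Since $x_{n-j+1},\dots,x_n$ are the smallest variables, the standard compatibility of this order with quotients by trailing variables gives $\lt{\I_j}=\langle\J,x_n,\dots,x_{n-j+1}\rangle=:\J_j$; and as $x_{n-j}$ is then the minimal remaining variable, one also has $\lt{(\I_j:x_{n-j}^s)}=\J_j:x_{n-j}^s$ for every $s$, hence also for $s=\infty$. Because an ideal and its leading ideal share the same Hilbert function in each degree, the finite-length condition on $(\I_j:x_{n-j}^\infty)/\I_j$ is equivalent to the analogous condition on $(\J_j:x_{n-j}^\infty)/\J_j$, which for the monomial ideal $\J_j$ reduces to the equality $\J_j:x_{n-j}^\infty=\J_j^{\mathrm{sat}}$. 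For $0\leq j<D$ these are exactly the equalities of \emph{(v)}, so they hold for every such $j$ iff $\J$ is quasi-stable. For $j\geq D$ no additional condition is produced: under quasi-stability, Proposition \ref{prop:quasistab_equiv}(vi) supplies a pure power $x_i^{\ell_i}\in\J$ for each $1\leq i\leq n-D$, whence $\J_D$ is $\mf$-primary and every $\P/\I_j$ with $j\geq D$ is of finite length, making $x_{n-j}$ trivially quasi-regular on it.

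The technical heart is the pair of identities $\lt{\I_j}=\J_j$ and $\lt{(\I_j:x_{n-j}^s)}=\J_j:x_{n-j}^s$. Both are standard one-step facts for the degree reverse lexicographic order applied to the full ring, but here they must be iterated, after the smallest $j$ variables have already been adjoined to $\I$; I would verify by induction on $j$ that the order restricts correctly to the effective polynomial ring $\kk[x_1,\dots,x_{n-j}]$ with $x_{n-j}$ in the role of minimal variable, so that the one-variable lemmas can be reapplied in the quotient. With these commutations established and the Hilbert-function argument translating finite length from the polynomial side to the monomial side, the equivalence with \emph{(v)} closes the proof.
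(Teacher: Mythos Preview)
The paper does not prove this proposition; it is quoted from \cite[Thm.~5.2]{wms:spencer2} without argument. Your proposal is therefore not being compared against an in-paper proof, but it is a correct and self-contained derivation.

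Your reduction is sound at each step. The equivalence between quasi-regularity of $x_{n-j}$ on $\P/\I_j$ and the equality $\I_j:x_{n-j}^\infty=\I_j^{\mathrm{sat}}$ is exactly right (the passage from finite length of $(\I_j:x_{n-j})/\I_j$ to that of $(\I_j:x_{n-j}^\infty)/\I_j$ deserves one sentence: a short exact sequence argument plus Noetherianity shows each $(\I_j:x_{n-j}^s)/\I_j$ is an iterated extension of submodules of $(\I_j:x_{n-j})/\I_j$, and the chain stabilises). The Bayer--Stillman identities $\lt{\I_j}=\J_j$ and $\lt{(\I_j:x_{n-j}^s)}=\J_j:x_{n-j}^s$ for degrevlex are exactly what is needed, and your inductive reduction to the quotient ring $\kk[x_1,\dots,x_{n-j}]$ with $x_{n-j}$ as minimal variable is the standard way to iterate them. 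The Hilbert-function transfer then gives the equivalence with the monomial conditions $\J_j:x_{n-j}^\infty=\J_j^{\mathrm{sat}}$, which for $0\le j<D$ is precisely characterisation \emph{(v)} of Proposition~\ref{prop:quasistab_equiv}. Your treatment of $j\ge D$ via the pure powers in \emph{(vi)} is also correct: once $\J_D$ is $\mf$-primary both sides of the saturation equality equal $\P$, so no further condition arises.

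In short, your route through characterisation \emph{(v)} is clean and works; the only places to expand slightly in a written version are the finite-length stabilisation and the inductive justification of the iterated colon/sum identities.
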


Given a quasi-regular sequence $\yv=(y_{1},\dots,y_{n})$ of length $n$ for
the graded module $\M=\P/\I$, we introduce the graded modules
\begin{align*}
  A_{i-1}(\yv;\M) &= 0:_{\M/\lspan{y_{1},\dots,y_{i-1}}\M}y_{i}\\
  &\cong\bigl(\lspan{\I,y_{1},\dots,y_{i-1}}:y_{i}\bigr)/
        \lspan{\I,y_{1},\dots,y_{i-1}}
\end{align*}
and define the \emph{annihilator numbers} of $\M$ with respect to the
sequence $\yv$ as $\alpha_{ij}(\yv;\M)=\dim_{\kk}{A_{i-1}(\yv;\M)_{j}}$ for
all indices $0\leq i<n$ and $j\geq0$.  The definition of quasi-regularity
implies immediately that only finitely many of these numbers are non-zero.
The following result shows that the annihilator numbers simply encode how
the elements of the Pommaret basis of $\I$ for the degree reverse lexicographic
order distribute over the different degrees and the different numbers of
multiplicative variables.

\begin{theorem}\label{thm:annpb}
  Let the finite set $\H$ be the Pommaret basis of the homogeneous ideal
  $\I\lhd\P$ for the degree reverse lexicographic term order and $\M=\P/\I$.  Then
  for all admissable indices $i$, $j$
  \begin{equation}\label{eq:annij}
    \alpha_{ij}(x_{n},\dots,x_{1};\M)=
    \#\bigl\{h\in\H\mid \mt{(\lt{h})}=n-i\wedge \deg{(h)}=j+1\bigr\}\,.
  \end{equation}
\end{theorem}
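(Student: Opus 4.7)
The plan is to evaluate $\alpha_{ij}$ via the standard short exact sequence induced by multiplication by an appropriate linear variable, and to compute the resulting alternating sum of Hilbert function values using the explicit Pommaret decomposition of $\I$.

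Setting $\J_{\nu}=\langle\I,x_{n},\dots,x_{n-\nu+1}\rangle$ with $\J_{0}=\I$, the annihilator module $A_{i-1}(\yv;\M)$ fits into the four-term graded exact sequence
\begin{equation*}
  0 \to A_{i-1}(\yv;\M)_{j} \to (\P/\J_{i})_{j} \xrightarrow{\,\cdot\, x_{n-i}\,} (\P/\J_{i})_{j+1} \to (\P/\J_{i+1})_{j+1} \to 0
\end{equation*}
in each degree $j$, so that
\begin{equation*}
  \alpha_{ij} = H_{\P/\J_{i}}(j) - H_{\P/\J_{i}}(j+1) + H_{\P/\J_{i+1}}(j+1)\,.
\end{equation*}

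Since $\I$ is quasi-stable, Proposition~\ref{prop:quasistab_equiv} guarantees it has a finite Pommaret basis $\H$.  Because the degree reverse lexicographic order is used, the standard monomials of $\P/\J_{\nu}$ are precisely those monomials in $\kk[x_{1},\dots,x_{n-\nu}]$ not involutively divisible by any $\lt h$ with $\mt(\lt h)\leq n-\nu$.  Combined with the disjointness of the Pommaret cones, this yields
\begin{equation*}
  H_{\P/\J_{\nu}}(j) = \binom{j+n-\nu-1}{n-\nu-1} - \sum_{\substack{h\in\H\\ \mt(\lt h)\leq n-\nu\\ \deg h\leq j}} \binom{j-\deg h+n-\nu-\mt(\lt h)}{n-\nu-\mt(\lt h)}\,.
\end{equation*}

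Substituting into the alternating sum for $\alpha_{ij}$, the ambient binomial coefficients cancel by Pascal's identity $\binom{a+1}{b}=\binom{a}{b}+\binom{a}{b-1}$.  Splitting the remaining sum by $k=\mt(\lt h)$: generators with $k\leq n-i-1$ appear in all three $H$-terms and telescope to zero through another application of Pascal, while the boundary case $k=n-i$ contributes only to the two $\P/\J_{i}$-terms, where the alternating difference $-[\deg h\leq j]+[\deg h\leq j+1]$ reduces to the indicator $[\deg h=j+1]$.  Summing over $h$ gives precisely the asserted count $\#\{h\in\H\mid\mt(\lt h)=n-i\wedge\deg h=j+1\}$.

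The main obstacle I expect is the Hilbert function formula for $\P/\J_{\nu}$, which requires the identification $\lt\J_{\nu}=\lt\I+\langle x_{n},\dots,x_{n-\nu+1}\rangle$.  Concretely, this amounts to showing that the leading terms of those elements of $\H$ lying in $\kk[x_{1},\dots,x_{n-\nu}]$, together with the restrictions of their Pommaret cones, form a Pommaret basis of $\lt\I\cap\kk[x_{1},\dots,x_{n-\nu}]$.  This uses Proposition~\ref{prop:quasistab_equiv}(iii) (the colon characterisation of quasi-stability) and the fact that for the degree reverse lexicographic order the tails of elements of $\H$ do not introduce new leading terms under the projection $x_{n},\dots,x_{n-\nu+1}\mapsto 0$.
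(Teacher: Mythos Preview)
Your argument is correct and genuinely different from the paper's.  The paper proceeds by induction on~$i$: it observes that for the degree reverse lexicographic order the projection $\pi:\P\to\kk[x_{1},\dots,x_{n-1}]$ given by $x_{n}\mapsto0$ sends the Pommaret basis~$\H$ to the Pommaret basis of $\pi(\I)$ (dropping those $h$ with $\mt(\lt h)=n$), and then uses $\P/\langle\I,x_{n}\rangle\cong\kk[x_{1},\dots,x_{n-1}]/\pi(\I)$ to reduce to the base case $i=0$.  That case is handled directly: for $f\in(\I:x_{n})_{j}\setminus\I_{j}$ the involutive standard representation of $x_{n}f$ must involve at least one generator $h$ with $\mt(\lt h)=n$ and $\deg h=j+1$ appearing with a constant coefficient, which yields the count.

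Your route replaces this structural analysis by a Hilbert-function computation: the four-term exact sequence turns $\alpha_{ij}$ into an alternating sum, and the Pommaret cone decomposition lets you evaluate each $H_{\P/\J_{\nu}}$ explicitly, after which Pascal's identity collapses everything except the boundary contribution $k=n-i$.  Both proofs ultimately rest on the same degrevlex fact---that $\lt\J_{\nu}=\lt\I+\langle x_{n},\dots,x_{n-\nu+1}\rangle$, equivalently that the projected Pommaret basis is again a Pommaret basis---which you correctly flag as the main point to justify.  (You do not actually need Proposition~\ref{prop:quasistab_equiv}(iii) for this; the degrevlex property that $x_{n}\mid\lt g$ implies $x_{n}\mid g$ for homogeneous~$g$, iterated, is enough.)  The paper's argument is more conceptual and extends verbatim to modules; yours is more mechanical but entirely self-contained once the Hilbert function formula is in hand.
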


\begin{proof}
  Consider the projection
  $\pi:\P=\kk[x_{1},\dots,x_{n}]\rightarrow
  \tilde{\P}=\kk[x_{1},\dots,x_{n-1}]$ defined by $\pi(f)=f|_{x_{n}=0}$.
  It is easy to see that if $\H$ is the Pommaret basis of $\I$ for the
  degree reverse lexicographic order, then $\pi(\H)\setminus\{0\}$ is the Pommaret
  basis of $\pi(\I)$ for the same term order (we find $\pi(h)=0$, if and
  only if $\mt{(\lt{h})}=n$).  Because of the obvious isomorphism
  $\P/\lspan{\I,x_{n}}\cong \tilde{\P}/\pi(\I)$, it thus suffices to
  consider the case $i=0$; the assertion for all other values of the index
  $i$ follows by an easy induction.

  The case $i=0$ requires the analysis of the homogeneous polynomials
  $f\in(\I:x_{n})_{j}\setminus\I_{j}$.  For any such polynomial the product
  $x_{n}f\in\I_{j+1}$ possesses a unique involutive standard representation
  \citep[Thm.~5.4]{wms:comb1}: $x_{n}f=\sum_{h\in\H}P_{h}h$ with
  coefficients $P_{h}\in\kk[\mult{h}]$ satisfying
  $\lt{(P_{h}h)}\preceq\lt{(x_{n}f)}$.  For any generator $h\in\H$ with
  $\mt{(\lt{h})}<n$, we must have $P_{h}\in\lspan{x_{n}}$ whereas
  $\mt{(\lt{h})}=n$ entails $P_{h}\in\kk[x_{n}]$.  The assumption
  $f\notin\I_{j}$ implies that for at least one generator $h\in\H$ with
  $\mt{(\lt{h})}=n$ the coefficient $P_{h}$ is a non-vanishing constant
  (which is only possible if $\deg{h}=j+1$), as otherwise we could divide
  the above involutive standard representation by $x_{n}$ and would obtain
  a standard representation of $f$.  But this observation proves
  immediately our claim for $i=0$.\qed
\end{proof}

Exploiting properties of Pommaret bases, we obtain the following two
results of \cite[Prop.~4.3.4, Thm.~4.3.6]{hh:monid} as trivial corollaries.

\begin{corollary}\label{cor:annpb}
  Let $\I\lhd\P$ be a homogeneous ideal in quasi-stable position and set
  $\M=\P/\I$.
  \begin{description}
  \item[(i)] $\sum_{j\geq0}\alpha_{ij}(x_{n},\dots,x_{1};\M)=0$, if and
    only if $i<\depth{\I}$.
  \item[(ii)] There exists a Zariski open subset $\U\subseteq\gl(n,\kk)$
    such that for all matrices $B\in\U$ the transformed ordered sequence
    $\yv=B\xv$ is again quasi-regular and for all admissible indices $i$,
    $j$ we have the equality
    $\alpha_{ij}(y_{n},\dots,y_{1};\M)=
    \alpha_{ij}(x_{n},\dots,x_{1};\P/\gin{\I})$.
  \end{description}
\end{corollary}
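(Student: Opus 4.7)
The plan for part~\emph{(i)} is to read off the vanishing of $\sum_{j\geq0}\alpha_{ij}(x_n,\dots,x_1;\M)$ directly from Theorem~\ref{thm:annpb}: this sum equals the number of elements $h\in\H$ of the Pommaret basis whose leading term has class $\mt{(\lt{h})}=n-i$, so it vanishes exactly when the class $n-i$ does not occur in $\H$. To link this to depth I would invoke the standard Pommaret-induced minimal free resolution of $\P/\I$, whose length equals $M:=\max_{h\in\H}\mt{(\lt{h})}$; combined with the Auslander--Buchsbaum formula this yields $\depth{\I}=n-M$. The last ingredient is that, for a quasi-stable ideal, every integer from $1$ to $M$ really occurs as the class of some element of $\H$: the classes at most $n-D$ are supplied by the pure powers $x_{j}^{\ell_{j}}$ from Proposition~\ref{prop:quasistab_equiv}(vi), and the quasi-stability condition propagates classes downward to fill in the intermediate values. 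Hence ``class $n-i$ is missing'' is equivalent to $n-i>M$, i.e.\ to $i<\depth{\I}$.

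For part~\emph{(ii)} the strategy is to transport the statement to Theorem~\ref{thm:annpb} by a change of coordinates. Galligo's Theorem~\ref{thm:gall} provides a Zariski open set $\U\subseteq\gl(n,\kk)$ with $\lt{(A\cdot\I)}=\gin{\I}$ for all $A\in\U$; I would work with the (still Zariski open) subset $\U':=\{B\in\gl(n,\kk)\mid B^{-1}\in\U\}$ and, for $B\in\U'$, set $\I':=B^{-1}\cdot\I$, so that $\lt{\I'}=\gin{\I}$. Since $\gin{\I}$ is Borel-fixed and therefore quasi-stable, $\I'$ is itself in quasi-stable position. The $\kk$-algebra automorphism $\phi_{B}\colon\P\to\P$ defined by $\phi_{B}(x_{i})=y_{i}=(B\xv)_{i}$ identifies $\M=\P/\I$ with the pullback $\phi_{B}^{*}(\P/\I')$, under which the $y_{i}$-action on $\M$ becomes the $x_{i}$-action on $\P/\I'$. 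By the proposition preceding Theorem~\ref{thm:annpb} the sequence $(x_{n},\dots,x_{1})$ is quasi-regular for $\P/\I'$, so $(y_{n},\dots,y_{1})$ is quasi-regular for $\M$ and the annihilator numbers transfer: $\alpha_{ij}(y_{n},\dots,y_{1};\M)=\alpha_{ij}(x_{n},\dots,x_{1};\P/\I')$.

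The final step is to apply Theorem~\ref{thm:annpb} twice. First to $\I'$, so that $\alpha_{ij}(\xv;\P/\I')$ counts the elements of the Pommaret basis $\H'$ of $\I'$ of class $n-i$ and degree $j+1$; then to $\gin{\I}$, whose Pommaret basis is $\lt{\H'}$ with identical classes and degrees. The two counts coincide, yielding $\alpha_{ij}(y_{n},\dots,y_{1};\M)=\alpha_{ij}(x_{n},\dots,x_{1};\P/\gin{\I})$. I expect the main technical obstacle to be the bookkeeping in part~\emph{(ii)} for the pullback under $\phi_{B}$, in particular checking that $\phi_{B}^{*}(\P/\I)\cong\P/(B^{-1}\cdot\I)$ and that Zariski openness is preserved under matrix inversion; in part~\emph{(i)} the only non-routine point is verifying the contiguity of classes in the Pommaret basis of a quasi-stable ideal.
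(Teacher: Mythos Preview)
Your proposal is correct and follows the same route as the paper, whose proof is extremely terse: for (i) it cites Theorem~\ref{thm:annpb} together with the identity $\depth{\I}=n-t$ (with $t=\max_{h\in\H}\mt(\lt h)$) from \citep[Prop.~3.19]{wms:comb2}, and for (ii) it simply invokes Proposition~\ref{prop:bm_gin}. You are more explicit on both counts---correctly flagging the contiguity of Pommaret-basis classes as the one non-routine point in (i), which the paper absorbs into its citation, and in (ii) spelling out the Galligo/change-of-variables argument that the paper leaves implicit behind its reference to Proposition~\ref{prop:bm_gin}.
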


\begin{proof}
  The first assertion follows immediately from Theorem \ref{thm:annpb} and
  the fact that $\depth{\I}=n-t$ with $t$ the maximal value of
  $\mt{(\lt{h})}$ for a generator $h$ in the Pommaret basis of $\I$ for the
  degree reverse lexicographic order \citep[Prop.~3.19]{wms:comb2}.  The second
  assertion follows from Proposition \ref{prop:bm_gin}.\qed
\end{proof}

\cite[Def.~4.3.9]{hh:monid} call both a quasi-regular sequence $\yv$ as in
Corollary \ref{cor:annpb}(ii) and the corresponding annihilator numbers
\emph{generic}.  According to Proposition \ref{prop:bm_gin}, a generic
quasi-regular sequence thus defines a $\beta$-maximal position and vice
versa.  \cite[Sect.~4.3.2]{hh:monid} conclude their discussion of the
annihilator numbers by studying their relationship to the Betti numbers of
$\M$.  All these results follow again immediately from Theorem
\ref{thm:annpb} and the resolution induced by a Pommaret basis
\citep[Thm.~6.1]{wms:comb2}.  In particular, the estimate given by
\cite[Prop.~4.3.12]{hh:monid} is simply a bigraded version of the one
contained in \citep[Thm.~6.1]{wms:comb2}.

\section{Examples}
\label{sec:examples}

The results in the previous sections entail certain relations between the
above introduced generic positions.  They are depicted in the diagram in
Figure~\ref{fig:pos}.  In order to demonstrate that all positions are
indeed different, we compile a series of examples separating them (for a
field of characteristic zero).  The numbers shown in the various fields of
the diagram correspond to the numbering of the examples.  The used
abbreviations should be largly self-explanatory. ``D'' represents the
dimension $D=\dim{\I}$, thus DS denotes $D$-stable ideals and WDS weakly
$D$-stable ideals.  Similarly, ``C'' stands for componentwise and ``Q'' for
quasi.

\begin{figure}[ht]
  \centering
  \begin{tikzpicture}[scale=0.85]
    \draw[line width=1pt, color=red] (1,0) circle (1cm) node{GIN};
    \draw[line width=1pt, color=green] (1,0) circle (2cm) (-0.5,0) node{SS};
    \draw[line width=1pt, color=orange] (1,0) circle (3cm) (-1.5,0) node{S};
    \draw[line width=1pt, color=blue] (1,0) circle (4cm) (-2.5,0) node{DS};
    \draw[line width=1pt, color=gray] (-1.5,2) circle (3.7cm) (-3.5,4) node{CQS};
    \draw[line width=1pt, color=magenta] (-1,-1.5) circle (3.7cm) (-3,-4) node{$\beta$M};
    \draw[line width=1pt, color=olive] (1.5,0) circle (5.3cm) (6,0) node{WDS};
    \draw[line width=1pt, color=black] (-0.8,0) circle (6cm) (-6,0) node{QS};
    \draw[line width=1pt, color=purple] (-7,-7) rectangle (7,7)  (6,6) node[xshift=-1cm]{NP=WDQS};
    \node[fill=black,circle,text=white,scale=0.8,inner sep=0.5pt,minimum size=15pt] at (-6,6) {1} ;
    \node[fill=black,circle,text=white,scale=0.8,inner sep=0.5pt,minimum size=15pt] at (-6,1) {2} ;
    \node[fill=black,circle,text=white,scale=0.8,inner sep=0.5pt,minimum size=15pt] at (6,1) {3} ;
    \node[fill=black,circle,text=white,scale=0.8,inner sep=0.5pt,minimum size=15pt] at (-4,3) {4} ;
    \node[fill=black,circle,text=white,scale=0.8,inner sep=0.5pt,minimum size=15pt] at (-4,-3) {5} ;
    \node[fill=black,circle,text=white,scale=0.8,inner sep=0.5pt,minimum size=15pt] at (-4.1,-0.1) {6} ;
    \node[fill=black,circle,text=white,scale=0.8,inner sep=0.5pt,minimum size=15pt] at (-1,4) {7} ;
    \node[fill=black,circle,text=white,scale=0.8,inner sep=0.5pt,minimum size=15pt] at (-2,-3.3) {8} ;
    \node[fill=black,circle,text=white,scale=0.8,inner sep=0.5pt,minimum size=15pt] at (2,4.5) {9} ;
    \node[fill=black,circle,text=white,scale=0.8,inner sep=0.5pt,minimum size=15pt] at (-3.4,0) {10} ;
    \node[fill=black,circle,text=white,scale=0.8,inner sep=0.5pt,minimum size=15pt] at (4.5,0) {11} ;
    \node[fill=black,circle,text=white,scale=0.8,inner sep=0.5pt,minimum size=15pt] at (3.5,0) {12} ;
    \node[fill=black,circle,text=white,scale=0.8,inner sep=0.5pt,minimum size=15pt] at (2.5,0.5) {13} ;
    \node[fill=black,circle,text=white,scale=0.8,inner sep=0.5pt,minimum size=15pt] at (-0.8,3) {14} ;
    \node[fill=black,circle,text=white,scale=0.8,inner sep=0.5pt,minimum size=15pt] at (-1,-3) {15} ;
    \node[fill=black,circle,text=white,scale=0.8,inner sep=0.5pt,minimum size=15pt] at (-2.5,-0.5) {16} ;
    \node[fill=black,circle,text=white,scale=0.8,inner sep=0.5pt,minimum size=15pt] at (1,2.5) {17} ;
    \node[fill=black,circle,text=white,scale=0.8,inner sep=0.5pt,minimum size=15pt] at (1,-2.5) {18} ;
    \node[fill=black,circle,text=white,scale=0.8,inner sep=0.5pt,minimum size=15pt] at (-1.5,-0.5) {19} ;
    \node[fill=black,circle,text=white,scale=0.8,inner sep=0.5pt,minimum size=15pt] at (1.8,1.5) {20} ;
    \node[fill=black,circle,text=white,scale=0.8,inner sep=0.5pt,minimum size=15pt] at (1.8,-1.3) {21} ;
    \node[fill=black,circle,text=white,scale=0.8,inner sep=0.5pt,minimum size=15pt] at (-0.5,-0.5) {22} ;
    \node[fill=black,circle,text=white,scale=0.8,inner sep=0.5pt,minimum size=15pt] at (1.55,-0.5) {23} ;
    \node[fill=black,circle,text=white,scale=0.8,inner sep=0.5pt,minimum size=15pt] at (0.5,-0.5) {24} ;
  \end{tikzpicture}
  \caption{``Map of Positions''}\label{fig:pos}
\end{figure}
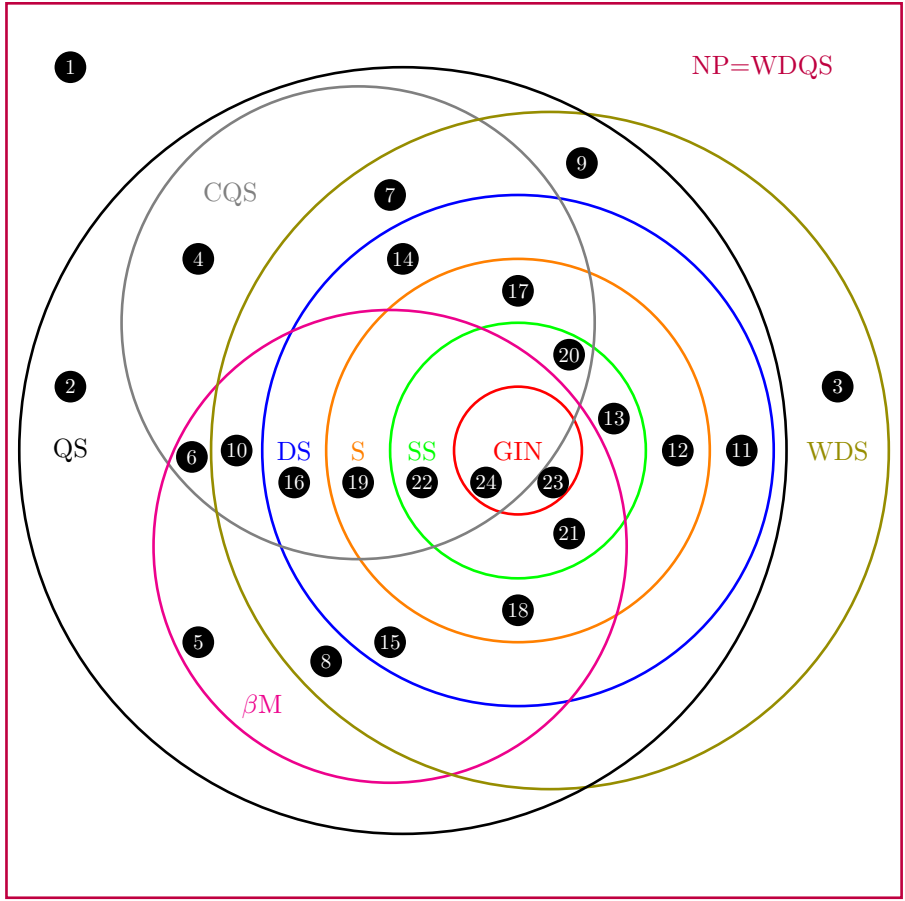

\begin{example}\label{ex:1}
  $\I=\langle x_1^2,x_2^2,x_1x_4\rangle\unlhd\kk[x_1,x_2,x_3,x_4]$ is not
  quasi-stable, because $x_3^2\frac{x_1x_4}{x_4}=x_1x_3^2\notin\I$.  $D=2$
  and $\I$ is not weakly $D$-stable, as
  $x_2\frac{x_1x_4}{x_4}=x_1x_2\notin\I$.  Since
  $\gin{\I}=\langle x_1^2,x_1x_2,x_2^2,x_1x_3^2\rangle$, we see that $\I$
  is not in $\beta$-maximal position as
  $$\beta_2(\I)=(1,1,0,1)\prec_\lex (1,2,0,0)=\beta_2(\gin{\I})\,.$$
\end{example}

\begin{example}\label{ex:2}
  $\I=\langle x_1x_2,x_1^3\rangle\unlhd\kk[x_1,x_2]$ is quasi-stable, but
  not componentwise, as $\I_{\langle 2\rangle}=\langle x_1x_2 \rangle$ is
  not quasi-stable.  $D=1$ and $\I$ is not weakly $D$-stable, as
  $x_1\frac{x_1x_2}{x_2}=x_1^2\notin\I$.  Since
  $\gin{\I}=\langle x_1^2,x_1x_2^2\rangle$, we see that $\I$ is not in
  $\beta$-maximal position, as
  $$\beta_2(\I)=(0,1)\prec_\lex (1,0)=\beta_2(\gin{\I})\;.$$
\end{example}

\begin{example}\label{ex:3}
  $\I=\langle x_1^2,x_1x_3\rangle\unlhd\kk[x_1,x_2,x_3]$ is not
  quasi-stable, as $x_2^2\frac{x_1x_3}{x_3}=x_1x_2^2\notin\I$.  $D=2$ and
  $\I$ is not $D$-stable, as $x_2\frac{x_1x_3}{x_3}=x_1x_2\notin\I$.  Since
  $\gin{\I}=\langle x_1^2,x_1x_2\rangle$, we see that $\I$ is not in
  $\beta$-maximal position, as
  $$\beta_2(\I)=(1,0,1)\prec_\lex (1,1,0)=\beta_2(\gin{\I})\;.$$
\end{example}

\begin{example}\label{ex:4}
  For $\I=\langle x_1^2,x_2^2,x_3^2\rangle\unlhd\kk[x_1,x_2,x_3]$ we have
  $D=0$ and $\I$ is not weakly $D$-stable, as
  $x_1\frac{x_3^2}{x_3}=x_1x_3\notin\I$.
  $\gin{\I}=\langle x_1^2, x_1x_2, x_2^2, x_1x_3^2, x_2x_3^2, x_3^4\rangle$
  implies that $\I$ is not in $\beta$-maximal position, as
  $$\beta_2(\I)=(1,1,1)\prec_\lex (1,2,0)=\beta_2(\gin{\I})$$
\end{example}

\begin{example}\label{ex:5}
  For
  $\I=\langle x_1^3,x_1x_2^2+x_2^2x_3,x_2^4\rangle\unlhd\kk[x_1,x_2,x_3]$,
  we have $\lt{\I}=\langle x_1^3, x_1x_2^2,x_2^4, x_2^2x_3^3\rangle$ and
  $D=1$.  $\lt{\I}$ is not weakly $D$-stable, as
  $x_1\frac{x_1x_2^2}{x_2}=x_1^2x_2\notin\lt{\I}$, and
  $\lt{\I_{\langle 3\rangle}}=\langle x_1^3,x_1x_2^2,x_2^2x_3^3\rangle$ is
  not quasi-stable.  $\I$ is in $\beta$-maximal position, as
  $\gin{\I}=\langle x_1^3, x_1^2x_2, x_1x_2^3, x_2^4, x_1x_2^2x_3^2,
  x_1^2x_3^4\rangle$ and
  \begin{align*}
    \beta_3(\I)&=(1,1,0)=\beta_3(\gin{\I})\;, \\
    \beta_4(\I)&=(1,4,2)=\beta_4(\gin{\I})\;, \\
    \beta_5(\I)&=(1,5,8)=\beta_5(\gin{\I})\;. \\
  \end{align*}
\end{example}

\begin{example}\label{ex:6}
  For $\I=\langle x_1^2,x_2^2\rangle\unlhd\kk[x_1,x_2]$ we have $D=0$ and
  $\I$ is not weakly $D$-stable, as $x_1\frac{x_2^2}{x_2}=x_1x_2\notin\I$.
  $\gin{\I}=\langle x_1^2, x_1x_2, x_2^3\rangle$ implies that $\I$ is
  in $\beta$-maximal position, as
  $$\beta_2(\I)=(1,1)=\beta_2(\gin{\I})\;.$$
\end{example}

\begin{example}\label{ex:7}
  For
  $\I=\langle
  x_1^2,x_1x_2,x_2^2+x_3^2,x_1x_4\rangle\unlhd\kk[x_1,x_2,x_3,x_4]$
  we have $\lt{\I}=\langle x_1^2,x_1x_2, x_2^2, x_1x_4, x_1x_3^2\rangle$
  and $D=2$.  $\lt{\I}$ is not $D$-stable, as
  $x_3\frac{x_1x_4}{x_4}=x_1x_3\notin\I$.
  $\gin{\I}=\langle x_1^2, x_1x_2, x_2^2, x_1x_3, x_1x_4^2\rangle$ entails
  that $\I$ is not in $\beta$-maximal position, as
  $$\beta_2(\I)=(1,2,0,1)\prec_\lex (1,2,1,0)=\beta_2(\gin{\I})\;.$$
\end{example}

\begin{example}\label{ex:8}
  For the ideal
  $\I=\langle
  x_1^3+x_1x_3^2,x_1^2x_2+x_2x_4^2,x_1x_2^2,x_2^3,x_2^2x_3^2,x_2x_3^3
  \rangle$ ${}\unlhd\kk[x_1,x_2,x_3,x_4]$
  we have
  \begin{gather*}
    \lt{\I}=\langle x_1^3, x_1^2x_2, x_1x_2^2, x_2^3,
    x_1x_2x_3^2,x_2^2x_3^2,x_2x_3^3,x_2^2x_4^2, x_1x_2x_3x_4^2,\\
    x_2x_3^2x_4^2,x_1x_2x_4^4, x_2x_3x_4^4, x_2x_4^6\rangle
  \end{gather*}
  and $D=2$.  $\lt{\I}$ is not $D$-stable, as
  $x_3\frac{x_2^2x_4^2}{x_4}=x_2^2x_3x_4\notin\lt{\I}$.  Since
  $\lt{\I_{\langle 3\rangle}}=\langle x_1^3, x_1^2x_2, x_1x_2^2, x_2^3,
  x_1x_2x_3^2,x_2^2x_4^2, x_2x_3^2x_4^2\rangle$
  is not quasi-stable, $\I$ is not in componentwise quasi-stable position.
  \begin{gather*}
    \gin{\I}=\langle x_1^3, x_1^2x_2, x_1x_2^2, x_2^3, x_1^2x_3^2,
    x_1x_2x_3^2, x_1x_3^3, x_1^2x_3x_4, x_1x_2x_3x_4^2,\\ x_1x_3^2x_4^2,
    x_1^2x_4^3, x_1x_2x_4^4, x_1x_3x_4^4, x_1x_4^6\rangle
  \end{gather*}
  entails that $\I$ is in $\beta$-maximal position as
  \begin{align*}
    \beta_3(\I)&=(1,3,0,0)=\beta_3(\gin{\I})\;, \\
    \beta_4(\I)&=(1,4,7,5)=\beta_4(\gin{\I})\;, \\
    \beta_5(\I)&=(1,5,12,20)=\beta_5(\gin{\I})\;, \\
    \beta_6(\I)&=(1,6,18,40)=\beta_6(\gin{\I})\;. \\
  \end{align*}
\end{example}

\begin{example}\label{ex:9}
  For
  $\I=\langle
  x_1^2,x_1x_2,x_2^2,x_1x_4,x_1x_3^2\rangle\unlhd\kk[x_1,x_2,x_3,x_4]$
  we have that
  $\I_{\langle 2\rangle}=\langle x_1^2,x_1x_2,x_2^2,x_1x_4\rangle$ is not
  quasi-stable and thus $\I$ is not in componentwise quasi-stable
  position. $D=2$ and $\lt{\I}$ is not $D$-stable, as
  $x_3\frac{x_1x_4}{x_4}=x_1x_3\notin\I$.
  $\gin{\I}=\langle x_1^2, x_1x_2, x_2^2, x_1x_3, x_1x_4^2\rangle$ entails
  that $\I$ is not in $\beta$-maximal position as
  $$\beta_2(\I)=(1,2,0,1)\prec_\lex (1,2,1,0)=\beta_2(\gin{\I})\;.$$
\end{example}

\begin{example}\label{ex:10}
  Let
  $\I=\langle x_1^3,x_1^2x_2,x_1x_2^2,x_2^3,x_2^2x_3^2,x_2^2x_4^2\rangle
  \unlhd\kk[x_1,x_2,x_3,x_4]$.
  Then $D=2$ and $\I$ is not $D$-stable since
  $x_3\frac{x_2^2x_4^2}{x_4}=x_2^2x_3x_4\notin\I$.  As
  $\gin{\I}=\langle x_1^3, x_1^2x_2, x_1x_2^2, x_2^3, x_1^2x_3^2,
  x_1^2x_3x_4, x_1^2x_4^3\rangle$,
  we see that $\I$ is in $\beta$-maximal position since
  \begin{displaymath}
    \begin{array}{rcl}
      \beta_3(\I)= &(1,3,0,0) & = \beta_3(\gin\I)\,, \\
      \beta_4(\I)= &(1,4,5,5) & = \beta_4(\gin\I)\,.
    \end{array}
  \end{displaymath}
\end{example}

\begin{example}\label{ex:11}
  Let
  $\I=\langle
  x_2^2,x_1x_3,x_2x_3,x_3^2,x_1^3\rangle\unlhd\kk[x_1,x_2,x_3]$.
  Then $\I_{\langle 2\rangle}=\langle x_2^2,x_1x_3,x_2x_3,x_3^2\rangle$ is
  not quasi-stable, hence $\I$ is not in componentwise quasi-stable
  position. Furthermore, $\I$ is not stable, as
  $x_1\frac{x_2^2}{x_2}=x_1x_2\notin\I$.  Since
  $\gin{\I}=\langle x_1^2, x_1x_2, x_1x_3, x_2^2, x_2x_3^2, x_3^4\rangle$,
  we see that $\I$ is not in $\beta$-maximal position, as
  $\beta_2(\I)=(0,1,3)\prec_\lex (1,2,1)=\beta_2(\gin{\I})$.
\end{example}

\begin{example}\label{ex:12}
  Let
  $\I=\langle
  x_1^2,x_1x_2^2+x_2x_3^2,x_2^5,x_2^4x_3,x_2^3x_3^2,x_2^2x_3^3\rangle
  \unlhd\kk[x_1,x_2,x_3]$.
  Then
  $\lt{\I_{\langle 3\rangle}}=\langle x_1^3,x_1^2x_2,x_1x_2^2, x_1^2x_3,
  x_1x_2x_3^2, x_2x_3^4\rangle$
  is not quasi-stable, hence $\I$ is not in componentwise quasi-stable
  position.  Furthermore,
  \begin{displaymath}
    \lt{\I}=\langle x_1^2, x_1x_2^2, x_1x_2x_3^2, x_2^5, x_2^4x_3,
    x_2^3x_3^2, x_2^2x_3^3, x_2x_3^4\rangle
  \end{displaymath}
  is not strongly stable, as
  $x_1\frac{x_1x_2x_3^2}{x_2}=x_1^2x_3^2\notin\I$.  Since
  \begin{displaymath}
    \gin{\I}=\langle x_1^2, x_1x_2^2, x_2^4, x_2^3x_3^2, x_1x_2x_3^3,
    x_2^2x_3^3, x_1x_3^4\rangle\,,
  \end{displaymath}
  $\I$ is not in $\beta$-maximal position, as
  $\beta_4(\I)=(1,3,5)\prec_\lex (1,4,4)=\beta_4(\gin{\I})$
\end{example}

\begin{example}\label{ex:13}
  Let
  $\I=\langle x_1^2,x_1x_2+x_2x_3,x_1x_3,x_2^3,x_2^2x_3\rangle
  \unlhd\kk[x_1,x_2,x_3]$.
  Then
  $\lt{\I_{\langle 2\rangle}}=\langle x_1^2,x_1x_2,x_1x_3,x_2x_3^2\rangle$
  is not quasi-stable, hence $\I$ is not in componentwise quasi-stable
  position. Furthermore,
  $\lt{\I}=\langle x_1^2, x_1x_2, x_1x_3, x_2x_3^2, x_2^3, x_2^2x_3\rangle
  \neq \langle x_1^2, x_1x_2, x_2^2, x_1x_3^2, x_2x_3^2\rangle =\gin{\I}$
  and so we see that $\I$ is not in $\beta$-maximal position, as
  $\beta_2(\I)=(1,1,1)\prec_\lex (1,2,0)=\beta_2(\gin{\I})$.
\end{example}

\begin{example}\label{ex:14}
  Let
  $\I=\langle
  x_1^3,x_2^3,x_1x_3^2,x_2x_3^2,x_3^3\rangle\unlhd\kk[x_1,x_2,x_3]$.
  Then $\I$ is not stable, as $x_1\frac{x_2^3}{x_2}=x_1x_2^2\notin\I$.
  Since
  \begin{displaymath}
    \gin{\I}= \langle x_1^3,x_1^2x_2,x_1x_2^2,
    x_2^3,x_1^2x_3,x_1x_2x_3^2,x_2^2x_3^2, x_1x_3^4, x_2x_3^4, x_3^6
    \rangle\,,
  \end{displaymath}
  $\I$ is not in $\beta$-maximal position, as
  $\beta_3(\I)=(1,1,3)\prec_\lex (1,3,1)=\beta_3(\gin\I)$.
\end{example}

\begin{example}\label{ex:15}
  Let
  $\I=\langle x_1^3,x_1x_2^2+x_2^2x_3,x_2^4,x_1x_3^3,x_3^4,x_2x_3^3\rangle
  \unlhd\kk[x_1,x_2,x_3]$.
  Then
  $\lt\I_{\langle 3\rangle}=\langle x_1x_2^2, x_1^3, x_2^2x_3^3\rangle$ is
  not quasi-stable, hence $\I$ is not in componentwise quasi-stable
  position.
  $\lt{\I}=\langle x_1^3, x_1x_2^2, x_2^4, x_2x_3^3, x_1x_3^3,
  x_3^4\rangle$
  is not stable, as $x_1\frac{x_1x_2^2}{x_2}=x_1^2x_2\notin\lt{\I}$.  With
  \begin{displaymath}
    \gin{\I}=\langle x_1^3, x_1^2x_2, x_1x_2^3, x_2^4, x_1x_2^2x_3,
    x_2^3x_3, x_1^2x_3^2, x_1x_2x_3^3, x_2^2x_3^3, x_1x_3^4, x_2x_3^5,
    x_3^6\rangle\,,
  \end{displaymath}
  we see that $\I$ is in $\beta$-maximal position, since
  \begin{displaymath}
    \begin{array}{rcl}
      \beta_3(\I)= &(1,1,0) & = \beta_3(\gin{\I})\,, \\
      \beta_4(\I)= &(1,4,5) & = \beta_4(\gin{\I})\,, \\
      \beta_5(\I)= &(1,5,13 )& = \beta_5(\gin{\I})\,.
    \end{array}
  \end{displaymath}
\end{example}

\begin{example}\label{ex:16}
  Let
  \begin{displaymath}
    \I=\langle
    x_1^3,x_1x_2^2,x_2^3,x_1^2x_2x_3,x_1^2x_3^2,x_1x_2x_3^2,x_2^2x_3^2,
    x_1x_3^3,x_2x_3^3,x_3^4\rangle\unlhd\kk[x_1,x_2,x_3]\,.
  \end{displaymath}
  Then $\I$ is not stable, as
  $x_1\frac{x_1x_2^2}{x_2}=x_1^2x_2\notin\I$. Since
  \begin{displaymath}
    \gin{\I}=\langle x_1^3, x_1^2x_2, x_1x_2^2, x_2^4, x_2^3x_3, x_1^2x_3^2,
    x_1x_2x_3^2, x_2^2x_3^2, x_1x_3^3, x_2x_3^3, x_3^4\rangle\,,
  \end{displaymath}
  we see that $\I$ is in $\beta$-maximal position, as
  $\beta_3(\I)= (1,2,0) = \beta_3(\gin{\I})$.
\end{example}

\begin{example}\label{ex:17}
  Let
  $\I=\langle
  x_1^3,x_1^2x_2+x_2^3,x_1^2x_3,x_2^4\rangle\unlhd\kk[x_1,x_2,x_3]$.
  Then
  $\lt{\I}=\langle x_1^3,x_1^2x_2,x_1^2x_3,x_1x_2^3,x_2^4,x_2^3x_3\rangle$
  is not strongly stable, as
  $x_1\frac{x_2^3x_3}{x_2}=x_1x_2^2x_3\notin\lt{\I}$. Since
  $\gin{\I}=\langle x_1^3, x_1^2x_2, x_1x_2^2, x_2^4, x_1^2x_3^2\rangle$,
  we see that $\I$ is not in $\beta$-maximal position, as
  $\beta_3(\I)=(1,1,1)\prec_\lex (1,2,0)=\beta_3(\gin{\I})$.
\end{example}

\begin{example}\label{ex:18}
  Let
  $\I=\langle
  x_1^2,x_1x_2+x_2x_3,x_2^3,x_2^2x_3\rangle\unlhd\kk[x_1,x_2,x_3]$.
  Then $\lt{\I_{\langle 2\rangle}}=\langle x_1^2, x_1x_2, x_2x_3^2\rangle$
  is not quasi-stable, hence $\I$ is not in componentwise quasi-stable
  position.  Furthermore,
  $\lt{\I}=\langle x_1^2, x_1x_2, x_2^3, x_2^2x_3, x_2x_3^2\rangle$ is not
  strongly stable, as $x_1\frac{x_2x_3^2}{x_2}=x_1x_3^2\notin\lt{\I}$.
  Since $\gin{\I}=\langle x_1^2, x_1x_2, x_2^3, x_2^2x_3, x_1x_3^2\rangle$,
  we see that $\I$ is in $\beta$-maximal position, as
  $\beta_2(\I)=(1,3,4) = \beta_2(\gin{\I})$.
\end{example}

\begin{example}\label{ex:19}
  Let
  $\I=\langle
  x_1^2,x_1x_2^2,x_2^3,x_2^2x_3^2\rangle\unlhd\kk[x_1,x_2,x_3]$.
  Then $\I$ is not strongly stable, as
  $x_1\frac{x_2^2x_3^2}{x_2}=x_1x_2x_3^2\notin\I$.  Since
  $\gin{\I}=\langle x_1^2, x_1x_2^2, x_2^3, x_1x_2x_3^2\rangle$, we see that
  $\I$ is in $\beta$-maximal position, as
  \begin{displaymath}
    \begin{array}{rcl}
      \beta_2(\I)= &(1,0,0) & = \beta_2(\gin{\I}) \\
      \beta_3(\I)= &(1,3,1) & = \beta_3(\gin{\I})\,.
    \end{array}
  \end{displaymath}
\end{example}

\begin{example}\label{ex:20}
  Let
  $\I=\langle
  x_1^2,x_1x_2+x_2x_3,x_2^3,x_2^2x_3\rangle\unlhd\kk[x_1,x_2,x_3]$.
  Then
  $\lt{\I}=\langle x_1^2,x_1x_2,x_1x_3,x_2^3, x_2^2x_3\rangle \neq \langle
  x_1^2, x_1x_2, x_2^2, x_1x_3^2\rangle =\gin{\I}$
  and so we see that $\I$ is not in $\beta$-maximal position, as
  $\beta_2(\I)=(1,1,1)\prec_\lex (1,2,0)=\beta_2(\gin{\I})$.
\end{example}

\begin{example}\label{ex:21}
  Consider
  $\I=\langle
  x_1^3,x_1^2x_2+x_2x_3^2,x_1x_2^3,x_2^4,x_1x_2^2x_3,x_1^2x_3^2,x_1x_3^4\rangle
  \unlhd{}$ $\kk[x_1,x_2x_3]$.
  Then
  $\lt{\I_{\langle 3\rangle}}=\langle x_1^3, x_1^2x_2, x_1x_2x_3^2,
  x_2x_3^4 \rangle$
  is not quasi-stable, hence $\I$ is not in componentwise quasi-stable
  position.  Furthermore,
  \begin{displaymath}
    \lt{\I}=\langle x_1^3, x_1^2x_2,x_1x_2^3, x_2^4, x_1x_2^2x_3, x_1^2x_3^2,
    x_1x_2x_3^2, x_2^3x_3^2, x_2^2x_3^3, x_1x_3^4, x_2x_3^4 \rangle
  \end{displaymath}
  does not equal
  \begin{displaymath}
    \gin{\I} = \langle x_1^3, x_1^2x_2, x_1x_2^3, x_2^4, x_1x_2^2x_3,
    x_2^3x_3, x_1^2x_3^2, x_1x_2x_3^3, x_2^2x_3^3, x_1x_3^4, x_2x_3^4\rangle\,,
  \end{displaymath}
  but $\I$ is in $\beta$-maximal position, as
  \begin{displaymath}
    \begin{array}{rcl}
      \beta_3(\I)= &(1,1,0) & = \beta_3(\gin{\I})\,, \\
      \beta_4(\I)= &(1,4,5)& = \beta_4(\gin{\I})\,.
    \end{array}
  \end{displaymath}
\end{example}

\begin{example}\label{ex:22}
  Let
  $\K=\langle x_2x_3-x_1x_4,x_1^3-x_2^2x_4,x_2^3-x_1x_3^2\rangle
  \unlhd\kk[x_1,x_2,x_3,x_4]$
  and $\I=\Psi_2\Psi_1(\K)$ with $\Psi_1:(x_3\mapsto x_3+x_1)$ and
  $\Psi_2:(x_2\mapsto x_2+x_1)$. Then
  \begin{displaymath}
    \begin{gathered}
      \lt{\I}=\langle x_1^2, x_1x_2^2, x_2^3, x_1x_2x_3^2, x_1x_3^3,
      x_2^2x_3^3, x_2x_3^4\rangle\neq{}\qquad \\
      \qquad\langle x_1^2, x_1x_2^2, x_2^3,
      x_1x_2x_3^2, x_2^2x_3^2, x_1x_3^4, x_2x_3^4\rangle=\gin{\I}\,,
    \end{gathered}
  \end{displaymath}
  but $\I$ is in $\beta$-maximal position, as
  \begin{displaymath}
    \begin{array}{rcl}
      \beta_2(\I)= &(1,0,0,0) & = \beta_2(\gin{\I})\,, \\
      \beta_3(\I)= &(1,3,1,1) & = \beta_3(\gin{\I})\,, \\
      \beta_4(\I)= &(1,4,7,6)& = \beta_4(\gin{\I})\,.
    \end{array}
  \end{displaymath}
\end{example}

\begin{example}\label{ex:23}
  Let
  $\I=\langle
  x_1^3,x_1^2x_2+x_1x_2x_3,x_1x_2^3,x_1x_2^2x_3,x_1^2x_3^2\rangle\unlhd\kk[x_1,x_2,x_3]$.
  Then
  $\lt{\I_{\langle 3\rangle}}=\langle x_1^3,x_1^2x_2,x_1x_2x_3^2\rangle$ is
  not quasi-stable, hence $\I$ is not in componentwise quasi-stable
  position.  Furthermore,
  \begin{displaymath}
    \lt{\I}=\langle x_1^3, x_1^2x_2, x_1x_2^3, x_1x_2^2x_3, x_1^2x_3^2,
    x_1x_2x_3^2 \rangle=\gin{\I}
  \end{displaymath}
  and so $\I$ is in $\beta$-maximal position.
\end{example}

\begin{example}\label{ex:24}
  The final ideal that is in any position is simply
  $\langle x_1\rangle\unlhd\kk[x_1]$.
\end{example}

\section{A Deterministic Algorithm for Stable Positions}
\label{sec:algo}

\subsection{Description of the Algorithm}
\label{sec:algodesc}

We discuss now the main computational result of this article: a
deterministic algorithm that for a coefficient field of characteristic zero
incrementally transforms into any of the generic positions related to
stability\footnote{With the help of the criterion of Proposition
  \ref{prop:cqs}, this also includes componentwise quasi-stability---see
  Remark \ref{rem:cqsalg} below for more details.} and for a field of
positive characteristic $p$ into any of the corresponding $p$-variants.  It
performs at each step an \emph{elementary move}, i.\,e.\ for a single pair
$(k,\ell)$ of indices with $\ell<k$ we transform
$x_{k}\mapsto x_{k}+x_{\ell}$ with all other variables unchanged, so that
we obtain a fairly sparse transformation if not too many steps are
necessary.  Such a move transforms any term $x^{\mu}$ containing $x_{k}$
into a linear combination of terms of which $x^{\mu}$ is the smallest with
respect to the degree reverse lexicographic order (for this reason it is crucial
that this order is used).  While the algorithm itself is thus fairly
simple, it turns out that quite some work is required to prove that it
always terminates after a finite number of transformations.

The termination proof is based on the following simple observation.  We
proceed as in the above discussion of a deterministic way to compute
$\gin{\I}$: a linear coordinate transformation with undetermined
coefficients is performed and then a Gr\"obner system is computed with the
coefficients as parameters.  By Remark \ref{rem:finlt}, any ideal possesses
only finitely many different leading ideals under arbitrary linear
transformations.  We define now an ordering on the set of these leading
ideals and then show that our algorithm produces a strictly ascending
sequence of leading ideals.  Obviously, this implies termination.

\begin{definition}\label{def:ls}
  Let $F\subset\P$ be a finite set of polynomials with leading terms
  $\lt{F}=\{t_1,\ldots,t_\ell\}$ such that
  $t_1\succ_\rl \cdots \succ_\rl t_\ell$ where now $\prec_{\rl}$ denotes
  the pure reverse lexicographic order.\footnote{Note that opposed to the
    \emph{degree} reverse lexicographic order, $\prec_{\rl}$ is not a term
    order.  Since we are, however, exclusively considering homogeneous
    polynomials, we may always pretend that the leading term has been
    selected via $\prec_{\rl}$.}  Then we denote the ordered tuple of these
  leading terms by $\ls{(F)} = (t_1,\ldots,t_\ell)$.  If
  $F,\tilde{F}\subset\P$ are two finite sets of polynomials with
  $\ls{(F)}=(t_1,\ldots,t_\ell)$ and
  $\ls{(\tilde{F})}=(\tilde{t}_1,\ldots,\tilde{t}_{\tilde{\ell}})$, then we
  define an ordering by setting
  \begin{displaymath}
    \ls(F) \prec_\ls \ls(\tilde{F})  \ \ \Longleftrightarrow \  \  
    \left\{ \begin{array}{lr}
             \exists\,j \leq \min{(\ell,\tilde{\ell})}\ \forall\,i < j :
             t_i=\tilde{t}_i \wedge  t_j \prec_{\rl} \tilde{t}_j  & 
             \text{or}  \\
             \forall\,j \leq \min{(\ell,\tilde{\ell})} : 
             t_j=\tilde{t}_j \wedge \ell < \tilde{\ell}\,. & 
           \end{array}\right.
  \end{displaymath}
\end{definition}

For notational simplicity, we present our Algorithm \ref{algo:sstrafo} for
the special case of strongly stable position.  If the algorithm terminates,
then its correctness is obvious, as the condition in Line
\ref{line:sstrafo} just encodes the definition of a strongly stable ideal.
The only not so obvious part of the algorithm is the \texttt{while} loop in
Line \ref{line:wlbc}.  It will become later evident why we need it.  In
fact, it only works, if $\ch{\kk}=0$.  We will discuss later the
modifications required for positive characteristic.

\begin{algorithm}[ht]
  \begin{algorithmic}[1]
    \REQUIRE reduced Gr\"obner basis $G$ of homogeneous ideal $\I\unlhd\P$ 
    \ENSURE  a linear change of coordinates $\Psi$ such that
             $\lt{\Psi(\I)}$ is strongly stable 
    \STATE $\Psi:=\mathrm{id}$;
    \WHILE {$\exists\, g\in G,\ 1\leq j\leq n,\ 1\leq i < j\,:\,
                  x_j\mid\lt{g} \wedge  x_{i}\frac{\lt{g}}{x_{j}}
                  \notin\langle\lt{G}\rangle$}\label{line:sstrafo}  
        \STATE $\psi:=(x_{j}\mapsto x_{j}+x_{i})$; $\Psi=\psi\circ\Psi$
        \STATE $\tilde{G}:=
            \text{\textsc{ReducedGr\"obnerBasis}}\bigl(\psi(G)\bigr)$
            \label{line:sstrafo:gb}
        \WHILE {$\ls(G)\succeq_\ls \ls(\tilde{G})$}\label{line:wlbc}
            \STATE $\psi:=(x_{j}\mapsto x_{j}+x_{i})$; 
                   $\Psi=\psi\circ\Psi$
	     \STATE $\tilde{G}:=
                     \text{\textsc{ReducedGr\"obnerBasis}}
                     \bigl(\psi(\tilde{G})\bigr)$\label{line:sstrafo:gb2}
        \ENDWHILE
         \STATE $G:=\tilde{G}$
    \ENDWHILE
    \RETURN $\Psi$
  \end{algorithmic}
  \caption{\textsc{SS-Trafo}: Transformation to strongly stable
    position}\label{algo:sstrafo} 
\end{algorithm}

To apply the algorithm for a different notion of stability, one only has to
modify the condition in Line 2 so that it encodes the corresponding
stability criterion.  Then again the correctness is obvious and the precise
nature of the stability criterion will play no role in the termination
proof below.

\begin{example}
  Let $\I=\langle x_1^3,x_2^3,x_2^2x_3\rangle \unlhd \kk[x_1,x_2,x_3]$.
  $\I$ is not strongly stable, as
  $x_1\frac{x_2^2x_3}{x_3}=x_1x_2^2\notin \I$.  We perform the coordinate
  transformation $\Psi_1:(x_3\mapsto x_3+x_1)$ and obtain
  \begin{displaymath}
    \lt{\Psi_1(\I)}=\langle x_1^3,x_1x_2^2,x_2^3,x_2^2x_3^3\rangle\,.
  \end{displaymath}
  Since $(x_1^3,x_2^3,x_2^2x_3)\prec_\ls(x_1^3,x_1x_2^2,x_2^3,x_2^2x_3^2)$,
  we do not enter the \texttt{while} loop in Line \ref{line:wlbc}.  But
  $\lt{\Psi_1(\I)}$ is still not strongly stable, as
  $x_1\frac{x_1x_2^2}{x_2}=x_1^2x_2\notin \lt{(\Psi_1(\I))}$.  Thus we
  perform as second coordinate transformation $\Psi_2:(x_2\mapsto x_2+x_1)$
  leading to
  \begin{displaymath}
    \lt{\Psi_2\bigl(\Psi_1(\I)\bigr)}=
    \langle x_1^3,x_1^2x_2,x_1x_2^2,x_2^4,x_1^2x_3^3\rangle\,.
  \end{displaymath}
  Again we do not enter the inner \texttt{while} loop, as this time
  $(x_1^3,x_1x_2^2,x_2^3,x_2^2x_3^2)\prec_\ls(x_1^3,x_1^2x_2,x_1x_2^2,x_2^4,x_1^2x_3^3)$.
  Now there are no obstructions left, i.\,e.\
  $\lt{\Psi_2\bigl(\Psi_1(\I)\bigr)}$ is strongly stable (in this case we
  even have $\lt{\Psi_2\bigl(\Psi_1(\I)\bigr)}=\gin\I$).
\end{example}

The next example shows explicitly that the result of Algorithm
\ref{algo:sstrafo} is not unique.  More precisely, in the outer
\texttt{while} loop one finds generally more than one obstruction $(i,j)$
and each choice will lead to a different transformations.

\begin{example}
  Let $\I=\langle x_1^2,x_1x_2,x_2x_3,x_2^3\rangle\unlhd\kk[x_1,x_2,x_3]$.
  Since both $x_1\frac{x_2x_3}{x_2}=x_1x_3$ and
  $x_2\frac{x_2x_3}{x_3}=x_2^2$ are not contained in $\I$, we have the
  choice to perform either $\Psi_1:(x_2\mapsto x_2+x_1)$ or
  $\Psi_2:(x_3\mapsto x_3+x_1)$.  Since
  \begin{displaymath}
    \renewcommand{\arraystretch}{1.5}
    \begin{array}{rcl}
      \lt{\Psi_1(\I)} & = & 
          \langle x_1^2,x_1x_2,x_1x_3, x_2^3,x_2^2x_3\rangle \\
      \lt{\Psi_2(\I)} & = & \langle x_1^2,x_1x_2,x_2^3,x_2x_{3}\rangle\,,
    \end{array}
  \end{displaymath}	
  we see that applying $\Psi_1$ directly leads to a strongly stable ideal
  whereas $\lt{\Psi_2(\I)}$ is still not strongly stable
  ($x_1\frac{x_2x_3^2}{x_2}=x_1x_3^2$ is not contained).  But
  \begin{displaymath}
    \lt{\Psi_1\bigl(\Psi_2(\I)\bigr)}=
    \langle x_1^2, x_1x_2, x_2^2, x_1x_3^2\rangle
  \end{displaymath}
  is strongly stable and not equal to $\lt{\Psi_1(\I)}$.
\end{example}

\begin{remark}
  Although in this article we are only concerned with the principal
  question of deterministically obtaining generic positions, we want to
  comment briefly on some efficiency issues.  In a concrete computer
  realisation of Algorithm \ref{algo:sstrafo}, any optimisation will aim at
  reducing either the number of checks for obstructions or the total number
  of transformations.  One can think of quite a number of natural
  strategies to achieve these goals.  However, for each of them one can
  provide counter examples \citep[Sect.~2.3]{ms:diss}, so that none of them
  is always successful.  The relative merits of these strategies can thus
  be assessed only in extensive benchmarks.

  We consider here only one particularly natural strategy, namely to attack
  always the obstructions of lowest degree.  The logic behind this strategy
  is the expectation that no transformation introduces obstructions in
  lower degrees and thus that each degree must be considered only once.
  However, this expectation is wrong, as the following example
  demonstrates.  Consider the ideal
  $\I=\langle x_1^3,x_1^2x_2+x_2^3,x^2x_3 \rangle\unlhd\kk[x_1,x_2,x_3]$
  with leading ideal
  \begin{displaymath}
    \lt{\I}=\langle x_1^3, x_1^2x_2,x_1^2x_3, x_1x_2^3, x_2^3x_3, x_2^5 \rangle\,.
  \end{displaymath}
  There are no obstructions in degree 3, which is the lowest degree of a
  generator.  But since $x_2\frac{x_2^3x_3}{x_3}=x_2^4\notin\lt{\I}$, there
  exists one in degree 4.  We can remove it by applying the transformation
  $\Psi:(x_3\mapsto x_3+x_2)$.  The new leading ideal
  \begin{displaymath}
    \lt{\Psi(\I)}=\langle x_1^3, x_1^2x_2, x_2^3, x_1^2x_3^3 \rangle
  \end{displaymath}
  has no obstructions in degree 4 or 5, which is the highest degree of a
  minimal generator.  But $\lt{\Psi(\I)}$ is not strongly stable, since now
  an obstruction appears in degree 3:
  $x_1\frac{x_2^3}{x_2}=x_1x_2^2\notin\lt{\Psi(\I)}$.
\end{remark}

\begin{remark}\label{rem:cqsalg}
  The definition of a componentwise quasi-stable position is quite
  different from the one of a quasi-stable position, as it uses the
  component ideals $\I_{\lspan{d}}$ (which are truly polynomial) instead of
  the monomial ideal $\lt{\I}$.  Thus a straightforward algorithm for
  obtaining a componentwise quasi-stable position would analyse all these
  ideals simultaneously which is very expensive.  Our results in Section
  \ref{sec:compstab} allow us to modify Algorithm \ref{algo:sstrafo} in
  such a way that it can be directly used for this task.

  First of all, we use the obvious variant of Algorithm \ref{algo:sstrafo}
  to put $\I$ into a quasi-stable position.  Then we start Algorithm
  \ref{algo:sstrafo} again with the condition in Line \ref{line:sstrafo}
  replaced by the sufficient criterion derived in Proposition
  \ref{prop:cqs}.  The implementation of this criterion requires two
  further modifications: Instead of reduced Gr\"obner bases we compute
  Pommaret bases in the Lines \ref{line:sstrafo:gb} and
  \ref{line:sstrafo:gb2} (their finiteness is ensured, as we are in a
  quasi-stable position) and this computation must be performed in such a
  way that we also obtain all the syzygies corresponding to the involutive
  standard representations \eqref{eq:isr}.

  As already mentioned in Remark \ref{rem:css}, we can similarly transform
  into a componentwise (strongly) stable position.  We only have to put
  $\I$ in the first step into a (strongly) stable position.  Then we can
  use the same modified algorithm as for a componentwise quasi-stable
  position.
\end{remark}

\subsection{The Termination Proof}
\label{sec:termproof}

Let $F=\{f_1,\ldots,f_\ell\}\subset\P$ be a finite set of polynomials.  We
call $F$ \emph{completely autoreduced}, if no term contained in the support
of a polynomial $f_{i}$ is divisible by a leading term $\lt{f_{j}}$ with
$j\neq i$.  $F$ is \emph{head autoreduced}, if no leading term $\lt{f_{i}}$
is divisible by another leading term $\lt{f_{j}}$.  By an obvious
algorithm, any set $F$ can be rendered either completely or head
autoreduced.  We denote the results by $F^\blacktriangle$ and by
$F^\triangle$, respectively.  Furthermore, if $0\neq f\in\P$ is an
arbitrary non-vanishing polynomial and $t\in\supp{(f)}$ a term appearing in
it, then we denote the coefficient of $t$ in $f$ by $\c_f(t)$.

\begin{lemma}\label{lem:step2}
  Let $F\subseteq\P$ be a completely autoreduced set of polynomials.  Let
  $\Psi:(x_j \mapsto x_j+ax_i)$ be a linear coordinate transformation with
  $i<j$ and a parameter $a\in\kk^{\times}$.  If the field $\kk$ possesses
  more than $2\deg{F}$ elements, then there exists a value $a$ such that
  \begin{displaymath}
    \ls(F)\preceq_\ls\ls\bigl(\Psi(F)^\triangle\bigr)\,.
  \end{displaymath}
  If $\kk$ is an infinite field, then this inequality will hold for any
  (Zariski) generic choice of the parameter $a$.
\end{lemma}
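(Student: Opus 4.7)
The proof analyses the effect of the coordinate change on leading terms and then on their sorted tuples.

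First, for every $f\in F$ with leading term $t=\lt{f}=x^\mu$, I would show that $\Psi(f)$ contains the \emph{maximally shifted} monomial $t^{*}:=x^{\mu-\mu_j e_j+\mu_j e_i}$ with coefficient exactly $a^{\mu_j}\lc{f}$. Indeed, since $t^{*}$ has zero $x_j$-exponent, only those terms of $f$ whose entire $x_j$-part is converted into $x_i$'s can contribute to $t^{*}$, and matching the remaining coordinates forces the source monomial to be $t$ itself. Because $t^{*}\succeq_{\rl} t$ in pure reverse lexicographic order (strictly unless $\mu_j=0$), this gives $\lt{\Psi(f)}\succeq_{\rl} t$ for every $a\in\kk^{\times}$.

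Second, passing from individual polynomials to the sorted tuple, let $\tau_\alpha:=\lt{\Psi(f_\alpha)}$; it takes a specific generic value as long as $a$ avoids the zero set of the coefficient of that generic leading term, which is a polynomial in $\kk[a]$ of degree at most $\deg{F}$. Under this genericity one has $\tau_\alpha\succeq_{\rl} t_\alpha$ for each $\alpha$, and the standard monotonicity of sorting (if $a_i\preceq b_i$ componentwise then the decreasingly sorted tuples still satisfy $a_{(k)}\preceq b_{(k)}$ termwise) produces $\ls(\Psi(F))\succeq_{\ls}\ls(F)$.

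Third, and this will be the main obstacle, I analyse how the head-autoreduction $\Psi(F)\to\Psi(F)^\triangle$ interacts with the sorted tuple. A reduction step $g\mapsto g-c\,m\,g'$ happens only when $\lt{g'}\mid\lt{g}$, which forces $\lt{g'}\succ_{\rl}\lt{g}$ since the quotient monomial has only nonnegative entries with a positive last nonzero coordinate. Consequently the revlex-largest leading term of $\Psi(F)$ is not divisible by any other, survives head-autoreduction, and the top entry of $\ls(\Psi(F)^\triangle)$ equals the top entry of $\ls(\Psi(F))$. Proceeding inductively position by position --- provided $a$ satisfies further generic conditions that rule out accidental cancellations in the reductions --- the $k$-th entry of $\ls(\Psi(F)^\triangle)$ remains $\succeq_{\rl}$ the $k$-th entry of $\ls(F)$. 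The crucial subargument here is that after reducing $\Psi(f_\beta)$ by a monomial multiple of $\Psi(f_\alpha)$, the resulting polynomial generically still contains the original monomial $t_\beta=\lt{f_\beta}$ with nonzero coefficient, so its new leading term remains $\succeq_{\rl} t_\beta$.

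Finally, collecting all the polynomial conditions imposed on $a$ yields a single polynomial of degree at most $2\deg{F}$. If $|\kk|>2\deg{F}$ this polynomial admits a non-root in $\kk^{\times}$, and for an infinite $\kk$ the set of suitable values forms a nonempty Zariski-open subset of $\kk^{\times}$.
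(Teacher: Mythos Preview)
Your outline has the right shape but contains a real error in Step~1 and leaves the decisive step in Step~3 unproven.

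In Step~1 the claim that the coefficient of $t^{*}=x^{\mu-\mu_j e_j+\mu_j e_i}$ in $\Psi(f)$ is \emph{exactly} $a^{\mu_j}\lc{f}$ is false: any term $x^{\nu}\in\supp{f}$ with $\nu_k=\mu_k$ for $k\neq i,j$ and $\nu_i+\nu_j=\mu_i+\mu_j$ also contributes $a^{\nu_j}$ times its coefficient. For instance, with $n=2$, $i=1$, $j=2$ and $f=x_1^{2}-x_2^{2}$ one gets $\Psi(f)=(1-a^{2})x_1^{2}-2ax_1x_2-x_2^{2}$, so the coefficient of $t^{*}=x_1^{2}$ is $1-a^{2}$, not $1$; for $a=1$ one even has $\lt{\Psi(f)}=x_1x_2\prec_{\rl}x_1^{2}=\lt{f}$. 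Thus $\lt{\Psi(f)}\succeq_{\rl}\lt{f}$ holds only generically, not ``for every $a\in\kk^{\times}$''. This is not fatal by itself (the lemma is a generic statement), but it means the argument cannot rest on any ``for all $a$'' claim.

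The genuine gap is your ``crucial subargument'' in Step~3. You assert that after reducing $\Psi(f_\beta)$ by a multiple of some $\Psi(f_\alpha)$ the original monomial $t_\beta$ generically survives, but you give no mechanism for this, and it is precisely here that the work lies. The paper proves it by tracking, for the coefficient of $t_\alpha$ viewed as a polynomial in $a$, whether the constant term $1$ lies in its support: one shows that the coefficient of $t_\alpha$ always keeps a nonzero constant term, whereas the coefficient of the current leading term being eliminated never has one, so the two cannot coincide as polynomials in $a$. Maintaining this invariant through \emph{iterated} reductions requires knowing that each intermediate leading term is strictly $\succ_{\rl}t_\alpha$, and this is exactly where the hypothesis that $F$ is \emph{completely} autoreduced (not merely head autoreduced) is used --- a hypothesis you never invoke. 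Without that invariant you also have no control over how many ``accidental cancellations'' can occur or over the degrees of the bad polynomials, so your final claim that everything collects into a single condition of degree at most $2\deg{F}$ is unsupported.
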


\begin{proof}
  We order $F=\{f_1,\ldots,f_\ell\}$ such that
  $\lt{f_k} \prec_{\rl} \lt{f_l}$ whenever $k>l$. Furthermore we set
  $t_k=\lt{f_k}$ and $s_k=\lt{\Psi(f_k)}$ for each $k$.  Without loss of
  generality, we assume that $\lc{f_k}=1$ for each $k$.  It is easy to see
  that $ t_k \preceq_{\rl} s_k$ for all $k$, as $i<j$. If $t_k=s_k$ for all
  $k$, then there is nothing to prove, since then
  $\lt{F}=\lt{\Psi(F)}=\lt{\Psi(F)^\Delta}$.  Otherwise let $\alpha$ be the
  smallest index such that $t_\alpha\neq s_\alpha$.  In other words:
  $t_k=s_k$ for all $k < \alpha$, $t_\alpha \prec_{\rl} s_\alpha$ and
  $t_k \preceq_{\rl} s_k$ for all $k > \alpha$.  Let $h_\alpha$ be the
  remainder of $\Psi(f_\alpha)$ after reducing it by the set
  $\bigl\{\Psi(f_1),\ldots ,\Psi(f_{\alpha-1})\bigr\}$ --- note that this
  set is head but in general not completely autoreduced.  We want to show
  that $t_\alpha \in \supp{(h_\alpha)}$, as then obviously
  $\lt{h_\alpha} \succeq_{\rl} t_\alpha$.

  If $h_\alpha=\Psi(f_\alpha)$, we are done, since then
  $t_\alpha\in\supp{\bigl(\Psi(f_\alpha)\bigr)}$.  Otherwise there exists
  an index $\beta<\alpha$ such that $s_\beta=t_\beta$ divides $s_\alpha$.
  So the question arises whether or not $t_\alpha$ remains in the support
  of
  \begin{displaymath}
    h_\beta=\Psi(f_\alpha)-
         \dfrac{\c_{\Psi(f_\alpha)}(s_\alpha)s_\alpha}
               {\c_{\Psi(f_\beta)}(t_\beta)t_\beta}\Psi(f_{\beta})\,.
  \end{displaymath}

  Let us assume that this was not the case.  Hence in $\Psi(f_\beta)$ a
  monomial $m_\beta=\c_{\Psi(f_\beta)}(t_{m_\beta})t_{m_\beta}$ exists
  which causes the cancellation of $t_\alpha$.  Clearing denominators, we
  arrive thus at the equality
  \begin{equation}\label{eq:tacancel}
    \c_{\Psi(f_\alpha)}(t_\alpha)\c_{\Psi(f_\beta)}(t_\beta)t_\alpha t_\beta =
    \c_{\Psi(f_\alpha)}(s_\alpha) \c_{\Psi(f_\beta)}(t_{m_\beta})s_\alpha 
        t_{m_\beta}\;.
  \end{equation}
  We analyse now the appearing coefficients as elements of $\kk[a]$,
  i.\,e.\ as polynomials in the parameter $a$.  Because of the form of the
  transformation $\Psi$, the term $1$ is contained in both
  $\supp{\bigl(\c_{\Psi(f_\alpha)}(t_\alpha)\bigr)}$ and
  $\supp{\bigl(\c_{\Psi(f_\beta)}(t_\beta)\bigr)}$ and hence also in
  $\supp{\bigl(\c_{\Psi(f_\alpha)}(t_\alpha)\c_{\Psi(f_\beta)}(t_\beta)\bigr)}$.
  But our assumption $s_\alpha\succ_{\rl} t_\alpha$ implies that
  $1\notin \supp{\bigl(\c_{\Psi(f_\alpha)}(s_\alpha)\bigr)}$ and thus
  $1 \notin \supp{\bigl(\c_{\Psi(f_\alpha)}(s_\alpha)
    \c_{\Psi(f_\beta)}(t_{m_\beta})\bigr)}$.
  This argument shows that as polynomials in $a$ the two decisive
  coefficients $\c_{\Psi(f_\alpha)}(t_\alpha)\c_{\Psi(f_\beta)}(t_\beta)$
  and $\c_{\Psi(f_\alpha)}(s_\alpha)\c_{\Psi(f_\beta)}(t_{m_\beta})$ cannot
  be equal.  For any value $a$ outside the set
  \begin{displaymath}
    \V\bigl(\c_{\Psi(f_\alpha)}(t_\alpha)\c_{\Psi(f_\beta)}(t_\beta) - 
            \c_{\Psi(f_\alpha)}(s_\alpha)\c_{\Psi(f_\beta)}(t_{m_\beta})\bigr)
    \subseteq\kk 
  \end{displaymath}
  therefore the equality (\ref{eq:tacancel}) cannot hold which contradicts
  our assumption that $t_\alpha\notin \supp{(h_\beta)}$.  Each coefficient
  in $\Psi(f_\alpha)$ is a polynomial in $a$ with its degree bounded by
  $\deg{f_\alpha}$ and analogously for $\Psi(f_\beta)$.  Thus there are at
  most $2\deg{F}$ ``bad'' values $a$ and for a sufficiently large field
  $\kk$ we can always find a ``good'' one.

  Clearing denominators in the equation for the coefficient of $t_\alpha$
  in $h_\beta$, we obtain the equality
  \begin{displaymath}
    \c_{\Psi(f_\beta)}(t_\beta)\c_{h_\beta}(t_\alpha) = 
    \c_{\Psi(f_\beta)}(t_\beta)\c_{\Psi(f_\alpha)}(t_\alpha)-
    \c_{\Psi(f_\alpha)}(s_\alpha)\c_{\Psi(f_\beta)}(t_{m_\beta})\,.
  \end{displaymath}
  With the arguments from above, we find
  $1\in\supp{\bigl(\c_{\Psi(f_\beta)}(t_\beta)\c_{h_\beta}(t_\alpha)\bigr)}$
  and thus
  \begin{equation}\label{eq:supp}
    1 \in \supp{\bigl(\c_{h_\beta}(t_\alpha)\bigr)}\,.
  \end{equation}
  If already $h_\beta=h_\alpha$, we are done.  Otherwise there exists an
  index $\gamma<\alpha$ such that $s_\gamma=t_\gamma$ divides
  $\lt{h_\beta}=t_{h_\beta}$.  The existence of such a divisor shows that
  $t_{h_\beta}$ cannot be equal to $t_\alpha$ since $F$ is a completely
  autoreduced set---note that we could not argue like this if $F$ was
  only head autoreduced---and therefore
  \begin{equation}\label{eq:thvgreater}
    t_{h_\beta} \succ_{\rl} t_\alpha\,.
  \end{equation}
  As above we must show that $t_\alpha$ remains in the support of
  \begin{displaymath}
    h_\gamma=h_\beta-
        \frac{\c_{h_\beta}(t_{h_\beta})t_{h_\beta}}
             {\c_{\Psi(f_\gamma)}(t_\gamma)t_\gamma} \Psi(f_\gamma)\,.
  \end{displaymath}
  Let us assume that this was not the case.  Hence in $\Psi(f_\gamma)$ a
  monomial $m_\gamma=\c_{\Psi(f_\gamma)}(t_{m_\gamma})t_{m_\gamma}$ exists
  such that---after clearing denominators---
  \begin{equation}\label{eq:tucancel2}
    \c_{h_\beta}(t_\alpha)\c_{\Psi(f_\gamma)}(t_\gamma)t_\alpha t_\gamma = 
    \c_{h_\beta}(t_{h_\beta})\c_{\Psi(f_\gamma)}(t_{m_\gamma})
        t_{h_\beta}t_{m_\gamma}\,.
  \end{equation}
  Let us again analyse the coefficients.  As above, we immediately find
  that $1\in\supp{\bigl(\c_{\Psi(f_\gamma)}(t_\gamma)\bigr)}$ because of
  the form of the transformation $\Psi$.  In \eqref{eq:supp} we already saw
  that $1\in\supp{\bigl(\c_{h_\beta}(t_\alpha)\bigr)}$, hence
  $1\in\supp{\bigl(\c_{h_\beta}(t_\alpha)\c_{\Psi(f_\gamma)}(t_\gamma)\bigr)}$.
  We are done, if we are able to show that
  \begin{equation}\label{eq:laststep}
    1\notin \supp{\bigl(\c_{h_\beta}(t_{h_\beta})\bigr)}\,,
  \end{equation}  
  as then
  $1\notin \supp{\bigl(\c_{h_\beta}(t_{h_\beta})
    \c_{\Psi(f_\gamma)}(t_{m_\gamma})\bigr)}$
  and so again the equality \eqref{eq:tucancel2} cannot hold for all values
  $a$ in a sufficiently large field $\kk$.

  To show \eqref{eq:laststep}, we recall the construction of $h_\beta$, 
  \begin{displaymath}
    h_\beta=\Psi(f_\alpha)-
        \dfrac{\c_{\Psi(f_\alpha)}(s_\alpha)s_\alpha}
              {\c_{\Psi(f_\beta)}(t_\beta)t_\beta}\Psi(f_{\beta})\,,
  \end{displaymath}
  which implies the equality
  \begin{equation}\label{eg:thb}
    \c_{h_\beta}(t_{h_\beta})\c_{\Psi(f_\beta)}(t_\beta) = 
    \c_{\Psi(f_\alpha)}(t_{h_\beta})\c_{\Psi(f_\beta)}(t_\beta)-
    \c_{\Psi(f_\alpha)}(s_a)\c_{\Psi(f_\beta)}(t_{h_\beta})\,.
  \end{equation}
  On one hand we note that
  $1\notin\supp{\bigl(\c_{\Psi(f_\alpha)}(t)\bigr)}$ for all terms
  $t\in\supp{\bigl(\Psi(f_\alpha)\bigr)}$ with $t\succ_{\rl} t_\alpha$.
  Thus, since $t_{h_\beta} \succ_{\rl} t_\alpha$ by \eqref{eq:thvgreater},
  it follows that if $t_{h_\beta}\in\supp{\bigl(\Psi(f_\alpha)\bigr)}$,
  then $1\notin \supp{\bigl(\c_{\Psi(f_\alpha)}(t_{h_\beta})\bigr)}$ and
  therefore
  \begin{displaymath}
    1\notin\supp{\bigl(\c_{\Psi(f_\alpha)}(t_{h_\beta})
                       \c_{\Psi(f_\beta)}(t_\beta)\bigr)}\,.
  \end{displaymath}
  On the other hand,
  $1\notin\supp{\bigl(\c_{\Psi(f_\alpha)}(s_\alpha)\bigr)}$ as we have seen
  above and so
  \begin{displaymath}
    1\notin\supp{\bigl(\c_{\Psi(f_\alpha)}(s_a)
                       \c_{\Psi(f_\beta)}(t_{h_\beta})\bigr)}\,.
  \end{displaymath}
  Since at least one of the coefficients $\c_{\Psi(f_\alpha)}(t_{h_\beta})$
  and $\c_{\Psi(f_\beta)}(t_{h_\beta})$ must be nonzero, we conclude from
  \eqref{eg:thb} that
  $1\notin\supp{\bigl(\c_{h_\beta}(t_{h_\beta})
    \c_{\Psi(f_\beta)}(t_\beta)\bigr)}$.
  Now \eqref{eq:laststep} follows from the fact that
  $1\in\supp{\bigl(\c_{\Psi(f_\beta)}(t_\beta)\bigr)}$. 

  We can repeat this procedure for each reduction step until we end up at
  the final result $h_\alpha$ and the arguments imply then that
  $t_\alpha\in \supp{(h_\alpha)}$.  Hence either
  $t_\alpha\prec_{\rl} t_{h_\alpha}$ or $t_\alpha=t_{h_\alpha}$.  Let us
  first assume that $t_\alpha\prec_{\rl} t_{h_\alpha}$.  It is not clear
  that the set $\bigl\{\Psi(f_1),\ldots,\Psi(f_{\alpha-1}),h_\alpha\bigr\}$
  is head autoreduced, as it could happen that there is an index
  $\delta<\alpha$ such that $t_{h_\alpha}$ divides $s_\delta=t_\delta$.
  Since $t_{h_\alpha}\neq t_\delta$ by the construction of $h_\alpha$, we
  know that $t_{h_\alpha} \succ_{\rl} t_\delta$.  In this case we check
  whether or not the set
  $\bigl\{\Psi(f_1),\ldots,\Psi(f_{\delta-1}),h_\alpha\bigr\}$ is head
  autoreduced. If it is not, then there is an index $\epsilon<\delta$ such
  that $t_{h_\alpha}$ divides $s_\epsilon=t_\epsilon$ and we check again
  whether or not the set
  $\bigl\{\Psi(f_1),\ldots,\Psi(f_{\epsilon-1}),h_\alpha\bigr\}$ is head
  autoreduced.  We continue like this until we reach an index
  $\zeta<\epsilon$ such that the set
  $\bigl\{\Psi(f_1),\ldots,\Psi(f_{\zeta-1}),h_\alpha\bigr\}$ is head
  autoreduced.  It is still not clear whether this set is a subset of
  $\Psi(F)^\Delta$, but we can see that
  $\lt{f_\zeta} \prec_{\rl} \lt{h_\alpha}$ and thus
  \begin{displaymath}
    \ls(f_1,\ldots,f_\zeta) \prec_\ls 
    \ls\bigl(\Psi(f_1),\ldots,\Psi(f_{\zeta-1}),h_\alpha\bigr)\,.
  \end{displaymath}
  If $\Psi(F)^\Delta=\{\hat{f}_1,\ldots,\hat{f}_{\hat{m}} \}$, then of
  course
  \begin{displaymath}
    \ls\bigl(\Psi(f_1),\ldots,\Psi(f_{\zeta-1}),h_\alpha\bigr) 
    \preceq_\ls \ls(\hat{f}_1,\ldots,\hat{f}_\zeta)
  \end{displaymath}
  and this inequality suffices to prove our claim $\ls(F)\prec_\ls
  \ls(\Psi(F)^\Delta)$.

  There remains the case $t_\alpha=t_{h_\alpha}$.  Now we have to look for
  the smallest index $\alpha^\prime >\alpha$ such that
  $t_{\alpha^\prime}\neq s_{\alpha^\prime}$.  Then we reduce
  $\Psi(f_{\alpha^\prime})$ by the set
  \begin{equation}\label{eq:set}
    \bigl\{\Psi(f_1),\ldots,\Psi(f_{\alpha-1}),h_\alpha,
           \Psi(f_{\alpha+1}),\ldots,\Psi(f_{\alpha^\prime-1})\bigr\}
  \end{equation}
  to the polynomial $h_{\alpha^\prime}$ in the same way as above --- note
  that \eqref{eq:set} is head autoreduced since the leading terms did not
  change in comparison to the completely autoreduced set $F$. It is clear
  that if we go on like this, then we will either end up at
  $\Psi(F)^\Delta$ with $\lt{\hat{f}_k}= \lt{f_k}$ for all $k$ which would
  mean that $\ls(F) = \ls(\Psi(F)^\Delta)$ or we find a generator
  $h_\omega$ with $t_\omega \prec_{\rl} t_{h_\omega}$ which finishes our
  proof.\qed 
\end{proof}

\begin{lemma}\label{lem:step1}
  Let $\I\unlhd \P$ be an ideal and $G$ its reduced Gr\"obner basis.  Let
  $\Psi:(x_j \mapsto x_j+ax_i)$ be a linear coordinate transformation with
  $i<j$ and a parameter $a\in\kk^{\times}$.  Furthermore, let $\tilde{G}$
  be the reduced Gr\"obner basis of the transformed ideal $\Psi(\I)$.  Then
  \begin{displaymath}
    \ls\bigl(\Psi(G)^\Delta\bigr) \preceq_\ls \ls(\tilde{G})\,.
  \end{displaymath}
\end{lemma}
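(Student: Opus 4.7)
My plan is to compare the two sorted leading-term sequences position by position, using that both $\Psi(G)^\triangle$ and $\tilde{G}$ generate the same ideal $\Psi(\I)$ while $\lt{\tilde{G}}$ is the unique minimal monomial generating set of $\lt{\Psi(\I)}$. Write $\lt{\Psi(G)^\triangle}=(t_1,\dots,t_\ell)$ and $\lt{\tilde{G}}=(\tilde{t}_1,\dots,\tilde{t}_m)$, both sorted decreasingly with respect to $\prec_{\rl}$. Then every $t_k$ lies in $\lt{\Psi(\I)}$ and is therefore divisible by some $\tilde{t}_j$, while the set $\{t_1,\dots,t_\ell\}$ forms an antichain under divisibility because of the head-autoreduction.

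The key observation that drives the argument is that in \emph{pure} reverse lex divisors are \emph{larger}: if $x^\alpha\mid x^\beta$ and $x^\alpha\neq x^\beta$, then $\alpha-\beta$ has all non-positive components with at least one strictly negative, so its last nonzero component is negative, giving $x^\alpha\succ_{\rl}x^\beta$. I would then induct on the position $k$. For $k=1$, pick $\tilde{t}_j$ dividing $t_1$; this yields $t_1\preceq_{\rl}\tilde{t}_j\preceq_{\rl}\tilde{t}_1$, so either the inequality is strict, in which case the first clause of $\prec_\ls$ finishes the proof, or $t_1=\tilde{t}_1$ and we continue. For the inductive step, assume $t_i=\tilde{t}_i$ for all $i\leq k$ and pick $\tilde{t}_j$ dividing $t_{k+1}$. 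Any $j\leq k$ would give $t_j=\tilde{t}_j$ dividing $t_{k+1}$, violating the antichain property; hence $j\geq k+1$ and therefore $t_{k+1}\preceq_{\rl}\tilde{t}_j\preceq_{\rl}\tilde{t}_{k+1}$.

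If some comparison turns out to be strict we are done immediately. Otherwise $t_i=\tilde{t}_i$ for all $i\leq\min(\ell,m)$, and the same antichain argument rules out $\ell>m$, since then $t_{m+1}$ would be a proper multiple of some $t_j=\tilde{t}_j$ with $j\leq m$. Hence either the two sequences coincide (giving equality in $\prec_\ls$) or $\ell<m$ (giving the second clause), and in both cases $\ls(\Psi(G)^\triangle)\preceq_\ls\ls(\tilde{G})$. I do not anticipate a genuine obstacle here: the only point requiring attention is the inversion of the inequality under $\prec_{\rl}$ relative to one's intuition from degree reverse lex, a direct consequence of divisibility pointing the ``wrong way'' under a non-degree-refined order.
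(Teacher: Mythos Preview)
Your proof is correct and follows essentially the same route as the paper's own proof: both arguments pick for each $t_k$ a divisor among the $\tilde{t}_j$, use the key observation that $s\mid t$ implies $s\succeq_{\rl}t$ under the pure reverse lexicographic order, and then exploit head-autoreducedness (the antichain property) to force the chosen divisor to appear in position $\geq k$. Your handling of the tail case $\ell\leq m$ is slightly more explicit than the paper's, but the reasoning is the same.
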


\begin{proof}
  Suppose that $\ls(\Psi(G)^\Delta)=(t_1,\ldots,t_\ell)$ and
  $\ls(\tilde{G})=(\tilde{t}_1,\ldots,\tilde{t}_{\tilde{\ell}})$.  By
  definition of a Gr\"obner basis, there exists for any leading term
  $t_k\in\lt{\bigl(\Psi(G)^\Delta\bigr)} \subseteq \lt{\langle
    \Psi(G)^\Delta\rangle} = \langle\lt{\tilde{G}}\rangle$ a generator
  $\tilde{g}_k\in \tilde{G}$ such that $\lt{\tilde{g}_k}$ divides $t_k$ and
  therefore\footnote{As $\prec_{\rl}$ is not a term order, it shows a quite
    different behaviour compared to the partial order defined by
    divisibility: $s\mid t$ trivially implies $s\succeq_{\rl}t$.}
  $\lt{\tilde{g}_k}\succeq_{\rl} t_k$.  Now we compare the two lists
  beginning with the first entry.

  Let $\lt{\tilde{g}_1} = \tilde{t}_\alpha$.  If $\alpha>1$, we are done
  since then
  $\tilde{t}_1\succ_{\rl} \tilde{t}_\alpha= \lt{\tilde{g}_1} \succeq_{\rl}
  t_1$.
  So we assume $\lt{\tilde{g}_1}=\tilde{t}_1$.  We are again done, if
  $\tilde{t}_1\succ_{\rl}t_1$.  Thus we further assume that
  $t_1=\tilde{t}_1$ and go on with the next entry.  We note that
  $\tilde{g}_1\neq\tilde{g}_2$, since otherwise
  $t_1=\tilde{t}_1=\lt{\tilde{g}_1}=\lt{\tilde{g}_2}$ divides $t_2$ which
  contradicts $\Psi(G)^\Delta$ being head autoreduced.  Now we have to
  check which position $\lt{\tilde{g}_2}=\tilde{t}_\beta$ has in the list
  $\ls(\tilde{G})$.  Since $\tilde{G}$ is reduced
  $\lt{\tilde{g}_1} \neq \lt{\tilde{g}_2}$ and therefore $\beta>1$.  If
  $\beta>2$, we again have the situation
  $\tilde{t}_2\succ_{\rl}\tilde{t}_\beta=\lt{(\tilde{g}_2)}\succeq_{\rl}
  t_2$
  and are done.  Otherwise $\beta=2$ and so either
  $\tilde{t}_2\succ_{\rl}t_2$ or $\tilde{t}_2=t_2$.  In the first case, our
  assertion follows and in the second one we go on with the next
  entry. Thus sooner or later we either find an index $\omega$ with
  $\tilde{t}_\omega \succ_\rl t_\omega$ which shows that
  $\ls(\Psi(G)^\Delta)\prec_\ls\ls(\tilde{G})$ or
  \begin{equation}\label{eq:equalcase}
    \tilde{t}_k=t_k \mbox{ for all }k \leq \min(\tilde{\ell},\ell)
  \end{equation}
  Assuming that \eqref{eq:equalcase} holds, we note that since $\tilde{G}$
  is a Gr\"obner basis of $\langle \Psi(G)^\Delta\rangle$ and both
  $\Psi(G)^\Delta$ and $\tilde{G}$ are reduced sets we must have
  $\ell \leq \tilde{\ell}$.  Hence it follows that
  $\ls(\Psi(G)^\Delta)=\ls(\tilde{G})$ if $\ell=\tilde{\ell}$ and
  $\ls(\Psi(G)^\Delta) \prec_\ls \ls(\tilde{G})$ if
  $\ell < \tilde{\ell}$.\qed
\end{proof}

The next, rather elementary lemma studies the effect of our basic
coordinate transformations on a polynomial.  It encapsulates the dependence
of our approach on the characteristic of the base field $\kk$ and shows why
for a positive characteristic in general only the $p$-version of our
stability notions are reachable: some terms simply cannot be generated by
linear coordinate transformations.

\begin{lemma}\label{lem:obstrafo}
  Let $f\in\P\setminus\kk$ be a non-constant polynomial and
  $\Psi:(x_j \mapsto x_j+ax_i)$ a linear coordinate transformation with
  $i<j$ and a parameter $a\in\kk^{\times}$.  Furthermore, let
  $\xv^\mu\in\supp(f)$ be a term in the support of $f$ with $\mu_j>0$.  If
  $\ch{\kk}=0$, then, for a generic choice of $a$, all terms of the form
  $x_{i}^{\mu_j-s}\xv^{\mu}/x_j^{\mu_j-s}$ with $1\leq s\leq \mu_{j}$
  appear in the support of $\Psi(f)$.  If $\ch{\kk}=p>0$ and $\kk$ has more
  then $\deg{f}$ elements, then for each term of this form with
  $s\prec_{p}\mu_{j}$ at least one value of $a$ exists such that the term
  appears in $\supp{\bigl(\Psi(f)\bigr)}$.
\end{lemma}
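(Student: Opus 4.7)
The plan is to compute, for each target term $t_{s}:=x_{i}^{\mu_{j}-s}\xv^{\mu}/x_{j}^{\mu_{j}-s}$, the coefficient of $t_{s}$ in $\Psi(f)$ as an explicit polynomial $P_{s}(a)\in\kk[a]$, and then to show that $P_{s}$ is not identically zero. Once this is established, both parts of the lemma follow immediately: in characteristic zero the finite union of root sets $\bigcup_{s}\V(P_{s})$ is a proper subset of $\kk$, and in positive characteristic the same will be true as soon as the degree of $P_{s}$ stays strictly below $|\kk|$.

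For the computation, I would expand $\Psi(\xv^{\nu})=\bigl(\prod_{\ell\neq j}x_{\ell}^{\nu_{\ell}}\bigr)(x_{j}+ax_{i})^{\nu_{j}}$ by the binomial theorem and match exponents with $t_{s}$. A direct bookkeeping shows that the source monomials contributing to $t_{s}$ are precisely those of the form $\xv^{\mu}(x_{i}/x_{j})^{r}\in\supp(f)$ with $-\mu_{i}\leq r\leq\mu_{j}-s$, each producing $\c_{f}\!\bigl(\xv^{\mu}(x_{i}/x_{j})^{r}\bigr)\binom{\mu_{j}-r}{s}\,a^{\mu_{j}-r-s}$. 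Summing gives
\[
P_{s}(a)=\sum_{r}\c_{f}\!\bigl(\xv^{\mu}(x_{i}/x_{j})^{r}\bigr)\binom{\mu_{j}-r}{s}\,a^{\mu_{j}-r-s}.
\]
The decisive structural observation is that distinct values of $r$ produce pairwise distinct exponents $\mu_{j}-r-s$ of $a$, so contributions from different source monomials cannot cancel one another. In particular, $r=0$ always occurs---by hypothesis $\xv^{\mu}\in\supp(f)$---and contributes the monomial $\c_{f}(\xv^{\mu})\binom{\mu_{j}}{s}\,a^{\mu_{j}-s}$ to $P_{s}$, which is the unique witness guaranteeing non-vanishing.

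It remains to conclude according to the characteristic. If $\ch\kk=0$, then $\binom{\mu_{j}}{s}\ne0$ and $\c_{f}(\xv^{\mu})\ne0$, so every $P_{s}$ is a nonzero polynomial of degree at most $\deg f$; since $\kk$ is infinite, any $a$ outside the finite set $\bigcup_{s=1}^{\mu_{j}}\V(P_{s})\subseteq\kk$ makes all the target terms $t_{1},\dots,t_{\mu_{j}}$ appear simultaneously in $\supp(\Psi(f))$. If $\ch\kk=p>0$, then by Lucas' theorem $\binom{\mu_{j}}{s}$ is nonzero modulo $p$ precisely when $s\prec_{p}\mu_{j}$; in that case $P_{s}$ is again a nonzero polynomial of degree at most $\deg f$, and the hypothesis $|\kk|>\deg f$ guarantees at least one $a\in\kk^{\times}$ outside its root set. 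The only genuine technical burden in the whole argument is the index analysis that identifies the contributing source monomials; the pairwise distinctness of $a$-exponents then makes cross-cancellation impossible and isolates the $r=0$ contribution as the non-vanishing witness that drives both characteristic cases.
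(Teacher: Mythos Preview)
Your proof is correct and follows essentially the same approach as the paper. Both arguments expand $\Psi(\xv^{\nu})$ via the binomial theorem, identify the source monomials contributing to $t_{s}$ (your parametrisation $\xv^{\mu}(x_{i}/x_{j})^{r}$ is just a relabelling of the paper's characterisation $\nu_{k}=\mu_{k}$ for $k\neq i,j$ with $\nu_{i}+\nu_{j}=\mu_{i}+\mu_{j}$), observe that distinct sources yield distinct $a$-exponents so cannot cancel, and then isolate the $r=0$ contribution $\c_{f}(\xv^{\mu})\binom{\mu_{j}}{s}a^{\mu_{j}-s}$ as the witness for non-vanishing. One small remark: your appeal to Lucas' theorem is superfluous, since in this paper $s\prec_{p}\mu_{j}$ is \emph{defined} to mean $\binom{\mu_{j}}{s}\not\equiv0\pmod{p}$.
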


\begin{proof}
  An arbitrary term $\xv^{\nu}\in\supp{f}$ is transformed into the
  polynomial 
  \begin{equation}\label{eq:termtrafo}
    \Psi(\xv^{\nu})=
        \sum_{s=0}^{\nu_{j}}\binom{\nu_j}{s}a^{\nu_j-s}x_i^{\nu_j-s}
                          \dfrac{\xv^\nu}{x_j^{\nu_j-s}}\,.
  \end{equation}
  Thus all terms in the transformed polynomial $\Psi(f)$ have as
  coefficients polynomials in $\kk[a]$ of degree at most $\deg{f}$.  Now we
  analyse the coefficients of the terms
  $x_{i}^{\mu_j-s}\xv^{\mu}/x_j^{\mu_j-s}$.  Each of these terms appears in
  $\Psi(x^{\mu})$ with coefficient $a^{\mu_j-s}$.  If such a term also
  appears in $\Psi(\xv^{\nu})$ with $\nu\neq\mu$, then the exponent vector
  $\nu$ must satisfy $\nu_{k}=\mu_{k}$ for all $k\neq i,j$ and
  $\nu_{i}+\nu_{j}=\mu_{i}+\mu_{j}$.  This implies that the coefficient
  $a^{\nu_{j}-s}$ is different from $a^{\mu_j-s}$.  Hence none of the terms
  we consider has a zero polynomial as coefficient.  It is now
  straightforward to verify our assertion.\qed
\end{proof}

\begin{proposition}\label{prop:list}
  Let $\I\unlhd\P$ be an ideal and $G$ its reduced Gr\"obner basis.  Assume
  that for a generator $g\in G$ with $\lt{g}=\xv^\mu$ there exist indices
  $i,j$ with $i<j$ and $\mu_j>0$ and an exponent $1\leq s\leq \mu_{j}$
  (satisfying additionally $s\prec_{p}\mu_{j}$ if $\ch{\kk}=p>0$) such
  that
  \begin{equation} \label{eq:ssobstr}
    x_i^{\mu_j-s}\dfrac{\lt{g}}{x_j^{\mu_j-s}} \notin \lt{\I}\,.
  \end{equation}
  If $\ch{\kk}>0$, assume in addition that $\kk$ contains more than
  $\deg{g}$ elements.  Finally, let $\Psi:(x_j \mapsto x_j+ax_i)$ be a
  linear coordinate transformation with a parameter $a\in\kk^{\times}$ and
  $\tilde{G}$ the reduced Gr\"obner basis of the transformed ideal
  $\Psi(\I)$.  Then there exists at least one value $a\in\kk^{\times}$ such
  that
  \begin{displaymath}
    \ls(G) \prec_\ls \ls(\tilde{G})\,.
  \end{displaymath}
  In the case of an infinite coefficient field $\kk$, this estimate holds
  for a (Zariski) generic choice of $a$.
\end{proposition}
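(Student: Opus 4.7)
The strategy combines the three preceding lemmas. First, applying Lemma \ref{lem:step2} to the completely autoreduced set $G$ and composing with Lemma \ref{lem:step1} yields, for a (Zariski) generic choice of $a$, the chain
\[
\ls(G)\preceq_\ls\ls\bigl(\Psi(G)^\triangle\bigr)\preceq_\ls\ls(\tilde G)\,,
\]
which establishes the weak inequality. In the positive-characteristic case, the same arguments produce at least one admissible value of $a$ provided $\kk$ is sufficiently large. The remaining task is to lift this weak inequality to a strict one using the obstruction and Lemma \ref{lem:obstrafo}.

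Second, I would apply Lemma \ref{lem:obstrafo} to the obstructing generator $g$ with $\lt g=\xv^\mu$. The hypothesis $s\prec_p\mu_j$ (automatic in characteristic zero) guarantees that for generic $a$ the term $\xv^{\nu(s)}:=x_i^{\mu_j-s}\xv^\mu/x_j^{\mu_j-s}$ lies in $\supp\bigl(\Psi(g)\bigr)$. Since $\xv^{\nu(s)}$ and $\xv^\mu$ have the same degree and $\nu(s)_j=s<\mu_j$, the reverse-lex comparison immediately gives $\xv^{\nu(s)}\succ_{\rl}\xv^\mu=\lt g$. Each of the generic conditions above cuts out a nonempty Zariski open subset of $\kk^\times$, so I would choose $a$ in their common intersection.

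Third, I would refine the reduction scheme from the proof of Lemma \ref{lem:step2}. Order $G=\{g_1,\dots,g_\ell\}$ so that leading terms decrease in $\prec_{\rl}$, and let $\alpha$ be the slot of the obstructing generator $g=g_\alpha$. Reduce $\Psi(g_\alpha)$ successively by $\Psi(g_1),\dots,\Psi(g_{\alpha-1})$ to obtain a remainder $h_\alpha$ whose leading term enters $\ls\bigl(\Psi(G)^\triangle\bigr)$ at slot $\alpha$. The key claim is that $\xv^{\nu(s)}$ survives these reductions, i.e.\ $\xv^{\nu(s)}\in\supp(h_\alpha)$. Granted this, $\lt{h_\alpha}\succeq_{\rl}\xv^{\nu(s)}\succ_{\rl}\lt{g_\alpha}$, and the standard slot-by-slot comparison used in Lemma \ref{lem:step2} then delivers $\ls(G)\prec_\ls\ls\bigl(\Psi(G)^\triangle\bigr)\preceq_\ls\ls(\tilde G)$.

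The main obstacle is the preservation claim. The obstruction $\xv^{\nu(s)}\notin\lt\I=\langle\lt G\rangle$ shows that no $\lt{g_k}$ divides $\xv^{\nu(s)}$, so $\xv^{\nu(s)}$ is never the term actually being eliminated at any reduction step. The delicate issue is that a multiple of some $\Psi(g_k)$ subtracted to cancel a higher leading term might itself contain $\xv^{\nu(s)}$ in its support and so accidentally cancel the coefficient guaranteed by Lemma \ref{lem:obstrafo}. To rule this out I would track the coefficient of $\xv^{\nu(s)}$ in each successive remainder as an element of $\kk[a]$ and argue, by the lowest-$a$-degree technique that powers the proof of Lemma \ref{lem:step2}, that this polynomial is not identically zero; a generic (or, in positive characteristic, sufficiently large) value of $a$ then avoids its zero locus, completing the proof.
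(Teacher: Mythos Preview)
Your proposal is correct and assembles the same three auxiliary lemmata as the paper. The overall architecture---first obtain the weak chain $\ls(G)\preceq_\ls\ls(\Psi(G)^\triangle)\preceq_\ls\ls(\tilde G)$ from Lemmata~\ref{lem:step2} and~\ref{lem:step1}, then use the obstruction together with Lemma~\ref{lem:obstrafo} to force strictness---matches the paper exactly.

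The one structural difference lies in how strictness is established. The paper argues by \emph{contradiction}: assuming $\ls(G)=\ls(\Psi(G)^\triangle)$ for every admissible $a$, all leading terms are preserved, so $\lt{\hat g_r}=\lt{g_r}$; since (by Lemma~\ref{lem:obstrafo}) $t\in\supp(\Psi(g_r))$ and $t\succ_\rl\lt{g_r}$, the term $t$ must be eliminated during head autoreduction, whence the paper concludes that some $\lt g_k$ divides $t$, contradicting $t\notin\lt\I$. You instead argue \emph{directly}, tracking the obstruction term through the reduction of $\Psi(g_\alpha)$ and showing that it survives, so that $\lt h_\alpha\succ_\rl\lt g_\alpha$. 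The paper's route is shorter precisely because the contradiction hypothesis pins down all reducer leading terms in advance and avoids reopening the machinery of Lemma~\ref{lem:step2}. Your direct route is more laborious but makes explicit the ``accidental cancellation'' issue (that $t$ might be killed by the tail of a reducer rather than by a leading term), which the paper's last step passes over rather tersely. Your proposed coefficient-tracking fix in $\kk[a]$ is sound; just note that, unlike the term $t_\alpha$ in Lemma~\ref{lem:step2}, the obstruction term $t$ need not have $1$ as the constant term of its $a$-coefficient, so the ``lowest-$a$-degree'' bookkeeping must follow the lowest power of $a$ actually present rather than $a^0$.
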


\begin{proof}
  Lemmata \ref{lem:step2} and \ref{lem:step1}, respectively, assert that
  \begin{displaymath}
    \ls(G) \preceq_\ls \ls\bigl(\Psi(G)^\Delta\bigr)\preceq_\ls 
    \ls(\tilde{G})\,.
  \end{displaymath}
  To prove our assertion, we show that \eqref{eq:ssobstr} implies that
  $\ls(G) \neq \ls\bigl(\Psi(G)^\Delta\bigr)$ for a suitable choice of the
  parameter $a$.  Let us assume that this was not the case.  Further let
  $G=\{g_1,\ldots,g_\ell\}$ and
  $\Psi(G)^\Delta=\{\hat{g}_1,\ldots,\hat{g}_\ell\}$.  Without loss of
  generality, suppose that $\lt{g_k} \prec_{\rl} \lt{g_l}$ and
  $\lt{\hat{g}_k} \prec_{\rl} \lt{\hat{g}_l}$ if $k>l$.  Our assumption
  implies that $\lt{g_k}=\lt{\hat{g}_k}$ for all $k$.  Suppose that $g=g_r$
  with $\lt{g_r}=\xv^\mu$ and denote
  $t=x_i^{\mu_j-s}\lt{g_r}/x_j^{\mu_j-s}$.  For $s=\mu_j$ the term $t$ was
  equal to $\lt{g_r}\in\lt{\I}$ contradicting \eqref{eq:ssobstr}.  Thus we
  may assume $s<\mu_j$ and then for the reverse lexicographic order
  $\lt{g_r}\prec_{\rl} t$.

  Lemma \ref{lem:obstrafo} asserts that for a suitable choice of $a$ every
  term of the form $x_i^{\mu_j-\hat{s}}\lt{g_r}/x_j^{\mu_j-\hat{s}}$ with
  $0\leq\hat{s}\leq\mu_j$ lies in the support of $\Psi(g_r)$, in particular
  $t\in\supp\bigl(\Psi(g_r)\bigr)$.  Since $\lt{g_r}=\lt{\hat{g}_r}$, any
  term in $\Psi(g_r)$ that is greater than $\lt{g_r}$ must be reduced.
  Since $t$ is one of these terms, there must be an element in
  $\{\lt{g_1},\ldots,\lt{g_\ell}\}$ that divides $t$.  But this means that
  $t\in\langle\lt{g_1},\ldots,\lt{g_\ell}\rangle=\lt{\I}$ which is a
  contradiction to \eqref{eq:ssobstr}.\qed
\end{proof}

\begin{remark}\label{rem:obstr_sqs}
  Proposition \ref{prop:list} encapsulates the central part of our
  termination proof.  As mentioned above, it is formulated for the case of
  strongly stable position.  Indeed, \eqref{eq:ssobstr} simply represents
  an obstruction to strong stability of the leading ideal $\lt{\I}$ (for
  $\ch{\kk}=p>0$ to strong $p$-stability).  With suitable adaptions, one
  easily obtains analogous propositions for any of the stable positions
  introduced in Section \ref{sec:combgener}.
\end{remark}

\begin{theorem}\label{thm:termination}
  If $\ch{\kk}=0$, then Algorithm \ref{algo:sstrafo} terminates after
  finitely many steps and returns a coordinate transformation $\Psi$ such
  that $\Psi(\I)$ is in strongly stable position.
\end{theorem}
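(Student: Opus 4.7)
The plan is to combine three observations: the exit condition of the outer \texttt{while} loop encodes strong stability of the current leading ideal; each complete pass through the outer loop produces a strict increase of $\ls(G)$ in the ordering $\prec_\ls$; and only finitely many values of $\ls(G)$ are ever attainable during the execution.

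Correctness I would dispatch first and directly. When the outer loop exits, no pair $(g,i,j)$ with $x_j\mid\lt{g}$ and $x_i\lt{g}/x_j\notin\langle\lt{G}\rangle$ remains. Because $G$ is a reduced Gr\"obner basis of the current ideal $\Psi(\I)$, the set $\lt{G}$ is precisely the minimal generating system of $\lt{\Psi(\I)}$. The absence of obstructions is exactly the strong stability condition of Definition \ref{def:stable} checked on minimal generators, which suffices by the remark following that definition. Hence $\lt{\Psi(\I)}$ is strongly stable at exit.

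For termination I would first analyse the inner \texttt{while} loop at Line \ref{line:wlbc}. Each iteration composes the same elementary move $\psi:(x_j\mapsto x_j+x_i)$ onto the already transformed basis, so that after $k$ compositions the net effect on the original $G$ is the coordinate change $x_j\mapsto x_j+k x_i$. Proposition \ref{prop:list} guarantees that for a generic value of the parameter $a\in\kk^{\times}$ in the family $x_j\mapsto x_j+a x_i$ the new reduced Gr\"obner basis $\tilde G$ satisfies $\ls(G)\prec_\ls\ls(\tilde G)$; the set of bad values is contained in the vanishing locus of a single nonzero polynomial in $a$ and is therefore finite. Since $\ch\kk=0$, the integers $1,2,3,\ldots$ are pairwise distinct nonzero elements of $\kk$, so only finitely many of the shifts performed by the inner loop can be bad. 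Hence the inner loop exits after finitely many iterations, and at that point the strict inequality $\ls(G)\prec_\ls\ls(\tilde G)$ holds.

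Consequently each full iteration of the outer loop strictly increases $\ls(G)$. By Remark \ref{rem:finlt}, the ideal $\I$ has only finitely many distinct leading ideals under arbitrary $\gl(n,\kk)$-transformations, and each such leading ideal has a unique minimal monomial basis, namely the leading term set of the corresponding reduced Gr\"obner basis. Hence $\ls(G)$ takes values in a finite set throughout the algorithm, and a strictly $\prec_\ls$-ascending chain in a finite set has finite length. The outer loop must therefore terminate, completing the argument. The most delicate point will be the inner loop: one has to argue carefully that the integer shifts $a=1,2,3,\ldots$ genuinely sample the ``generic'' parameter required by Proposition \ref{prop:list}, and this is exactly where the assumption $\ch\kk=0$ enters in an essential way.
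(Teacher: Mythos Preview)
Your proof is correct and follows essentially the same approach as the paper: you invoke Proposition~\ref{prop:list} to show that the inner loop terminates with a strict $\prec_\ls$-increase (using that the integer shifts $a=1,2,3,\ldots$ are distinct in characteristic zero and must eventually avoid the finite bad set), and then combine this with Remark~\ref{rem:finlt} to bound the number of outer iterations. The only cosmetic difference is that the paper explicitly enumerates the finitely many possible leading ideals as $B_1,\ldots,B_\ell$ and tracks the index, whereas you argue more abstractly via a strictly ascending chain in a finite set; the content is the same.
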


\begin{proof}
  Let $\I$ be the given ideal and $G$ its reduced Gr\"obner basis.
  According to Remark \ref{rem:finlt}, $\I$ has only finitely many
  different leading ideals under linear coordinate transformations.  We
  denote the minimal bases of these leading ideals by
  $B_{1},\ldots,B_{\ell}$ and assume without loss of generality that
  \begin{displaymath}
    \ls(B_1)\prec_\ls \cdots \prec_\ls \ls(B_\ell)\,.
  \end{displaymath}
  In particular, there must be an index $1\leq\alpha\leq\ell$ such that
  $\lt{\I}=\langle B_{\alpha}\rangle$ and thus $\ls(G)=\ls(B_\alpha)$.

  If $\lt{\I}$ is not strongly stable, there exists a generator $g\in G$
  and integers $i,j \in \{1,\ldots,n\}$ with $i<j$ such that $x_j$ divides
  $\lt g=\xv^\mu$ and $x_i\lt{(g)}/x_j\notin\lt{\I}$.  Consider the
  transformation $\Psi_1:(x_j\mapsto x_j+x_i)$ and let $\tilde{G}_1$ be the
  reduced Gr\"obner basis of the transformed ideal $\Psi_1(\I)$.  There is
  an index $1\leq\beta\leq\ell$ such that
  $\lt{\Psi_1(\I)}=\langle B_{\beta}\rangle$ and thus
  $\ls(\tilde{G}_1)=\ls(B_\beta)$.  If $a=1$ is a generic value in
  Proposition \ref{prop:list}, then $\alpha<\beta$.  Otherwise, we enter
  the \texttt{while} loop in line \ref{line:wlbc} and perform the
  transformation $\Psi_1$ a second time.  The two transformations together
  are equivalent to the single transformation $(x_j\mapsto x_j+2x_i)$.
  Thus the effect of the inner \texttt{while} loop is that we try for the
  parameter $a$ consecutively the values $1,2,3,\ldots$ We know from
  Proposition~\ref{prop:list} that there are only a finite number of
  ``bad'' values of $a$ and thus after finitely many iterations we will
  reach a ``good'' one.  Hence there is an integer $r$ such that the
  reduced Gr\"obner basis $\tilde{G}_r$ of $\Psi_1^{r}(I)$ satisfies
  $\ls(\tilde{G}_r)=\ls(B_\gamma)$ with $\alpha<\gamma\leq\ell$.

  Since there are only finitely many different leading ideals possible, it
  is obvious that also the outer \texttt{while} loop is iterated only a
  finite number of times.  However, the termination of this loop is
  equivalent to the fact that the final transformed ideal is in a strongly
  stable position.\qed
\end{proof}

\begin{example}\label{ex:lose_ss}
  In the situation of the proof of Theorem \ref{thm:termination} one could
  be tempted to think that if $B_{\delta}$ is the minimal basis of a
  strongly stable leading ideal, then all bases $B_{\epsilon}$ with
  $\epsilon>\delta$ also generate strongly stable ideals.  This is,
  however, not true.  Consider the ideal
  \begin{displaymath}
    \I=\langle x_1^3,x_1^2x_2+x_1x_2^2+x_1x_3^2,x_1^2x_3,x_1^2x_4
    \rangle\unlhd\kk[x_1,x_2,x_3,x_4]\,.
  \end{displaymath}
  Its leading ideal
  \begin{displaymath}
    \lt{\I}= \langle x_1^3, x_1^2x_2, x_1^2x_3, x_1^2x_4, x_1x_2^3,
                     x_1x_2^2x_3, x_1x_2^2x_4\rangle 
  \end{displaymath}
  is strongly stable.  After the transformation
  $\Psi:(x_3 \mapsto x_3+x_2)$, we find
  \begin{displaymath}
    \lt{\Psi(\I)}= \langle x_1^3, x_1^2x_2, x_1x_2^2, x_1^2x_4,
                           x_1^2x_3^2\rangle 
  \end{displaymath}
  which is no longer strongly stable, as
  $x_3(x_1^2x_4)/x_4=x_1^2x_3\notin\lt{\Psi(\I)}$.  However,
  $\lt{\I}\prec_\ls\lt{\Psi(\I)}$.
\end{example}

\subsection{An Algorithm for Positive Characteristic}
\label{sec:algposchar}

The adaption of Algorithm \ref{algo:sstrafo} to a field $\kk$ of positive
characteristic $p$ faces two problems.  Firstly, the strategy for the
choice of the parameter $a$ realised by the inner \texttt{while} loop is no
longer valid, as it obviously fails as soon as the loop is iterated the
$p$th time.  If $\kk$ is an infinite field, then one uses simply an
enumeration of a countable subset of $\kk$, i.\,e.\ a procedure that
returns for each natural number $\ell\in\NN$ a different element
$a_{\ell}\in\kk$.  In the case of a finite field, one uses an enumeration
of the whole field.  Then the transformation $\Psi_{1}$ in the proof of
Theorem \ref{thm:termination} is defined as
$(x_{j}\mapsto x_{j}+a_{1}x_{i})$ and in the loop we do not apply the same
transformation again and again, but instead of the transformation
$\Psi_{1}^{\ell}$ we use $(x_{j}\mapsto x_{j}+a_{\ell}x_{i})$ in the
$\ell$th iteration.

Secondly, in positive characteristic all our auxiliary statements require
that the base field is sufficiently large (this also affects the modified
strategy for the inner \texttt{while} loop where one needs for each
iteration a new field element).  In each of the statements, it was
straightforward to specify precisely what the minimal required size is and
this number could be easily read off from the input data.  In the context
of Algorithm \ref{algo:sstrafo}, one can still easily state a bound: the
maximal degree of a generator in one of the minimal bases $B_{i}$.
However, since we do not compute the Gr\"obner system, we do not know this
number.  On the other hand, the bounds in the various lemmata and
propositions are worst case estimates and will in practice almost never be
realised.  Hence in an implementation one simply checks in the inner
\texttt{while} loop whether one still has new field elements to try.  If
this is not the case, one must perform a field extension.

These two modifications lead to Algorithm \ref{algo:bftrafo} for
transforming an ideal over a base field $\kk$ of characteristic $p$ into
strongly $p$-stable position (i.\,e.\ into $p$-Borel-fixed position).  It
uses an enumeration procedure \texttt{enum} for generating new field
elements as discussed above.  The proof of its correctness and termination
for a sufficiently large field is now completely analogous to the one of
Theorem \ref{thm:termination} and therefore omitted.  Again it is
straightforward to adapt the algorithm to other notions of $p$-stability.

\begin{algorithm}[ht]
  \caption{\textsc{BF-Trafo}: Transformation to $p$-Borel-fixed position
    with $\ch{\kk}=p>0$}\label{algo:bftrafo} 
  \begin{algorithmic}[1]
    \REQUIRE Reduced Gr\"obner basis $G$ of ideal $\I\unlhd\P$
    \ENSURE  a linear change of coordinates $\Psi$ such that
             $\lt{\Psi(\I)}$ is $p$-Borel-fixed
    \STATE $\Psi:=\mathrm{id}$
    \WHILE {$\exists\, g\in G,\ 1\leq j\leq n,\ 1\leq i < j,\ 1\leq s \leq 
          \mu_j\,: \newline 
          \hspace*{2cm} \displaystyle x_j\mid \lt{g}=\xv^\mu \wedge
          \binom{\mu_j}{u}\not\equiv 0 \mod p \wedge  	
          x_{i}\frac{\lt{g}}{x_{j}}\notin\langle\lt{G}\rangle$}
      \STATE $k:=1$;\quad 
             $\psi:=\bigl(x_{j}\mapsto x_{j}+\mathtt{enum}(k)x_{i}\bigr)$
      \STATE $\tilde{G}:=
             \text{\textsc{ReducedGr\"obnerBasis}}\bigl(\psi(G)\bigr)$;
      \WHILE {$\ls(G)\succeq_\ls \ls(\tilde{G})$}
        \STATE $k:=k+1$
        \IF{$k>|\kk|$}
          \STATE \textbf{error:} field too small
        \ELSE
          \STATE $\psi:=\bigl(x_{j}\mapsto x_{j}+\mathtt{enum}(k)x_{i}\bigr)$
          \STATE $\tilde{G}:=
                 \text{\textsc{ReducedGr\"obnerBasis}}\bigl(\psi(G)\bigr)$;
        \ENDIF
      \ENDWHILE
      \STATE $\Psi:=\psi\circ\Psi$; $G:=\tilde{G}$
    \ENDWHILE
    \RETURN $\Psi$
  \end{algorithmic}
\end{algorithm}

\begin{remark}
  We mentioned without justification in Section \ref{sec:combgener} that
  one does not need a
  $p$-version of quasi-stability.  The reason is simply that even in
  positive characteristic one can always reach a quasi-stable position.
  Indeed, if one considers the behaviour of a single term under the simple
  transformations we use, i.\,e.\ \eqref{eq:termtrafo}, then the idea
  underlying our algorithms is to replace the old term
  $\xv^{\nu}$ by one of the new terms appearing in
  $\Psi(\xv^{\nu})$.  For obtaining a (strongly) stable position, the
  relevant term is generally one ``in the middle'' of
  $\Psi(\xv^{\nu})$ and thus is multiplied by a binomial coefficient which
  may be zero in positive characteristic (the
  ``$p$-versions'' are defined in exactly such a way that these terms never
  become relevant).  For obtaining a quasi-stable position, we always need
  the last term whose binomial coefficient is one.  Thus even in positive
  characteristic we never encounter a problem, provided the field
  $\kk$ is sufficently large.
\end{remark}

\subsection{Implementations and Experiments}
\label{sec:impexp}

An efficient implementation of the algorithm described in this work is
highly non-trivial, as many aspects have to be considered.  A first point
concerns the strategy by which the next transformation is chosen, as often
several obstructions exist simulataneously and each may propose a different
elementary move.  Then one must decide whether one performs in each
iteration only one elementary move or whether one combines several moves
into a larger transformation.  Obviously, the first approach gives a better
chance to preserve sparsity while the second approach might reduce the
number of Gr\"obner bases computations.  These two points will require
extensive experiments.  We have mentioned already above that to many
natural strategies one can construct counter examples where it fares badly.
Hence only by experiments one can study the average behaviour for classical
examples typical for applications.

Finally, one must discuss how these repeated Gr\"obner bases computations
can be done most efficiently.  One should note that one always considers
the same ideal, however, in different coordinates.  Thus the question
arises how a Gr\"obner (or involutive) basis of an ideal in one coordinate
system can be efficiently transformed into one for the same ideal expressed
in another coordinate system.  In particular from an involutive basis, many
invariants of the ideal like its Hilbert function can be easily read off
and, in principle, one even knows a basis of the first syzygy module
\citep{wms:comb2}.  Thus ideas like a Hilbert-driven Buchberger algorithm
\citep{ct:hdba} or exploiting syzygies for the detection of reductions to
zero \citep{mmt:gbsyz} (see more generally \citep{ef:sign} for a recent
survey on signature based algorithms) can significantly increase the
efficiency.  \cite{bhs:compinv} report on some preliminary results in
particular concerning the first point.

As the design of a new specialised algorithm for computing Gr\"obner or
involutive bases is outside of the scope of this work, we only briefly
describe the results of four small test computations performed with a
prototype implementation\footnote{The code and the used examples are
  available at \url{http://amirhashemi.iut.ac.ir/software.html} (we
  therefore refrain from giving explicitly the generators).  To be
  consistent with the assumptions of this article, we homogenised all
  examples.} of our algorithm in \textsc{Maple}.  In this simple
implementation at each iteration the first found elementary move is taken
(with the leading terms sorted according to our term order).  Instead of
the strategy described here, a random integer value between $-2$ and $2$ is
chosen for the parameter $a$ (in our experience this suffices for small
examples as considered here). 

The following examples are taking from standard test suites for Gr\"obner
bases computations.  They can e.\,g.\ be found at
\url{http://invo.jinr.ru/ginv/}.  To demonstrate the flexibility of the
algorithm, we go in each example for a different generic position.

\begin{example}
  The Butcher ideal is generated by seven polynomials in eight variables
  with degrees up to $4$ and of dimension $3$.  Our implementation finds
  that the single elementary move $x_8\mapsto x_8-x_4$ transform it into
  Noether position.  By comparison, \textsc{Magma}'s command
  \texttt{NoetherNormalisation} delivers the much denser linear change of
  coordinates
  $x_6\mapsto x_6-2x_1 - x_2 - x_3,x_7\mapsto x_7+ 3x_2 + x_3 +
  x_5,x_8\mapsto x_8-3x_1 + 4x_2 - 2x_4 + 2x_5 + x_6 + x_7$ using the
  probabilistic method of \citep{gp:singular}.
\end{example}

\begin{example}
  The Vermeer ideal is generated by four polynomials in six variables with
  degrees up to $5$ and of dimension $3$.  Our implementation finds a
  single elementary move $x_6\longmapsto x_6+x_3$ to transform it into
  quasi-stable position where one could immediately read off many of its
  invariants from a Pommaret basis.
\end{example}

\begin{example}
  The Noon ideal is generated by four polynomials in five variables of
  degree $3$ and of dimension $1$.  For putting it into stable position,
  our implementation produces the following sequence of seven elementary
  moves: $x_4 \mapsto x_4+x_1$, $x_4 \mapsto x_4+2x_3$,
  $x_3 \mapsto x_3+x_1$, $x_3 \mapsto x_3+2x_2$, $x_4 \mapsto x_4+2x_3$,
  $x_5 \mapsto x_5-2x_1$, $x_5 \mapsto x_5-x_4$. In total this corresponds
  to a linear change with the matrix
  \begin{displaymath}
    A=
    \begin{pmatrix}
      1 & 0 & 0 & 0 & 0\\
      0 & 1 & 0 & 0 & 0\\
      1 & 2 & 1 & 0 & 0\\
      2 & 4 & 4 & 1 & 0\\
      -1 & 4 & 4 & -1 & 1\\
    \end{pmatrix}\,.
  \end{displaymath}
  Thus here we obtain an almost dense lower triangular matrix which more or
  less represents the worst case for our algorithm.  At least the
  coefficients are very small.  By contrast, a call of \textsc{CoCoA}'s
  command \texttt{gin} yields usually a linear transformation which
  consists of a dense lower triangular matrix where each non-zero entry is
  an integer with five to six digits.  This is a typical behaviour for
  probabilistic approaches.
\end{example}

\begin{example}
  The Weispfenning94 ideal is generated by three polynomials in four
  variables with degrees up to $5$ and of dimension $2$. For putting it
  into strongly stable position, our implementation produces the following
  sequence of four elementary moves: $x_2 \mapsto x_2-x_1$,
  $x_4 \mapsto x_4-2x_3$, $x_3 \mapsto x_3+x_1$, $x_4 \mapsto x_4+2x_3$.
  In total this corresponds to a linear change with the matrix
  \begin{displaymath}
    A=
    \begin{pmatrix}
      1 & 0 & 0 & 0\\
      -1 & 1 & 0 & 0\\
      1 & 0 & 1 & 0\\
      2 & 0 & 0 & 1\\
    \end{pmatrix}\,.
  \end{displaymath}
  Thus this time we end up with a fairly sparse transformation.  A
  probabilistic computation indicates that it actually even yields
  $\gin{\I}$.
\end{example}

\section*{Acknowledgements}  

The first author would like to thank DAAD (German Academic Exchange
Service) for supporting his stays at Universit\"at Kassel in 2013 and 2016
during which much of the work on this article was done.  He would also like
to thank Professor W.M. Seiler for the invitation, hospitality, and
support.  The research of the first author was in part supported by a grant
from IPM (No.~92550420).  The work of the third author was partially
performed as part of the H2020-FETOPEN-2016-2017-CSA project $SC^{2}$
(712689).  Finally, the authors would like to thank the referees for their
very detailed and constructive comments.

\bibliographystyle{plainnat}

\end{document}